\documentclass[
submission
]{dmtcs-episciences-tampered}


\usepackage[utf8]{inputenc}
\usepackage{subfigure}

%

\usepackage[square,numbers]{natbib}

\author{Clemens Grabmayer\affiliationmark{1}
                                            }

\title{{Linear Depth Increase of Lambda Terms
        along Leftmost-Outermost Beta-Reduction}}   

\affiliation{
  Gran Sasso Science Institute, Viale F. Crispi, 7, 67100 L'Aquila AQ, Italy}
\received{2019-09-30}

\usepackage{amsmath,amsthm,amssymb}
\usepackage[english]{babel}
\usepackage{stmaryrd}
\usepackage{authblk}
\usepackage{color,graphicx}
\usepackage{hyperref,url}
\usepackage{tikz}
\usepackage[normalem]{ulem}
\usepackage{calc}
 
\usepackage{tikz}
\usetikzlibrary{arrows,shapes,calc,positioning}
 

\newcommand{\funin}{\mathrel{:}}
\newcommand{\fap}[2]{#1({#2})}
\newcommand{\bfap}[3]{{#1}({#2},\hspace*{0.05pt}{#3})}
\newcommand{\tfap}[4]{{#1}({#2},\hspace*{0.05pt}{#3},\hspace*{0.05pt}{#4})}
\newcommand{\iap}[2]{#1 _{#2}}
\newcommand{\indap}[2]{#1 _{#2}}
\newcommand{\subap}[2]{#1 _{#2}}
\newcommand{\supap}[2]{#1 ^{#2}}
\newcommand{\bap}{\subap}
\newcommand{\pap}{\supap}
\newcommand{\bpap}[3]{#1 _{#2}^{#3}}
\newcommand{\pbap}[3]{#1 _{#3}^{#2}}

\newcommand{\nb}{\nobreakdash}

\newcommand{\nf}{\normalfont}

\newcommand{\sdefdby}{{:=}}
\newcommand{\defdby}{\mathrel{\sdefdby}}

\newcommand{\punc}[1]{\ensuremath{\hspace*{3pt}{#1}}}


\definecolor{azure}{rgb}{0.94,1.00,1.00}
\definecolor{brown}{rgb}{.75,.25,.25}
\definecolor{cyan}{rgb}{0.25,0.88,0.82}
\definecolor{chocolate}{rgb}{0.82,0.41,0.12}
\definecolor{darkcyan}{rgb}{0.5,0,1}
\definecolor{darkgreen}{rgb}{0,0.39,0}
\definecolor{darkmagenta}{rgb}{0.5,0,0.5}
\definecolor{darkgoldenrod}{RGB}{184,134,11}
\definecolor{firebrick}{RGB}{175,25,25}
\definecolor{forestgreen}{rgb}{0.13,0.55,0.13}
\definecolor{goldenrod}{RGB}{218,165,32}
\definecolor{lightcyan}{rgb}{0.88,1.00,1.00}
\definecolor{lightpink}{rgb}{1.00,0.71,0.76}
\definecolor{myyellow}{RGB}{235,235,0}
\definecolor{lightyellow}{rgb}{1.00,1.00,0.88}
\definecolor{lightgoldenrod}{rgb}{0.83,0.97,0.51}
\definecolor{lightgoldenrodyellow}{rgb}{0.98,0.98,0.82}
\definecolor{lightskyblue}{rgb}{0.53,0.81,0.98}
\definecolor{moccasin}{rgb}{1.00,0.89,0.71}
\definecolor{magenta}{rgb}{1,0,1}
\definecolor{navyblue}{rgb}{0,0,0.5}
\definecolor{orange}{rgb}{1.0,0.65,0.0}
\definecolor{orangered}{rgb}{1.0,0.27,0.0}
\definecolor{palegreen}{rgb}{0.60,0.98,0.60}
\definecolor{powderblue}{rgb}{0.69,0.88,0.90}
\definecolor{purple}{rgb}{1,0.5,1}
\definecolor{royalblue}{RGB}{65,105,225}
\definecolor{mediumblue}{RGB}{0,0,205}
\definecolor{cornflowerblue}{RGB}{100,149,237}
\definecolor{springgreen}{rgb}{0.0,1.0,0.5}
\definecolor{turquoise}{rgb}{0.25,0.88,0.82}
\definecolor{snow}{rgb}{1.00,0.98,0.98}
\definecolor{tan}{rgb}{0.82,0.71,0.55}
\definecolor{red}{rgb}{1,0,0}
\definecolor{violetred}{RGB}{208,32,144}


\newcommand{\tuple}[1]{\langle #1 \rangle}
\newcommand{\tuplespace}{\hspace*{0.5pt}}
\newcommand{\pair}[2]{\tuple{#1, \tuplespace #2}}

\newcommand{\descsetexpmid}{\mathrel{\vert}}
\newcommand{\descsetexp}[2]{\left\{{#1}\descsetexpmid{#2}\right\}}

\newcommand{\descsetexpbig}[2]{\bigl\{{#1}\descsetexpmid{#2}\bigr\}}
\newcommand{\setexp}[1]{\left\{{#1}\right\}}
\newcommand{\setexpbig}[1]{\bigl\{{#1}\bigr\}}

\renewcommand{\emptyset}{\varnothing}

\newcommand{\family}[2]{\setexp{#1}_{#2}}

\newcommand{\nat}{\mathbb{N}}


\newcommand{\slognot}{\neg}
\newcommand{\lognot}[1]{\slognot{#1}}

\newcommand{\existsst}[2]{\exists{\hspace*{1pt}#1}{#2}}

\newcommand{\sbigOmicron}{O}
\newcommand{\bigOmicron}{\fap{\sbigOmicron}}

\newcommand{\sbigOmega}{\Omega}
\newcommand{\bigOmega}{\fap{\sbigOmega}}

\newcommand{\avar}{x}
\newcommand{\bvar}{y}
\newcommand{\cvar}{z}
\newcommand{\avari}{\indap{\avar}}
\newcommand{\bvari}{\indap{\bvar}}
\newcommand{\cvari}{\indap{\cvar}}

\newcommand{\ater}{s}
\newcommand{\bter}{t}
\newcommand{\cter}{u}

\newcommand{\cteracc}{\cter'}
\newcommand{\ateri}{\indap{\ater}}
\newcommand{\bteri}{\indap{\bter}}
\newcommand{\cteri}{\indap{\cter}}

\newcommand{\cteracci}{\indap{\cteracc}}

\newcommand{\asig}{\Sigma}
\newcommand{\asigi}{\indap{\asig}}
\newcommand{\asigmin}{\supap{\asig}{-}}

\newcommand{\asiglosim}{\indap{\asig}{\scriptlosim}}

\newcommand{\asiglop}{\indap{\asig}{\scriptlop}}
\newcommand{\asiglopsim}{\indap{\asig}{\scriptlopsim}}

\newcommand{\asiglambda}{\indap{\asig}{\lambda}}
\newcommand{\asigexpand}{\indap{\asig}{\scriptexpand}}
\newcommand{\asigexp}{\indap{\asig}{\scriptexp}}

\newcommand{\arules}{R}

\newcommand{\ruleslopsim}{\indap{\arules}{\scriptlopsim}}

\newcommand{\rulesexp}{\indap{\arules}{\scriptexp}}
\newcommand{\rulesexpprime}{\indap{\arules}{\scriptexp'}}

\newcommand{\aTRS}{{\cal R}}
\newcommand{\alTRS}{{\cal L}}

\newcommand{\TRS}{TRS}
\newcommand{\TRSs}{TRSs}

\newcommand{\stermsover}{\text{\nf Ter}}
\newcommand{\termsover}{\fap{\stermsover}}

\newcommand{\scontextsover}{\textit{Cxt}}
\newcommand{\contextsover}{\fap{\scontextsover}}
\newcommand{\scontextsnover}{\bap{\scontextsover}}
\newcommand{\contextsnover}[1]{\fap{\scontextsnover{#1}}}

\newcommand{\sarity}{\textit{ar}}
\newcommand{\arityof}{\fap{\sarity}}

\newcommand{\vars}{\textit{Var}}

\newcommand{\sfolapp}{@}
\newcommand{\sfonlabs}{\lambda}
\newcommand{\sfolabs}[1]{(\lambda{#1})}

\newcommand{\folapp}{\bfap{\sfolapp}}

\newcommand{\folabs}[1]{\fap{\sfolabs{#1}}}

\newcommand{\afovar}{\mathsf{v}}
\newcommand{\afovari}[1]{\indap{\afovar}{\hspace*{-0.25pt}#1}}

%
\newcommand{\afoscopesym}{f}
\newcommand{\bfoscopesym}{g}
\newcommand{\cfoscopesym}{h}
\newcommand{\dfoscopesym}{i}
\newcommand{\afoscopesymi}{\indap{\afoscopesym}}
\newcommand{\afoscope}{\fap{\afoscopesym}}

\newcommand{\dfoscope}{\fap{\dfoscopesym}}

\newcommand{\cxtap}[2]{{#1}[#2]}

\newcommand{\acxt}{C}
\newcommand{\bcxt}{D}
\newcommand{\ccxt}{E}
\newcommand{\acxtap}{\cxtap{\acxt}}
\newcommand{\bcxtap}{\cxtap{\bcxt}}

\newcommand{\acxti}{\bap{\acxt}}

\newcommand{\ccxti}{\bap{\ccxt}}

\newcommand{\afoscopecxt}{F}
\newcommand{\afoscopecxtap}{\cxtap{\afoscopecxt}}
\newcommand{\hole}{\Box}
\newcommand{\holei}{\iap{\hole}}
\newcommand{\holes}{\pmb{\hole}}

\newcommand{\nary}[1]{{$#1$}\nb-ary}

\newcommand{\ruleof}[1]{\indap{\rho}{\hspace*{-0.3pt}#1}}

\newcommand{\smyleftspoon}{\hbox{${\circ}\kern-1.3pt{\relbar}$}}
\newcommand{\smyrightspoon}{\hbox{${\relbar}\kern-1.2pt{\circ}$}}
\newcommand{\sdependson}{\smyleftspoon}
\newcommand{\dependson}{\mathrel{\sdependson}}
\newcommand{\sisnestedinto}{\smyrightspoon}

%

\newcommand{\depth}[1]{\lvert{#1}\rvert} 


\newcommand{\holedepth}[1]{\lvert{#1}\rvert_{\hole}}

\newcommand{\expdepth}[1]{\lvert{#1}\rvert_{\scriptexp}}
\newcommand{\expdepthbig}[1]{\bigl\lvert{#1}\bigr\rvert_{\scriptexp}}

\newcommand{\expholedepth}[1]{\lvert{#1}\rvert_{\scriptexp,\hole}}
\newcommand{\expholedepthbig}[1]{\bigl\lvert{#1}\bigr\rvert_{\scriptexp,\hole}}

\newcommand{\holedepthnotexp}[1]{\lvert{#1}\rvert_{\hole}^{\scriptnotexp}}
\newcommand{\holedepthnotexpbig}[1]{\big\lvert{#1}\big\rvert_{\hole}^{\scriptnotexp}}

\newcommand{\scriptnotexp}{\text{\sout{\nf\hspace*{1pt}exp\hspace*{1pt}}}}
\newcommand{\depthnotexp}[1]{\lvert{#1}\rvert^{(\scriptnotexp)}}
\newcommand{\depthnotexpbig}[1]{\big\lvert{#1}\big\rvert^{(\scriptnotexp)}}

\newcommand{\snestdepth}{\textit{d}_{\text{\nf nest}}}
\newcommand{\nestdepth}{\fap{\snestdepth}}
\newcommand{\smaxnestdepth}{\textit{D}_{\text{\nf nest}}}
\newcommand{\maxnestdepth}{\fap{\smaxnestdepth}}

%

\newcommand{\size}[1]{\left\lVert{#1}\right\rVert} 

\newcommand{\expsize}[1]{\left\lVert{#1}\right\rVert\bap{}{\scriptexp}}

%

\newcommand{\slop}{\textit{lop}}

\newcommand{\slopn}{\subap{\slop}}
\newcommand{\slopstart}{\slop}
\newcommand{\lopstart}{\fap{\slop}}
\newcommand{\lopn}[1]{\fap{\slopn{#1}}}
\newcommand{\slopni}[2]{\slop_{#1,\hspace*{1pt}#2}}
\newcommand{\lopni}[2]{\fap{\slopni{#1}{#2}}}  
%

%
\newcommand{\ssubst}{\textit{subst}}
\newcommand{\subst}{\tfap{\ssubst}}

\newcommand{\slopsimTRS}{{\cal L\hspace*{-0.75pt}O}} 
\newcommand{\lopsimTRSwrt}{\fap{\slopsimTRS}}

\newcommand{\sexpandTRS}{{\cal E}}

\newcommand{\sexpandTRSlopsim}{\indap{\sexpandTRS}{\scriptlopsim}}
\newcommand{\expandTRSwrt}{\fap{\sexpandTRS}}

\newcommand{\expandTRSlopsimwrt}{\fap{\sexpandTRSlopsim}}

\newcommand{\denlterrepwrt}[2]{\llbracket{#2}\rrbracket^{#1}}

\newcommand{\denlterwrt}[2]{\llbracket{#2}\rrbracket^{#1}_{\sslabs}}
\newcommand{\denlter}[1]{\llbracket{#1}\rrbracket_{\sslabs}}

\newcommand{\lopsimTRS}{lopsim-TRS}


\newcommand{\alter}{M}
\newcommand{\blter}{N}
\newcommand{\clter}{L}
\newcommand{\dlter}{P}
\newcommand{\elter}{Q}

\newcommand{\dlteracc}{\dlter'}

\newcommand{\alteri}{\indap{\alter}}
\newcommand{\blteri}{\indap{\blter}}
\newcommand{\clteri}{\indap{\clter}}
\newcommand{\dlteri}{\indap{\dlter}}

\newcommand{\dlteracci}{\indap{\dlteracc}}

\newcommand{\twochurch}{\textit{two}}

\newcommand{\sslabs}{\lambda}
\newcommand{\slabs}[1]{\sslabs{#1}.}
\newcommand{\labs}[2]{\slabs{#1}{#2}}
\newcommand{\lapp}[2]{{#1}{#2}}

\newcommand{\substin}[2]{{#1}[{#2}]}
\newcommand{\substinfor}[3]{\substin{#1}{{#2}\defdby{#3}}}

\newcommand{\sred}{\to}
\newcommand{\red}{\mathrel{\sred}}
\newcommand{\sredi}{\indap{\sred}}
\newcommand{\redi}[1]{\mathrel{\sredi{#1}}}

\newcommand{\smred}{\twoheadrightarrow}
\newcommand{\mred}{\mathrel{\smred}}
\newcommand{\smredi}{\indap{\smred}}
\newcommand{\mredi}[1]{\mathrel{\smredi{#1}}}
\newcommand{\sredn}{\supap{\sred}}

\newcommand{\sredin}[2]{\bpap{\sred}{#1}{#2}}

\newcommand{\snfred}{{\downarrow}}

\newcommand{\ssyntequal}{{\equiv}}
\newcommand{\syntequal}{\mathrel{\ssyntequal}}
\newcommand{\snotsyntequal}{{\not\equiv}}
\newcommand{\notsyntequal}{\mathrel{\snotsyntequal}}

%

\newcommand{\scriptlop}{\text{\nf lop}}
\newcommand{\scriptlosim}{\text{\nf losim}}

\newcommand{\scriptlopsim}{\text{\nf lopsim}}

\newcommand{\scriptexpand}{\text{\nf expand}}
\newcommand{\scriptexp}{\text{\nf exp}}
\newcommand{\scriptsearch}{\text{\nf search}}
\newcommand{\scriptcontract}{\text{\nf contr}}
 
\newcommand{\scriptsubst}{\text{\nf subst}}

\newcommand{\scriptinit}{\text{\nf init}}
\newcommand{\scriptdescendinfolapp}{\iap{\text{\nf desc}}{\sfolapp}}
\newcommand{\scriptdescendinfolabs}{\iap{\text{\nf desc}}{\sfonlabs}}
\newcommand{\scriptcontractn}[1]{\iap{\text{\nf contr}}{{#1}}}

\newcommand{\scriptvar}{\text{\nf var}}

%
\newcommand{\ssearchred}{\sredi{\scriptsearch}}
\newcommand{\searchred}{\mathrel{\ssearchred}}
\newcommand{\ssearchmred}{\smredi{\scriptsearch}}
\newcommand{\searchmred}{\mathrel{\ssearchmred}}
\newcommand{\ssubstred}{\sredi{\scriptsubst}}
\newcommand{\substred}{\mathrel{\ssubstred}}
\newcommand{\ssubstmred}{\smredi{\scriptsubst}}
\newcommand{\substmred}{\mathrel{\ssubstmred}}
\newcommand{\scontractred}{\sredi{\scriptcontract}}
\newcommand{\contractred}{\mathrel{\scontractred}}
\newcommand{\scriptlobeta}{\text{\nf lo}\beta}
\newcommand{\slobetared}{\sredi{\scriptlobeta}}
\newcommand{\lobetared}{\mathrel{\slobetared}}
\newcommand{\slobetaredn}{\sredin{\scriptlobeta}}
\newcommand{\lobetaredn}[1]{\mathrel{\slobetaredn{#1}}}

\newcommand{\slopsimred}{\sredi{\slop}}
\newcommand{\lopsimred}{\mathrel{\slopsimred}}
\newcommand{\slopsimmred}{\smredi{\slop}}
\newcommand{\lopsimmred}{\mathrel{\slopsimmred}}

\newcommand{\sexpand}{\textit{exp}}

\newcommand{\sexpandi}{\indap{\sexpand}}
\newcommand{\expandi}[1]{\fap{\sexpandi{#1}}}

\newcommand{\sexpred}{\sredi{\scriptexp}}
\newcommand{\expred}{\redi{\scriptexp}}

\newcommand{\expmred}{\mredi{\scriptexp}}

\newcommand{\sbetared}{\sred_{\beta}}
\newcommand{\betared}{\mathrel{\sbetared}}
\newcommand{\sbetaredn}{\sredin{\beta}}
\newcommand{\betaredn}[1]{\mathrel{\sbetaredn{#1}}}

\newcommand{\sexprednf}{{\downarrow_{\scriptexp}}}
\newcommand{\exprednf}[1]{{#1}\sexprednf}
\newcommand{\sexprednfi}[1]{{\bpap{\downarrow}{\scriptexp}{\hspace*{-1pt}(#1)}}}
\newcommand{\exprednfi}[2]{{\langle#2\rangle}\sexprednfi{#1}}

\newcommand{\sexprednfbigi}[1]{{\bpap{\smash{\big\downarrow}}{\scriptexp}{\hspace*{-1.5pt}(#1)}}}
\newcommand{\exprednfbigi}[2]{{#2}\sexprednfbigi{#1}}

\newcommand{\sinitred}{\sredi{\scriptinit}}
\newcommand{\initred}{\mathrel{\sinitred}}
\newcommand{\sdescendinfolappred}{\sredi{\scriptdescendinfolapp}}
\newcommand{\descendinfolappred}{\mathrel{\sdescendinfolappred}}
\newcommand{\sdescendinfolabsred}{\sredi{\scriptdescendinfolabs}}
\newcommand{\descendinfolabsred}{\mathrel{\sdescendinfolabsred}}

\newcommand{\svarred}{\sredi{\scriptvar}}

\newcommand{\svarnred}[1]{\sredi{\scriptvar_{#1}}}
\newcommand{\varnred}[1]{\mathrel{\svarnred{#1}}}

\newcommand{\arewseq}{\sigma}

\newcommand{\arewstep}{\rho}

%
\newcommand{\scomprewrels}[2]{{#1}\cdot{#2}}
\newcommand{\comprewrels}[2]{\mathrel{\scomprewrels{#1}{#2}}}

%

\newcommand{\lambdaabstraction}{$\sslabs$\nb-ab\-strac\-tion}

\newcommand{\lambdacalculus}{$\lambda$\nb-cal\-cu\-lus}

\newcommand{\lambdaterm}{$\lambda$\nb-term}
\newcommand{\lambdaterms}{\lambdaterm{s}}
\newcommand{\lambdadepth}{$\sslabs$\nb-depth}
\newcommand{\lambdalifting}{lambda-lif\-ting}

\newcommand{\betareduction}{$\beta$\nb-re\-duc\-tion}
\newcommand{\betacontraction}{$\beta$\nb-con\-trac\-tion}
\newcommand{\betaredex}{$\beta$\nb-re\-dex}
\newcommand{\alphaconversion}{$\alpha$\nb-con\-ver\-sion}
\newcommand{\alphaequivalent}{$\alpha$\nb-equi\-va\-lent}
\newcommand{\lo}{left\-most-outer\-most}

\newcommand{\loreduction}{\lo~re\-duc\-tion}

\newcommand{\lopbetareduction}{\lo-par\-al\-lel~\betareduction}
\newcommand{\loredex}{\lo\ re\-dex}
\newcommand{\nondeterministic}{non-de\-ter\-min\-istic}
\newcommand{\nontrivial}{non-triv\-i\-al}

\newcommand{\TRSrepresentation}{\TRS\nb-re\-pre\-sen\-ta\-tion}

\newcommand{\lTRSrepresentation}{\lTRS\nb-re\-pre\-sen\-ta\-tion}
\newcommand{\lTRS}{$\lambda$\hspace*{-0.5pt}\nb-\hspace*{-0.5pt}\TRS}
\newcommand{\lTRSs}{\lTRS{s}}

\newcommand{\welldefined}{well-de\-fined}
\newcommand{\wellfounded}{well-found\-ed}

%

\newcommand{\sseamred}{\smredi{\text{\nf s}}}
\newcommand{\sconred}{\sredi{\text{\nf c}}}

\newcommand{\lambdacal}{\lambda}
\newcommand{\graphcal}{\mathcal{G}}
\newcommand{\graphcalwrt}{\fap{\graphcal}}
\newcommand{\agraph}{G}
\newcommand{\agraphi}{\iap{\agraph}}
\newcommand{\sgraphred}{{\Longrightarrow}}
\newcommand{\graphred}{\mathrel{\sgraphred}}
\newcommand{\sgraphredrtc}{{\Longrightarrow^*}}
\newcommand{\graphredrtc}{\mathrel{\sgraphredrtc}}
\newcommand{\sgraphredn}[1]{{\pap{\sgraphred}{#1}}}
\newcommand{\graphredn}[1]{\mathrel{\sgraphredn{#1}}}

\newcommand{\stime}{\text{\nf Time}}
\newcommand{\timei}{\bap{\stime}}

 
\theoremstyle{plain}
\newtheorem{theorem}{Theorem}

\newtheorem*{corollary*}{Corollary}
\newtheorem{lemma}[theorem]{Lemma}
\newtheorem{proposition}[theorem]{Proposition}

\theoremstyle{definition}
\newtheorem{definition}[theorem]{Definition}

\newtheorem{example}[theorem]{Example}

\begin{document}
\publicationdetails{VOL}{2015}{ISS}{NUM}{SUBM}
\maketitle
\begin{abstract}
  Performing $n$ steps of \betareduction\ to a given term in the \lambdacalculus\ 
  can lead to an increase in the size of the resulting term that is exponential in $n$. 
  The same is true for the possible depth increase of terms along a \betareduction\ sequence. 
  We explain that the situation is different for the \lo\ strategy for \betareduction:
  while exponential size increase is still possible, depth increase is bounded linearly in the number of steps. 
  For every \lambdaterm~$\alter$ with depth $d$, 
  in every step of a \lo\ \betareduction\ rewrite sequence starting from $\alter$ 
  the term depth increases by at most $d$.
  Hence the depth of the $n$\nb-th reduct of $\alter$ in such a rewrite sequence~is~bounded~by~$d\cdot (n+1)$.
  
  We prove the lifting of this result to \lambdaterm\ representations as orthogonal first-order term rewriting systems,
  which can be obtained by the \lambdalifting\ transformation.
  For the transfer to \lambdacalculus, we rely on correspondence statements via \lambdalifting. 
  We argue that the linear-depth-increase property can be a stepping stone for an alternative proof of, and so can shed new light on,
  a result by Accattoli and Dal~Lago (2015) that states: 
  leftmost-outermost \betareduction\ rewrite sequences of length $n$ in the \lambdacalculus\ can be implemented on a reasonable machine with an overhead
  that is polynomial in $n$ and the size of the initial term. 
  
  \keywords{lambda calculus, beta reduction, leftmost-outermost strategy, complexity}
\end{abstract}

\section{Introduction}
%
Accattoli and Dal~Lago \cite{acca:lago:2014:beta-reduction-invariant:LICS,acca:lago:2016}
proved that the number of steps in a
\lo\ rewrite sequence to normal form provides an invariant cost model for the \lambdacalculus, in the following sense.\ 
There is an implementation $I$ on a reasonable machine (e.g., a Turing machine, or a random access machine)
of the partial function that maps a \lambdaterm\ to its normal form, whenever that exists,
such that $I$ has the following property: 
there are integer polynomials $\bfap{p}{x}{y}$ 
                                              and $\fap{q}{x}$
such that if a \lambdaterm~$\blter$ is the result of $n$ successive \lo\ \betareduction\ steps performed to a \lambdaterm~$\alter$ of size~$m$,
then $I$ obtains a compact representation~$C$ of $\blter$ from $\alter$
in time bounded by $\bfap{p}{n}{m}$, 
and $\blter$ can be obtained from $C$ in time bounded by $\fap{q}{\size{\blter}}$ where $\size{\blter}$ is the symbol size of 
the represented \lambdaterm~$\blter$.%
  \footnote{That the represented \lambdaterm~$\blter$ must be computable from its compact representation~$C$
            in time bounded by the size of $\blter$,
            which is implicit in the result of \cite{acca:lago:2014:beta-reduction-invariant:LICS,acca:lago:2016},
            is crucial to prevent `hiding' of reduction work in the computation of `pretty printing' $C$ as $\blter$.} 

To achieve this result, Accattoli and Dal~Lago describe how to simulate \lo\ \betareduction\ rewrite sequences in the \lambdacalculus\ by
`\lo\ useful' rewrite sequences in the linear explicit substitution calculus.
They employ substitution steps 
                                             only insofar as those are needed to 
create the \lo\ \betaredex\ (representation), or to make it visible. 
In this way they work with shared representations of \lambdaterms\ in order to avoid size explosion. 
Then they show that `\lo\ useful' rewrite sequences in the linear explicit substitution calculus can be implemented on a reasonable machine
with a polynomial overhead dependent on the length of the sequence, and the size of the initial~term.\pagebreak[4]   

My goal is to connect this result with  graph reduction techniques that are widely used for the compilation and runtime-evaluation of functional programs.
In particular, I would like to obtain a graph rewriting implementation for \lo\ \betareduction\ in the \lambdacalculus\
that demonstrates this result, but that is close in spirit to graph reduction as it is used in runtime evaluators for functional programming languages.  
My idea is to describe a port graph grammar \cite{stew:2002} implementation that is based on 
\TRS\ (term rewrite system) representation{s} of \lambdaterm{s}.
These \lambdaterm\ representations correspond closely to supercombinator systems that are obtained by \lambdalifting, 
as first described by Hughes~\cite{hugh:1982:report}.

That such an implementation is conceivable by employing subterm-sharing 
is suggested by a property of (plain, unshared) \lo\ \betareduction\ rewrite sequences in the \lambdacalculus\
that we will show.
The depth increase in each step of an arbitrarily long \lo\ \betareduction\ rewrite sequence from a \lambdaterm~$\alter$
is uniformly bounded by $\depth{\alter}$, the depth of $\alter$.
As a consequence, for the depth of the $n$\nb-th reduct~$\clteri{n}$ of a \lambdaterm~$\clteri{0}$ 
in a \lo\ \betareduction\ rewrite sequence $\clteri{0} \lobetared \clteri{1} \lobetared \cdots \lobetared \clteri{n} \lobetared \cdots$ 
it holds that: $\depth{\clteri{n}} \le \depth{\clteri{0}} \cdot (n+1)$,
and hence $\depth{\clteri{n}}/\depth{\clteri{0}} \in \bigOmicron{n}$. 

In the terminology of \cite{acca:lago:2014:beta-reduction-invariant:LICS,acca:lago:2016}
this property shows that \lo\ rewrite sequences do not cause `depth explosion' in \lambdaterms. 
This contrasts with general $\sbetared$ rewrite sequences,
along which the depth of terms may increase exponentially.
The example below provides an illustration. 

\begin{example}[`depth-exploding' family under \betareduction, from Asperti and L\'{e}vy \cite{aspe:levy:2013}] 
  Consider the following families $\family{\alteri{i}}{i\in\nat}$ and $\family{\blteri{i}}{i\in\nat}$ of \lambdaterms: 
  \begin{center}
    $
    \begin{aligned}
      \alteri{0} & \defdby \lapp{\avar}{\avar} \punc{,}
        & & \quad &   
      \blteri{0} & \defdby \alteri{0} = \lapp{\avar}{\avar} \punc{,}  
      \\
      \alteri{i+1} & \defdby \lapp{\lapp{\twochurch}{(\labs{\avar}{\alteri{i}})}}{\avar} 
        \quad \text{(for $n\in\nat$)} \punc{,}
        & & & 
      \blteri{i+1} & \defdby \substinfor{\blteri{i}}{\avar}{\blteri{i}} 
        \quad \text{(for $n\in\nat$)} \punc{,}
    \end{aligned}  
    $
  \end{center}
  where 
  $\twochurch \defdby \labs{\avar}{\labs{\bvar}{\lapp{\avar}{(\lapp{\avar}{\bvar})}}}$
  is the Church numeral for $2$.
  By induction on $i$ it can be verified that it holds:
  \begin{center}
    $
    \begin{aligned}
      \depth{\alteri{i}} 
        & =
       \left\{
       \begin{aligned}
         1 &      \;\;\text{ if }\, i = 0
         \\
         3(i+1) & \;\;\text{ if }\, i\ge 1
       \end{aligned}
       \,\right\} 
       \in \bigOmicron{i}
     \hspace*{6ex} & 
      \alteri{i}
        & \betaredn{4i}\;
      \blteri{i}
     \hspace*{6ex} & 
      \depth{\blteri{i}} & = 2^i \in \bigOmega{2^i}
    \end{aligned}
    $
  \end{center}
  and that the syntax tree of $\blteri{i}$ is the complete binary application tree 
  with $2^i$ occurrences of $\avar$ at depth $\depth{\blteri{i}} = 2^i$.
  The induction step for the statement on the rewrite sequence can be performed as follows:
  \begin{alignat*}{6}
    \alteri{i+1}
      =
    \lapp{\lapp{\twochurch}{(\labs{\avar}{\alteri{i}})}}{\avar}
      & \;\betaredn{4i}\; & & 
    \lapp{\lapp{\twochurch}{(\labs{\avar}{\blteri{i}})}}{\avar}  
      =
    \lapp{\lapp{(\labs{\avar}{\labs{\bvar}{\lapp{\avar}{(\lapp{\avar}{\bvar})}}})}{(\labs{\avar}{\blteri{i}})}}{\avar}  
      \hspace*{1.5ex}
    \text{(by the induction hypothesis)}    
    \\
      & \;\betared\; & & 
    \lapp{({\labs{\bvar}{\lapp{(\labs{\avar}{\blteri{i}})}{(\lapp{(\labs{\avar}{\blteri{i}})}{\bvar})}}})}{\avar}      
    \\
      & \;\betared\; & &
    \lapp{(\labs{\avar}{\blteri{i}})}{(\lapp{(\labs{\avar}{\blteri{i}})}{\avar})}   
      \;\betared\;
    \lapp{(\labs{\avar}{\blteri{i}})}{\blteri{i}} 
      \;\betared\; 
    \substinfor{\blteri{i}}{\avar}{\blteri{i}}
      =
    \blteri{i+1} 
  \end{alignat*}
  This $\sbetared$ rewrite sequence is not \lo, but it proceeds mainly in inside-out direction. 
  
  Let $i \ge 1$. Then for $n = 4i$ and $\alter \defdby \alteri{i}$
  it follows that $\alteri{i}$ reduces to its normal form $\blteri{i}$ in precisely $n$ \betareduction\ steps
  $\alter = \alteri{i} = \clteri{0} \betared \clteri{1} \betared \ldots \betared \clteri{n} = \blteri{i}$,
  for with reducts~$\clteri{0},\ldots,\clteri{n}$,
  that the depth of the initial term is
  $\depth{\clteri{0}} = \depth{\alter} = \depth{\alteri{i}} = 3 (i+1) \le 4i = n$,
  and the depth of the final term is
  $\depth{\clteri{n}} = \depth{\blteri{i}} = \depth{\blteri{i}} = 2^i = 2^{n/4}$.
  From this it follows
  $\depth{\clteri{n}}/\depth{\clteri{0}} \ge 2^{n/4}/n$. 
  
  This argument shows that for the relative depth increase of \betareduction\ rewrite sequences $\clteri{0} \betaredn{n} \clteri{n}$ of length $n$
  is exponential, because it holds in any case that
  $\depth{\clteri{n}}/\depth{\clteri{0}} \in \bigOmega{2^{((1/4)-\epsilon)n}}$ for every $\epsilon>0$.
\end{example}

\smallskip
Such an exponential depth increase with respect to general \betareduction\
contrasts sharply with the linear-depth-increase property of \lo\ \betareduction\ that we will show here.
We now lay out the basic insight that is at the basis of this result.

\paragraph{Underlying property, leading to the linear-depth-increase result.}
  For every leftmost-outermost \betareduction\ rewrite sequence 
  $\clteri{0} \lobetared \clteri{1} \lobetared \cdots \lobetared \clteri{n} \lobetared \clteri{n+1} (\,\lobetared \cdots)$
  in the \lambdacalculus\ the following property can be shown:
  if $\lapp{(\labs{\cvar}{\dlter})}{\elter}$ is the \lo\ \betaredex\ in the $n$\nb-th reduct $\clteri{n}$,
  its abstraction part has a representation
  $ \labs{\cvar}{\dlter} 
      \syntequal
    \substin{(\labs{\cvar}{\dlteri{0}})}{\cvari{1}\sdefdby\dlteri{1},\ldots,\cvari{k}\sdefdby\dlteri{k}} $  
  with `scope part' $\labs{\cvar}{\dlteri{0}}$ and `free subexpressions' $\dlteri{0},\dlteri{1},\ldots,\dlteri{k}$,
  where $\cvari{1},\ldots,\cvari{k}\notsyntequal\cvar$ are distinct variables that are free in $\dlteri{0}$, 
  such that an abstraction of the form
  $ \labs{\cvar}{\dlteracc} 
      \syntequal
    \substin{(\labs{\cvar}{\dlteri{0}})}{\cvari{1}\sdefdby\dlteracci{1},\ldots,\cvari{k}\sdefdby\dlteracci{k}} $
  with the same scope part $\labs{\cvar}{\dlteri{0}}$, but possibly with different free subexpressions $\dlteracci{1},\ldots,\dlteracci{k}$,   
  occurs already in $\clteri{0}$ (perhaps as an \alphaconversion\ equivalent variant).
  This implies $\depth{\dlteri{0}} \le \depth{\dlter} < \depth{\clteri{0}}$
  for the depth of $\dlteri{0}$ in relation to the depth of the initial term $\clteri{0}$ of the sequence.
  Now if $\clteri{n} \syntequal \acxtap{(\lapp{\labs{\cvar}{\dlter})}{\elter}}$ for some unary context $\acxt$
  with the \lo\ \betaredex\ highlighted, then the $n$\nb-th step is of the form:
  \begin{equation*}
    \begin{split}
      \clteri{n} 
        & {} \parbox{\widthof{${}\:\lobetared\:{}$}}{${}\:\syntequal\:{}$}
      \acxtap{\lapp{(\labs{\cvar}{\dlter})}{\elter}} 
        \:\syntequal\:
      \acxtap{\lapp{(\substin{(\labs{\cvar}{\dlteri{0}})}{\cvari{1}\,\sdefdby\,\dlteri{1},\ldots,\cvari{k}\,\sdefdby\,\dlteri{k}})}{\elter}}
      \\
        & {} \parbox{\widthof{${}\:\lobetared\:{}$}}{${}\:\syntequal\:{}$}
      \acxtap{\lapp{(\labs{\cvar}{\substin{\dlteri{0}}{\cvari{1}\,\sdefdby\,\dlteri{1},\ldots,\cvari{k}\,\sdefdby\,\dlteri{k}}})}{\elter}}  
      \\
        & {} \parbox{\widthof{${}\:\lobetared\:{}$}}{${}\:\lobetared\:{}$}
      \acxtap{\substin{(\substin{\dlteri{0}}{\cvari{1}\,\sdefdby\,\dlteri{1},\ldots,\cvari{k}\,\sdefdby\,\dlteri{k}})}{\cvar\,\sdefdby\,\elter}}
        \syntequal
      \acxtap{\substin{\dlteri{0}}{\cvari{1}\,\sdefdby\,\dlteri{1},\ldots,\cvari{k}\,\sdefdby\,\dlteri{k},\cvar\,\sdefdby\,\elter}}
        \syntequal
      \clteri{n+1} \punc{.}       
    \end{split} 
  \end{equation*}
  In order to move the substitutions for $\cvari{1},\ldots,\cvari{k}$ inside of the abstraction $\labs{\cvar}{\dlter}$,
  we have assumed here, for simplicity, that $\cvar$ does not occur free in one of $\dlteri{1},\ldots,\dlteri{k}$
  (otherwise \alphaconversion\ would be needed to rename $\cvar$ in $\labs{\cvar}{\dlter}$ first).
  This justifies taking up the substitution of $\elter$ for $\cvar$ into the simultaneous substitution expression 
  after the $\slobetared$ step. 
  Now from the form of the step $\clteri{n} \lobetared \clteri{n+1}$
  we see that any depth increase can only stem from the substitution of $\elter$ for one of the occurrences of $\cvar$ in $\dlteri{0}$.
  This can move the argument $\elter$ of the \betaredex\ deeper by at most $\depth{\dlteri{0}}$.
  So by using $\depth{\dlteri{0}} < \depth{\clteri{0}}$, see above, we obtain 
    $\depth{\clteri{n+1}} < \depth{\clteri{n}} + \depth{\clteri{0}}$.
  In this way we recognize that the depth increase in the $n$\nb-th \lo\ \betareduction\ step is always bounded by the depth $\depth{\clteri{0}}$ of the initial term $\clteri{0}$ of the sequence.

\paragraph{Concepts for showing the underlying property.}
  For showing that scope parts of abstractions in \lo\ redexes of \lo\ \betareduction\ rewrite sequences trace back to the initial term of the sequence,
  we will use representations of \lambdaterms\ as orthogonal first-order term rewrite systems.
  We call these TRS representations \lTRSs.
  They are closely connected to systems of supercombinators \cite{hugh:1982:report,hugh:1982},
  which are widely used for the compilation of functional programs. 
  Supercombinator translations 
                               are obtained by `\lambdalifting' \cite{peyt:1987}.
  This transformation rewrites higher-order terms with bindings (such as named abstractions in \lambdaterms)
  into applicative first-order terms, and a finite number of combinator definitions.
  For functional programs \lambdalifting\ is applied by construeing them as 
  generalized \lambdaterms\ with \textsf{case} and \textsf{letrec} constructs.
  A program is compiled into a finite number of combinator definitions 
  of the form $ C x_1 \ldots x_n = \bcxtap{x_1,\ldots,x_n} $ where $\bcxt$ is an applicative combinator context.
  
  Supercombinator representations are well-suited for the evaluation 
  via \lo\ evaluation. This is because evaluation can proceed by repeatedly applying a combinator definition
  to occurrences of combinators with their sufficient number of arguments.
  In the example as above these are occurrences of applicative terms of the form $C \ateri{1} \ldots \ateri{n}$. 
  In this way evaluation becomes a process of applying combinator definitions locally 
  without having to carry out the substitutions of arguments for variable occurrences that are needed for \betareduction\ on \lambdaterms.
  Moreover, \lo\ \betareduction\ can be simulated by evaluating combinator terms in a \lo\ manner. 
  In the \lTRS\ formulation,
  supercombinator definitions are modeled by rewrite rules 
  $ \folapp{\afoscope{x_1 \ldots x_n}}{y} \red \afoscopecxtap{x_1,\ldots,x_n,y} $ where $\afoscopesym$ is a scope symbol,
  and $\afoscopecxt$ an applicative context that may contain other scope symbols.
  \lTRSs\ correspond to systems of supercombinators that are obtained by `fully-lazy \lambdalifting' \cite{hugh:1982:report,peyt:1987}. 
  
  This construction of first-order term representations of \lambdaterms\ 
  guarantees that every redex of a term in the representing \lTRS\
  corresponds to a \betaredex\ via the translation to \lambdaterms. 
  Indeed, the \lo\ redex on a \lTRS\ term representation of a \lambdaterm\ corresponds to the \lo\ \betaredex\ on the represented \lambdaterm. 
  But conversely, typically not all \betaredex{es} in a \lambdaterm\ will correspond directly to a redex on the \lTRSrepresentation.
  Crucially, after a number of (typically \lo) \betareduction\ steps $\bteri{0} \mred \bteri{n}$ 
    have been simulated from a \lTRS\nb-term $\bteri{0}$ that represents a \lambdaterm~$\alter$,
  every redex $\folapp{\afoscope{\bteri{1} \ldots \bteri{n}}}{\cter}$ in $\bteri{n}$ will involve a scope symbol $\afoscopesym$ that, 
  under the translation to \lambdacalculus,
  represents the scope of $\cvar$ in a a subterm~$\labs{\cvar}{\clter}$ that already occurred (modulo \alphaconversion) in $\alter$. 
  
  \smallskip


While the linear-depth-increase statement will be shown for rewrite sequences in a \TRS\
for simulating \lo\ \betareduction, its transfer to \lambdaterms\ via a lifting theorem along \lambdalifting\
will only be sketched. The lifting and projection statements needed for this part are similar
to proofs for the correctness of fully-lazy \lambdalifting\ as described by Balabonski~\cite{bala:2012}.   

Notwithstanding the linear-depth-increase property for \lo\ rewrite sequences that we show here,
it is important to realize that `size explosion' (exponential size increase) can in fact take place.
There are infinitely many \lambdaterms~$\alteri{n}$ of size $\bigOmicron{n}$ (linear size in $n$)
such that $\alteri{n}$ reduces in $n$ \lo\ \betareduction\ steps
to a term of size $\bigOmega{2^n}$ (properly exponential size in $n$). 

\begin{example}[`size-exploding' family under \lo\ $\beta$-red., from Accattoli and \mbox{Dal~Lago~\cite{acca:lago:2014:beta-reduction-invariant:LICS,acca:lago:2016}}] 
    \label{expl:lo:size:exploding:family}
  Consider the following two families $\family{\alteri{n}}{n\in\nat}$ and $\family{\blteri{n}}{n\in\nat}$ of \lambdaterms:  
  \begin{center}
    $
  \begin{aligned}
    \alteri{0} & \defdby \lapp{\bvar}{\lapp{\avar}{\avar}} \punc{,}
      & 
    \blteri{0} & \defdby 
                         \lapp{\bvar}{\lapp{\avar}{\avar}} \punc{,}
    \\
    \alteri{n+1} & \defdby \lapp{(\labs{\avar}{\alteri{n}})}{\alteri{0}} 
      \quad \text{(for $n\in\nat$)} \punc{,} \qquad\qquad
      & \hspace*{-6ex}
    \blteri{n+1} & \defdby \lapp{\bvar}{\lapp{\blteri{n}}{\blteri{n}}}  
      & \quad \text{(for $n\in\nat$)} \punc{.}
  \end{aligned} 
   $
  \end{center}
  Every term $\blteri{n}$, for $n\in\nat$ is a normal form, 
  and it holds that:
  \begin{equation}\label{eq:expl:lo:size:exploding:family}
    \substinfor{\blteri{n}}{\avar}{\blteri{0}} 
      \: = \:
    \blteri{n+1} 
      \quad \text{(for all $n\in\nat$).}   
  \end{equation}
  This can be shown by induction.
  Furthermore the term $\blteri{n}$ is the normal form of $\alteri{n}$, for $n\in\nat$,
  because there is a \lo\ \betareduction\ rewrite sequence of length~$n$ from $\blteri{n}$ to $\alteri{n}$:
  \begin{center}
    $ 
    \alteri{n}
      \;\lobetaredn{n}\;
    \blteri{n} \quad \text{(for all $n\in\nat$).}
    $  
  \end{center}
  The induction step in a proof of this statement can be verified as follows:
  \begin{center}
    $
    \begin{alignedat}{4}
      \alteri{n+1}
        =
      \lapp{(\labs{\avar}{\alteri{n}})}{\alteri{0}}
        & \;\lobetared\;\, & &
      \substinfor{\alteri{n}}{\avar}{\alteri{0}} &
      \\[-1ex]  
        & \;\lobetaredn{n}\;\, & &
      \substinfor{\blteri{n}}{\avar}{\alteri{0}}   
        & & \qquad 
      \parbox{\widthof{(follows by the ind.\ hyp.\ $\alteri{n} \lobetaredn{n} \blteri{n}$}}%
             {(follows by the ind.\ hyp.\ $\alteri{n} \lobetaredn{n} \blteri{n}$
              \\\phantom{(}%
              by using that $\alteri{0}$ is normal form)}
      \\[-0.35ex]  
        & \; = \;\, & & 
      \substinfor{\blteri{n}}{\avar}{\blteri{0}}   
        & & \qquad 
      \text{(by definition of $\blteri{0}$ and $\alteri{0}$ coincide)}
      \\[-0.35ex]  
        & \;=\; & & 
      \blteri{n+1} & & \qquad
        \text{(by using \eqref{eq:expl:lo:size:exploding:family})} \punc{.}
    \end{alignedat}
    $
  \end{center}
  Finally, the size of terms in $\family{\alteri{n}}{n}$ grows linearly,
  and the size of terms $\family{\blteri{n}}{n}$ exponentially: 
  \begin{center}
    $
    \begin{aligned}
      \size{\alteri{n}} 
        & =
      5 + 8 n \in \bigOmicron{n} \punc{,} 
     &   
      \size{\blteri{n}} & = 2^{n+4} \in \bigOmega{2^n} \punc{,}
    \end{aligned}
    $
  \end{center}
  where by the size of the \lambdaterm\ we understand the size of its syntax tree plus the number of symbols in variable occurrences.   
  %
\end{example}

Therefore naive implementations of \lo\ \betareduction\ that operate
directly on \lambdaterms\ cannot avoid exponential runtime cost, simply because the result
of $n$ \lo\ \betareduction\ steps can be exponentially larger than the initial term.
However, Accattoli and Dal Lago recognized that a \lo\ \betareduction\ sequence 
can also be implemented in the linear substitution calculus
by carrying out explicit-substitution steps of \betaredex\ contractions in a lazy manner
that only guarantees that the pattern of the next \lo\ \betaredex\ is always visibly created. 
They show that, in this way, the size of intermediate \lambdaterm\ representations stays 
polynomially bounded by the length of the sequence. 

The linear-depth-increase property along \lo\ rewrite sequences suggests 
an alternative proof, which is based on graph rewriting, of the result by Accattoli and Dal~Lago.
The crucial idea is to use directed acyclic graph representations of terms in a \lTRS\
with the property that the depth of a graph (which is defined due to acyclicity) corresponds closely to the depth of the represented term. 
Then the power of sharing is deployed to avoid size explosion of the graph representations.
In Section~\ref{sec:idea:graph:implementation} we sketch the basic idea for 
such a graph implementation, and estimate its complexity.


\paragraph{Overview.}
  In Section~\ref{sec:lo-simulation}
    we introduce representations of \lambdaterms\ as first-order terms,
    and define a TRS that simulates the \lo\ strategy 
    (and a \nondeterministic\ generalization) for \betareduction\ 
    on \lambdaterm\ representations. 
  In Section~\ref{sec:lTRSs} 
    we define \lTRSs, that is, representations of \lambdaterms\ as orthogonal term write systems
    that are closely related to supercombinator representations.
    We also define the expansion of \lTRS\ representations into first-order term represenations of \lambdaterms.
  In Section~\ref{sec:lo-simulation:lTRSs}
    we adapt the \lo\ \betareduction\ simulation TRS from Section~\ref{sec:lo-simulation} 
    to \lTRS\ representations of \lambdaterms.  
  In Section~\ref{sec:depth:increase}
    we show the linear-depth-increase result for simulated \lo\ \betareduction\ sequences:
    we prove it for all rewrite sequences in the simulation TRS on \lTRS\ respresentations.
  In Section~\ref{sec:transfer:lambda-calculus}
    we sketch how the linear-depth-increase result can be transferred from \lTRS\ representations to \lambdaterms.
  In Section~\ref{sec:idea:graph:implementation}
    we briefly lay out our idea of using the linear-depth-increase result 
    for developing an efficient graph rewriting system
    for simulating \lo\ \betareduction\ on \lTRS\ representations.

\section{Preliminaries}
  \label{prelims}

By $\nat = \setexp{0,1,2,\ldots}$ we denote the natural numbers including $0$.
For first-order term rewriting systems, terminology and notation from the standard text \cite{terese:2003} will be used.
Below we summarize the most important concepts and the notation that we will use. 

\paragraph{First-order signatures, variables, and context holes.}
  A \emph{(first-order) signature} $\boldsymbol{\asig} = \pair{\asig}{\sarity}$ is a set of function symbols that is equipped with an arity function $\sarity \funin \asig \to \nat$. 
  Such a signature may contain \emph{constants} by which we mean function symbols of arity~$0$. 
  When referring to signatures, we will mostly keep the arity function implicit, and write $\asig$ for $\boldsymbol{\asig}$. 
  
  In addition to first-order signatures we will use
  countably infinite sets $\vars$ of \emph{variables},
  and a countably infinite set $\holes \defdby \setexp{ \holei{1},\holei{2},\ldots }$ of \emph{context hole symbols}
  each of which carries an index.
  We will always tacitly assume that the set~$\vars$, the set~$\holes$, 
  and the union of the set of function symbols in signature $\asigi{1}$, $\asigi{2}$, \ldots, under consideration
  are disjoint.

\paragraph{Terms and contexts over first-order signatures.}   
  By $\termsover{\asig,\vars}$ we denote the set of \emph{terms over signature $\asig$ and set $\vars$ of variables} 
  that are formed with function symbols in $\asig$ and variables in $\vars$.
  By $\termsover{\asig} \defdby \termsover{\asig,\emptyset}$ (with an empty set of variables) we define 
  the set of \emph{ground terms} over $\asig$,
  that is, the set of terms that are formed from only the function symbols in $\asig$.
  We use $\syntequal$ to indicate syntactic equality of terms.
  
  For $n\in\nat$, $n>0$,  we denote by $\contextsnover{n}{\asig,\vars}$ the set of \emph{contexts} 
  that are formed with function symbols in $\asig$ and variables in $\vars$,
  and with $n$ kinds of holes $\holei{1},\ldots,\holei{n}$.
  Note that an \nary{n} context may contain zero, one or more occurrences of each of the $n$ holes;
  so it does not need to have any hole occurrence at all, in which case it is a term.
  As a consequence also $\termsover{\asig,\vars} \subsetneqq \contextsnover{n}{\asig,\vars} \subsetneqq \contextsnover{n+1}{\asig,\vars}$ holds for all $n\in\nat$, $n>0$.
  We also use $\syntequal$ to indicate syntactic equality of contexts.
  For $n\in\nat$, $n>0$, we define by $\contextsnover{n}{\asig} \defdby \contextsnover{n}{\asig,\emptyset}$  
  the set of \nary{n} \emph{ground contexts} over $\asig$,
  that is, the set of \nary{n} contexts that are formed from the function symbols in $\asig$.
  By $\contextsover{\asig,\vars} \defdby \bigcup_{n\in\nat, n>0} \contextsnover{n}{\asig,\vars}$
  we define the set of contexts over $\asig$ and $\vars$ and with some of the holes in $\holes$.
  Note again that $\termsover{\asig,\vars} \subsetneqq \contextsover{\asig,\vars}$ holds.
  By $\contextsover{\asig}$ we denote the set of ground contexts over $\asig$. 
  
  For unary (\nary{1}) contexts $\acxt\in\contextsnover{1}{\asig,\vars}$ we permit to drop the subscript `1'
  from the context hole $\holei{1}$, which then is the single context hole $\holei{1}$ that may occur in $\acxt$,
  and thus we permit to write $\hole$ for $\holei{1}$. 
  
  By $\contextsnover{n,1}{\asig,\vars}$ we denote the subset of $\contextsnover{n}{\asig,\vars}$ that is formed
  by the \emph{\underline{\smash{linear}}} \nary{n} contexts in which every context hole $\holei{i}$, for $i\in\setexp{1,\ldots,n}$
  is only permitted to occur once. By $\contextsnover{n,1}{\asig}$ we denote the set of linear, \nary{n}, ground contexts over $\asig$.
  
  Let $\acxt\in\contextsnover{n}{\asig,\vars}$ be an \nary{n} context.
  Then for terms $\bteri{1},\ldots,\bteri{n}\in\termsover{\asig,\vars}$
  we denote by $\cxtap{\acxt}{\bteri{1},\ldots,\bteri{n}}$ the term in $\termsover{\asig,\vars}$ 
  that results from $\acxt$ by replacing each hole $\holei{i}$ in $\acxt$ by $\bteri{i}$, for all $i\in\setexp{1,\ldots,n}$.
  Similarly, for contexts $\acxti{1},\ldots,\acxti{n}\in\contextsnover{m}{\asig,\vars}$ 
  we denote by $\cxtap{\acxt}{\acxti{1},\ldots,\acxti{n}}$ the context in $\contextsnover{m}{\asig,\vars}$ 
  that results from $\acxt$ by replacing each hole $\holei{i}$ in $\acxt$ by $\acxti{i}$, for all $i\in\setexp{1,\ldots,n}$.

\paragraph{Depth and size of terms. Depth, hole depth, and size of contexts.} 
  For a term $\bter$ we denote by $\depth{\bter}$ the \emph{depth of $\bter$}
  by which we mean the length of the longest (cycle-free) path in the syntax tree of $\ater$ from the root to a leaf. 
  For a context $\acxt$ the \emph{depth $\depth{\acxt}$ of $\acxt$} is defined analogously.  
  By the \emph{size $\size{\bter}$} of a term $\bter$, and the \emph{size $\size{\acxt}$} of a context $\acxt$
  we mean the size of the syntax tree of $\bter$, and $\acxt$, respectively. 
  
  By the \emph{hole depth $\holedepth{\acxt}$} of a context $\acxt$
    we mean the length of the longest (cycle-free) path in the syntax tree of $\acxt$ from the root to a leaf at which some hole occurs.
  We will use the following two lemmas that express easy properties concerning the connection between depth and hole depth in
  filled contexts.   

  \begin{lemma}\label{lem:holedepth:vs:depth}
    $\holedepth{ \acxtap{\ateri{1},\ldots,\ateri{n},\Box} } \:\le\: \depth{\acxt}\:$
    for all terms $\ateri{1},\ldots,\ateri{n}\in\termsover{\asig}$, where $n\in\nat$,
    and all contexts $\acxt\in\contextsnover{n+1}{\asig}$ in which there is at least one occurrence of $\holei{n+1}$. 
  \end{lemma}

  \begin{lemma}\label{lem:depth:cxtap:vs:depth:holedepth}
      $\depth{ \acxtap{\ateri{1},\ldots,\ateri{n}} }  
         \: = \: 
       \max \descsetexp{ \depth{\acxt},\, \holedepth{\acxt} + \depth{\ateri{i}} }
                       { i\in\setexp{1,\ldots,n} }\:$
    for contexts $\acxt\in\contextsnover{n}{\asig}$, and terms $\ateri{1},\ldots,\ateri{n}\in\termsover{\asig}$.
  \end{lemma}

\paragraph{Term rewriting systems.}
  A \emph{term rewriting system} (\TRS) is a pair $\pair{\asig}{\arules}$ that consists of a signature~$\asig$,
  and a set $\arules \subseteq \termsover{\asig,\vars}\times\termsover{\asig,\vars}$ of pairs of terms over $\asig$ that are called \emph{rules}.
  The rules are subject to two conditions: the left-hand side of a rule is not a variable,
  and the variables that occur on the right-hand side of a rule are a subset of the variables that occur on the left-hand side.
  
  A term $\ater\in\termsover{\asig,\vars}$ is a \emph{normal form of} a TRS~$\aTRS = \pair{\asig}{\arules}$ if no rule of $\aTRS$ is applicable to $\ater$.

\paragraph{Notation for rewrite relations.}
  Let $\aTRS$ be a \TRS\ with rewrite relation $\sred$. 
  Then we denote the many-step (zero, one or more step) rewrite relation of $\aTRS$ by $\smred$,
  and the $n$ step rewrite relation of $\aTRS$ by $\sredn{n}$, for $n\in\nat$. 
  By $\snfred$ we mean the many-step relation of $\aTRS$ to a normal form. 
  We will use the same notation convention for rewrite relations that are indexed by name abbreviations.

\paragraph{$\lambda$-calculus.}
  Contrasting with terms in a \TRS\ (first-order terms), 
  \lambdaterms\ are viewed as $\alpha$\nb-equivalence classes of pseudo-term representations with names for bound variables.
  For \lambdaterms, $\sbetared$ denotes \betareduction, and $\slobetared$ \lo\ \betareduction.

  A \betareduction\ redex in a \lambdaterm~$\alter$ is called \emph{\lo} if it is to the left, or outside of any other redex in $\alter$.
  The \emph{\lo\ reduction strategy} for the \lambdacalculus\ is a 1-step strategy that,
  for a given \lambdaterm~$\alter$ contracts the \lo\ \betaredex\ in $\alter$.

\paragraph{Termination$\hspace*{0.75pt}$/$\hspace*{0.75pt}$strong normalization of rewrite relations.} 
  Let $\sred$ be the rewrite relation (of a TRS or of \lambdacalculus), and let $\bter$ be a term.  
  We say that $\sred$ \emph{terminates from $\ater$}, 
     and also that \emph{$\sred$ is strongly normalizing from $\bter$}
  if there is no infinite rewrite sequence from $\ater$
  (and consequently all sufficiently long rewrite sequences from $\bter$ lead to a normal form with respect to $\sred$). 
  We say that $\sred$ \emph{terminates}, and also that \emph{is strongly normalizing}, if $\sred$ does not enable infinite rewrite sequences.

\section{Simulation of leftmost-outermost rewrite sequences}
  \label{sec:lo-simulation}

We start with the formal definition of first-order representations of \lambdaterms,
called \lambdaterm\ representations,
before describing a \TRS\ for simulating \lo\ \betareduction\ on \lambdaterm\ representations.

\begin{definition}[\lambdaterm\ representations, denoted \lambdaterms]%
    \label{def:ltermrep}
  Let
    $
    \asiglambda 
      \,\defdby\, 
    \descsetexp{ \afovari{j} }{ j\in\nat }
      \cup 
    \setexp{ \sfolapp }  
      \cup
    \descsetexp{ \sfolabs{\afovari{j}} }{ j\in\nat } 
    $
  be the signature that
  consists of the \emph{variable} symbols $\afovari{j}$, with $j\in\nat$, which are constants (nullary function symbols),
  the binary \emph{application symbol}~$\sfolapp$,
  and the unary \emph{named abstraction} symbols $\sfolabs{\afovari{j}}$, for $j\in\nat$.
  
  Now by a \emph{\lambdaterm\ representation} (a \emph{(first-order) representation of a \lambdaterm})
  we mean a ground term in $\termsover{\asiglambda}$. 
  A \lambdaterm\ representation $\ater$ denotes, by reading its symbols in the obvious way,
  and interpreting occurrences of variable symbols $\afovari{j}$ that are not bound,
  as the variable names $\avari{j}$, a unique \lambdaterm\ $\denlter{\ater}$.

\end{definition}

\begin{example}
  $\folabs{\afovari{0}}{\afovari{0}}$,
  $\folabs{\afovari{1}}{\folabs{\afovari{2}}{\afovari{1}}}$, and
  $\folabs{\afovari{0}}{\folabs{\afovari{1}}{\folabs{\afovari{2}}{\folapp{\folapp{\afovari{0}}{\afovari{1}}}{\folapp{\afovari{1}}{\afovari{2}}}}}}$
  are \lambdaterm\ representations that denote the \lambdaterms\
  $I = \labs{\avar}{\avar}$,
  $K = \labs{\avar\bvar}{\avar}$,
  and 
  $S = \labs{\avar\bvar\cvar}{\lapp{\lapp{\avar}{\cvar}}{(\lapp{\bvar}{\cvar})}}$,
  respectively.
\end{example}
    
Below we formulate a \TRS\ that facilitates the simulation, on \lambdaterm\ representations,
of the evaluation of \lambdaterms\ according to the \lo\ strategy.
We introduce this TRS as a motivation for a similar simulation \TRS\ on super\-com\-bi\-na\-tor-based \lambdaterm\ representations
that is introduced later in Definition~\ref{def:losimTRS}, and that will be crucial for obtaining the linear depth-increase result. 
While the \TRS\ is designed to reason about \lo\ rewrite sequences, it actually permits the simulation of generalizations of the \lo\ rewrite sequences:
\betaredex{es} may also be contracted if they are \lo\ in right subterms immediately below stable parts of the term. 
This is because the search process for \lo\ redexes will be initiated again in parallel positions just below stable spines. 
We will therefore use the abbreviation `lop' in symbol names to hint at 
the \nondeterministic\ evaluation strategy `\underline{l}eftmost-\underline{o}utermost, iterated in \underline{p}arallel positions below stable parts of the term'. 

The idea behind the simulation \TRS\ is as follows.
The process is started on a term $\lopstart{\ater}$, where $\ater$ is a \lambdaterm\ representation that is to be evaluated.
First $\lopstart{\ater}$ is initialized to $\lopn{0}{\ater}$ (via the rule ($\scriptinit$)), 
where the index (which here is $0$) will be used as a lower bound for yet unused variable indices. 
Then a term $\ateri{0}$ with an outermost applications in an expression $\lopn{n}{\ateri{0},\bteri{1},\ldots,\bteri{n}}$
is uncurried into a representing expression with a stack of applications
(by steps of the rule ($\scriptdescendinfolapp$))
when descending over applications along the spine of the term 
until a variable or an abstraction is encountered 
(detected by one of the rules ($\scriptdescendinfolabs$), ($\scriptvar_0$), or ($\scriptvar_{n+1}$)).
If an abstraction occurs, and the expression contains an argument for this abstraction,
the representation of a \lo\ \betaredex\ has been detected, 
which is then contracted by a step corresponding to a \betacontraction\
(applying the rule ($\scriptcontract$)); the evaluation continues similarly from there on.
If there is no argument for such an abstraction, 
then it is part of a head normal form context,
and the evaluation 
                   descends into the abstraction (applying the rule ($\scriptdescendinfolabs$)) 
to proceed recursively on the subterm.
If a variable occurs on the left end of the spine (detected by one of the rules ($\scriptvar_0$) or ($\scriptvar_{n+1}$)), 
then a head normal form context has been detected,
which consists of a single variable (in case the applicable rule is ($\scriptvar_0$)),
or of the variable together with the recently uncurried applications (in case the applicable rule is ($\scriptvar_{n+1}$)).
In the first case evaluation stops in the present subterm,
whereas in the second case the simulating evaluation can continue (after applying ($\scriptvar_{n+1}$)), 
possibly in parallel, from any immediate subterm of one of the recently uncurried applications.
The rules:\label{def:losim:TRS:ltermreps}%
\begin{align*}
  \lopstart{\avar}
    \;\; & \sred \;\;
  \lopn{0}{\avar}
  \tag{$\scriptinit$}
  \\
  \lopn{n}{\folapp{\avar}{\bvar},\bvari{1},\ldots,\bvari{n}}
    \;\; & \sred \;\;
  \lopn{n+1}{\avar,\bvar,\bvari{1},\ldots,\bvari{n}}  
  \tag{$\scriptdescendinfolapp$}
  \displaybreak[0]\\
  \lopn{0}{\folabs{\afovari{j}}{\avar}}
    \;\; & \sred \;\;
  \folabs{\afovari{j}}{\lopn{0}{\avar}}
  \tag{$\scriptdescendinfolabs$}
  \displaybreak[0]\\
  \lopn{n+1}{\folabs{\afovari{j}}{\avar},\bvari{1},\bvari{2},\ldots,\bvari{n+1}}
    \;\; & \sred \;\;
  \lopn{n}{\subst{\avar}{\afovari{j}}{\bvari{1}},\bvari{2},\ldots,\bvari{n+1}}
  \tag{$\scriptcontractn{n+1}$}
  \displaybreak[0]\\ 
  \lopn{0}{\afovari{j}}
    \;\; & \sred \;\;
  \afovari{j}
  \tag{$\scriptvar_0$}
  \displaybreak[0]\\
  \lopn{n+1}{\afovari{j},\bvari{1},\ldots,\bvari{n+1}}
    \;\; & \sred \;\;
  \folapp{\ldots{\folapp{\afovari{j}}{\lopn{0}{\bvari{1}}}}\ldots}
         {\lopn{0}{\bvari{n+1}}}    
  \tag{$\scriptvar_{n+1}$}
\end{align*}
have to be extended with appropriate rules for $\ssubst$ that implement capture-avoiding substitution,
which induce a rewrite relation $\ssubstred$. 
We do not provide those rules here, because the rewrite system above only serves us as a stepping stone
for a similar rewrite system in Section~\ref{sec:lo-simulation:lTRSs}
that operates on supercombinator representations of \lambdaterms\ (\lTRSs)
where substitution can be organized as context-filling.


Based on the simulation \TRS, we denote by $\scontractred$ the rewrite relation that is induced by the rule scheme ($\scriptcontractn{n+1}$) for $n\in\nat$.
It defines steps that initiate the simulation of a \betareduction\ step which then proceeds with $\ssubstred$ steps 
  that carry out the substitution in the contraction of the \betaredex.
By $\sinitred$, $\sdescendinfolappred$, $\sdescendinfolabsred$, and $\svarred$
we designate the rewrite relations that are induced by the rules
$(\scriptinit)$, $(\scriptdescendinfolapp)$, $(\scriptdescendinfolabs)$, and $(\scriptvar_n)$ for some $n\in\nat$, respectively.
By $\ssearchred$ we denote the union of $\sinitred$, $\sdescendinfolappred$, $\sdescendinfolabsred$, and $\svarred$,
because they organize the search 
for the next \loredex\ or of an outermost redex. 
Finally, we denote by $\slopsimred$ the rewrite relation that is induced by the entire TRS.


The labels for $\scontractred$ and $\ssearchred$ are motivated as follows:
In a $\scontractred$ step the representation of a \loredex\ is contracted,
or the representation of a `stacked' outermost redex that is \lo\ below a stable
part of the term (and  that is bound to become
a \loredex\ at some later stage, at least if the term has a normal form). 
And a $\ssearchred$ step is part of the search in the term 
for the representation of the next \loredex\ or of an outermost redex
that is bound to become a \loredex\ later.
 

\begin{example}\label{ex:lopsimred:ltermrep}
  We consider the \lambdaterm\
  $\alter = \labs{x}{\lapp{(\labs{y}{y})}{(\lapp{(\labs{z}{\labs{w}{\lapp{w}{z}}})}{x})}}$.
  Evaluating $\alter$ with the \lo\ rewrite strategy, symbolized by the rewrite relation $\slobetared$, gives rise to the rewrite sequence:
  \begin{gather}\label{rewseq1:ex:lopsimred:ltermrep}
    \labs{x}{\underline{\lapp{(\labs{y}{y})}{(\lapp{(\labs{z}{\labs{w}{\lapp{w}{z}}})}{x})}}}
      \;\lobetared\;
    \labs{x}{\underline{\lapp{(\labs{z}{\labs{w}{\lapp{w}{z}}})}{x}}}
      \;\lobetared\;
    \labs{x}{\labs{w}{\lapp{w}{x}}}    
  \end{gather}
  where the underlinings symbolize the \betaredex{es} that are contracted in the next step. 
  The term:
  \begin{equation*}
    \ater =
    \folabs{\afovari{0}}
           {\folapp{\folabs{\afovari{1}}{\afovari{1}}}
                   {\folapp{\folabs{\afovari{2}}
                                   {\folabs{\afovari{3}}
                                           {\folapp{\afovari{3}}{\afovari{2}}}}}
                           {\afovari{0}}}}
  \end{equation*}                         
  denotes $\alter$, that is, $\denlter{\ater} = \alter$; other variable names are possible modulo `\alphaconversion'.                    
  Simulating this \lo\ rewrite sequence by means of the simulation TRS above
  %
  \begin{alignat*}{2}
    \lopstart{\ater}
     & \;\initred\; & &
       \lopn{0}{\folabs{\afovari{0}}
                         {\folapp{\folabs{\afovari{1}}{\afovari{1}}}
                                 {\folapp{\folabs{\afovari{2}}
                                                 {\folabs{\afovari{3}}
                                                         {\folapp{\afovari{3}}{\afovari{2}}}}}
                                         {\afovari{0}}}}}
     \\
     & \;\descendinfolabsred\; & & 
       \folabs{\afovari{0}}
              {\lopn{0}{\folapp{\folabs{\afovari{1}}{\afovari{1}}}
                                 {\folapp{\folabs{\afovari{2}}
                                                 {\folabs{\afovari{3}}
                                                         {\folapp{\afovari{3}}{\afovari{2}}}}}
                                         {\afovari{0}}}}}
    \displaybreak[0]\\
     & \;\descendinfolappred\; & & 
       \folabs{\afovari{0}}
              {\lopn{1}{\folabs{\afovari{1}}{\afovari{1}},
                         {\folapp{\folabs{\afovari{2}}
                                         {\folabs{\afovari{3}}
                                                 {\folapp{\afovari{3}}{\afovari{2}}}}}
                                 {\afovari{0}}}}}
    \displaybreak[0]\\
    & \;\contractred\; & & 
       \folabs{\afovari{0}}
              {\lopn{0}{\subst{\afovari{1}}
                                {\afovari{1}}
                                {\folapp{\folabs{\afovari{2}}
                                                {\folabs{\afovari{3}}
                                                        {\folapp{\afovari{3}}{\afovari{2}}}}}
                                        {\afovari{0}}}}}
    \displaybreak[0]\\
    & \;\substred\; & & 
       \folabs{\afovari{0}}
              {\lopn{0}{\folapp{\folabs{\afovari{2}}
                                         {\folabs{\afovari{3}}
                                                 {\folapp{\afovari{3}}{\afovari{2}}}}}
                                 {\afovari{0}}}}
    \displaybreak[0]\\
    & \;\descendinfolappred\; & & 
       \folabs{\afovari{0}}
              {\lopn{1}{\folabs{\afovari{2}}
                                 {\folabs{\afovari{3}}
                                         {\folapp{\afovari{3}}{\afovari{2}}}},
                          \afovari{0}}}
    \displaybreak[0]\\
    & \;\contractred\; & & 
       \folabs{\afovari{0}}
              {\lopn{0}{\subst{\folabs{\afovari{3}}
                                        {\folapp{\afovari{3}}{\afovari{2}}}}
                                {\afovari{2}}
                                {\afovari{0}}}}
    \displaybreak[0]\\
    & \;\substmred\; & & 
       \folabs{\afovari{0}}
              {\lopn{0}{\folabs{\afovari{3}}
                                 {\folapp{\afovari{3}}{\afovari{0}}}}}
    \displaybreak[0]\\
    & \;\descendinfolabsred\; & & 
       \folabs{\afovari{0}}
              {\folabs{\afovari{3}}
                      {\lopn{0}{\folapp{\afovari{3}}{\afovari{0}}}}}
    \displaybreak[0]\\
    & \;\descendinfolappred\; & & 
       \folabs{\afovari{0}}
              {\folabs{\afovari{3}}
                      {\lopn{1}{\afovari{3},\afovari{0}}}}
    \displaybreak[0]\\
    & \;\varnred{1}\; & & 
       \folabs{\afovari{0}}
              {\folabs{\afovari{3}}
                      {\folapp{\afovari{3}}{\lopn{0}{\afovari{0}}}}}
    \\
    & \;\varnred{0}\; & & 
       \folabs{\afovari{0}}
              {\folabs{\afovari{3}}
                      {\folapp{\afovari{3}}{\afovari{0}}}}
  \end{alignat*}        
  Note that the $\scontractred$ steps indeed initiate, and the $\ssubstred$ steps complete,
  the simulation of corresponding \betareduction\ steps in the $\slobetared$ rewrite sequence on \lambdaterms\ above,
  while the other steps organize the search for the next (\lambdaterm\ representation of a) \lo\ \betaredex.
  The $\slobetared$ rewrite sequence \eqref{rewseq1:ex:lopsimred:ltermrep}
  can be viewed as the projection of the $\lopsimred$ rewrite sequence above
  under an extension of the denotation operation $\denlter{\cdot}$ on \lambdaterm\ representations yielding \lambdaterms\
  (which works out substitutions, and interprets uncurried application expressions $\lopn{n}{\ater,\bteri{1},\ldots,\bteri{n}}$ appropriately).
  Hereby $\scontractred$ steps project to $\slobetared$ steps,
  but all other steps vanish under the projection.
  %
\end{example}

While the \TRS\ above facilitates the faithful representation of \lo\ rewrite sequences on \lambdaterms\
(which can be formulated formally analogous to Proposition~\ref{prop:lifting:lobeta:lo-losim:rewseqs}, see page~\pageref{lem:lifting}),
it does not lend itself well to the purpose of proving the linear-depth-increase result.
This is because it is not readily clear which invariant for reducts $\bter$ of a term $\ater$ 
in rewrite sequences $\arewseq \funin \ater \lopsimmred \bter \lopsimred \cter$
could make it possible to prove that the depth increase in the final step of $\arewseq$ 
is bounded by a constant $d$ that only depends on the initial term $\ater$ of the sequence (but not on $\bter$).
In the next section, however, we develop a concept that can overcome this problem.
We define extensions of first-order \lambdaterm\ representations
in which the abstraction parts of representations of \lo\ \betaredex{es} 
are built up from contexts that trace back to contexts in the initial term of the rewrite sequence.
This will guarantee that after a \lo\ \betareduction\ rewrite sequence $\alteri{0} \lobetaredn{n} \alteri{n}$
a scope part of the abstraction part $\labs{\cvar}{\clter}$ of the next \lo\ \betaredex\ $\lapp{(\labs{\cvar}{\clter})}{\dlter}$ in $\alteri{n}$ 
does already occur in $\alteri{0}$.

\section{\protect\lTRS\ representations of lambda terms}
  \label{sec:lTRSs}

We now introduce \lTRS{s} as orthogonal \TRSs\ that are able to represent \lambdaterms.
The basic idea is that, for a \lambdaterm~$\alter$, function symbols that are called `scope symbols'
are used to represent abstraction scopes.
Hereby the scope of an abstraction $\labs{\avar}{\clter}$ in $\alter$ 
includes the abstraction $\sslabs{\avar}$ and all occurrences of the bound variable $\avar$, but may leave room
for subterms in $\clter$ without occurrences of $\avar$ bound by the abstraction.
For example, the \lambdaterm~$\labs{\avar}{\lapp{\lapp{\lapp{\cvar}{\avar}}{\bvar}}{\avar}}$
may be denoted as the term $\afoscope{\cvar,\bvar}$
where the binary scope symbol $\afoscopesym$ represents the scope context $(\labs{\avar}{\lapp{\lapp{\lapp{\holei{1}}{\avar}}{\holei{2}}}{\avar}})$.
In our formalization of \lambdaterm\ representations the free variables $\cvar$ and $\bvar$
will be replaced by variable constants, yielding for example the term  $\afoscope{\afovari{2},\afovari{1}}$.
Furthermore, scopes are assumed to be strictly nested.
Every scope symbol defines a rewrite rule that governs the behavior of the application of the scope to an argument.
In the case of the \lambdaterm~$\labs{\avar}{\lapp{\lapp{\lapp{\cvar}{\avar}}{\bvar}}{\avar}}$
this leads to the first-order rewrite rule 
$\folapp{\afoscope{\cvar,\bvar}}{\avar} \red \folapp{\folapp{\folapp{\cvar}{\avar}}{\bvar}}{\avar}$
for the scope symbol $\afoscopesym$ that corresponds to the \lambdaterm\ scope context $(\labs{\avar}{\lapp{\lapp{\lapp{\holei{1}}{\avar}}{\holei{2}}}{\avar}})$.
Such a translation facilitates a correspondence between \betareduction\ steps in the \lambdacalculus,
and first-order term rewriting steps on terms with adequately defined scope symbols. In the example here the correspondence is between the steps: 
\begin{alignat}{3}
  \lapp{(\labs{\avar}{\lapp{\lapp{\lapp{\cvar}{\avar}}{\bvar}}{\avar}})}
       {\alter} 
    & \;\; \betared \;\;  & & 
  \lapp{\lapp{\lapp{\cvar}{\alter}}{\bvar}}{\alter}     
    \tag*{(\betareduction\ in the \lambdacalculus),}
  \\
  \folapp{\afoscopesym}{\ater}
    & \;\; \red \;\; & &
  \folapp{\folapp{\folapp{\cvar}{\ater}}{\bvar}}{\ater}
    \tag*{(application of the corresponding \lTRS-rule),}
\end{alignat}
provided that the \lTRS\nb-term $\ater$ represents the \lambdaterm~$\alter$.

\lTRSs\ are \TRSrepresentation{s} of systems of supercombinators that are obtained by the \lambdalifting\ transformation.
I have been introduced to these \lambdaterm\ representations by orthogonal \TRSs\ by Vincent van Oostrom
(personal communication, in the framework of the NWO-research project `Realising Optimal Sharing',
 and our collaboration on `nested term graphs'~\cite{grab:oost:2015}). 
He strongly shaped my understanding of them, and pointed me to the studies of optimal reduction for weak \betareduction\ 
(\betareduction\ outside of abstractions or in `maximal free' subexpressions)
by Blanc, L\'{e}vy, and Maranget \cite{blan:levy:mara:2005}.
Also, he encouraged work by Balabonski~\cite{bala:2012} on characterizations of optimal-sharing implementations for weak \betareduction\ 
by term labelings.
Later I discovered the direct connection with `fully-lazy \lambdalifting',
which was introduced in the early 1980-ies by Hughes \cite{hugh:1982:report,hugh:1982}. 

\begin{definition}[\lTRS{s}]\label{def:lTRS}
  A \emph{\lTRS}  
  is a pair~$\alTRS = \pair{\asig}{\arules}$, 
  where $\asig$ is a signature containing the binary application symbol~$\sfolapp$,
  and the \emph{scope symbols} in $\asigmin \defdby \asig \setminus \setexp{\sfolapp}$,
  and where $\arules = \descsetexp{ \ruleof{\afoscopesym} }{ \afoscopesym\in\asigmin }$
  consists of the \emph{defining rules} $\ruleof{\afoscopesym}$ for scope symbols $\afoscopesym\in\asigmin$ with arity~$k$
  that are of the form: 
  \begin{align*}
    (\ruleof{\afoscopesym}) \;\;\;\;
    \folapp{\afoscope{\avari{1},\ldots,\avari{k}}}{\bvar}
      \;\sred\;
    \afoscopecxtap{\avari{1},\ldots,\avari{k},\bvar}
  \end{align*}
  with $\afoscopecxt$ a $(k+1)$\nb-ary context of $\alTRS$ 
  that is called the \emph{scope context} for $\afoscopesym$.
  For scope symbols $\afoscopesym,\bfoscopesym\in\asigmin$ 
  we say that $\afoscopesym$ \emph{depends on} the scope symbol $\bfoscopesym$,
  denoted by $\afoscopesym \dependson \bfoscopesym$,
  if $\bfoscopesym$ occurs in the scope context $\afoscopecxt$ for $\afoscopesym$.
  We say that $\alTRS$ is \emph{finitely nested} 
  if the converse relation of $\sdependson$, the \emph{nested-into} relation $\sisnestedinto$, is well-founded,
  or equivalently (using the axiom of dependent choice),
  if there is no infinite chain of the form
  $\afoscopesymi{0} \dependson \afoscopesymi{1} \dependson \afoscopesymi{2} \dependson \ldots$
  on scope symbols $\afoscopesymi{0},\afoscopesymi{1},\afoscopesymi{2},\ldots\in\asigmin$.  
\end{definition}

\begin{example}\label{ex:lTRS}
  Let $\alTRS = \pair{\asig}{\arules}$ be the \lTRS\ 
  with $\asigmin = \setexp{ \afoscopesym, \bfoscopesym, \cfoscopesym, \dfoscopesym }$,
  where $\arityof{\afoscopesym} = 2$, $\arityof{\bfoscopesym} = \arityof{\cfoscopesym} = 0$, and $\arityof{\dfoscopesym} = 1$,
  and the following set $\arules$ of rules:
  \begin{center}
    $
  \begin{alignedat}{4}
    (\ruleof{\afoscopesym}) & \;\;\;\; & 
    \folapp{\afoscope{\avari{1},\avari{2}}}{\avar}
      & {} \red 
    \folapp{\avari{1}}{\folapp{\avari{2}}{\avar}}
    & \qquad\qquad
    (\ruleof{\cfoscopesym}) & \;\;\;\; &
    \folapp{\cfoscopesym}{\avar} 
      & {} \red
    \dfoscope{\avar}
    \\
    (\ruleof{\bfoscopesym}) & \;\;\;\; & 
    \folapp{\bfoscopesym}{\avar}
      & {} \red
    \avar
    &
    (\ruleof{\dfoscopesym}) & \;\;\;\; &
    \folapp{\dfoscope{\avari{1}}}{\avar}
      & {} \red
    \folapp{\avar}{\avari{1}}        
  \end{alignedat}
    $
  \end{center}  
  This finite \lTRS\
  is also finitely nested, as the depends-on relation consists only of a single link: $\cfoscopesym \dependson \dfoscopesym$.  
  It facilitates to denote the \lambdaterm~$\alter$ in Example~\ref{ex:lopsimred:ltermrep},
  see the expansion of $\afoscope{\bfoscopesym,\cfoscopesym}$ in Example~\ref{ex:lTRS:expred} below. 
\end{example}

In order to explain how \lTRS\ terms 
                                     denote \lambdaterm\ representations, 
we introduce, for every \lTRS~$\alTRS$, 
an expansion \TRS\ that makes use of the defining rules for the scope symbols in $\alTRS$. 
Then `denoted \lambdaterm\ representations' will be defined as normal forms of terms in the expansion \TRS.
It uses function symbols~$\sexpandi{i}$ with parameters $i$ 
for expanding a \lTRS\ term in a top--down manner.
Thereby the indices $i$ are used to guarantee that when an abstraction $\lambda\afovari{i}$ is created 
the indexed variable name $\afovari{i}$ is different from that of all abstractions $\lambda\avari{j}$ that have been created above it. 
In this way the arising \lambdaterm\ representation will be uniquely named at vertical positions.

\begin{definition}[expansion \TRS\ for a \lTRS]\label{def:expandTRS}
  Let $\alTRS = \pair{\asig}{\arules}$ be a \lTRS. 
  The \emph{expansion \TRS~$\,\expandTRSwrt{\alTRS} = \pair{\asigexp\cup\vars}{\rulesexp}$ for $\alTRS$}
  has the signature $\asigexp \,\defdby\, \asig \cup \asiglambda \cup \asigexpand$
  with 
  $\asigexpand \defdby \descsetexp{ \sexpandi{i} }{ i\in\nat }$
  where $\sexpandi{i}$ is unary 
  for $i\in\nat$, and $\asigmin \cap (\asiglambda \cup \asigexpand) = \emptyset$,
  and its set of rules $\rulesexp$ consists of the rules:
  \begin{center}
  $  
  \begin{aligned}
    \expandi{i}{\folapp{\avari{1}}{\avari{2}}}
      & \;\red\;
    \folapp{\expandi{i}{\avari{1}}}{\expandi{i}{\avari{2}}}  
    \\
    \expandi{i}{\afoscope{\avari{1},\ldots,\avari{k}}}
      & \;\red\;
    \folabs{\afovari{i}}{\expandi{i+1}{\afoscopecxtap{\avari{1},\ldots,\avari{k},\afovari{i}}}}
    & & \hspace*{1.5ex}\text{(where $\afoscopecxt$ is the scope context for $\afoscopesym$)}
    \\
    \expandi{i}{\folabs{\afovari{j}}{\avar}}
      & \;\red\;
    \folabs{\afovari{j}}{\expandi{\max\setexp{i,j}+1}{\avar}} 
    \\
    \expandi{i}{\afovari{j}} 
      & \;\red\;
    \afovari{j}
  \end{aligned}
  $
  \end{center}
  By $\sexpred$ we denote the rewrite relation of $\expandTRSwrt{\alTRS}$.
\end{definition}

\begin{lemma}\label{lem:expTRS:orthogonal:UN}
  The expansion \TRS~$\,\expandTRSwrt{\alTRS}$ 
  of a \lTRS~$\alTRS = \pair{\asig}{\arules}$  
  is an orthogonal TRS.
  Hence its rewrite relation $\sexpred$ is confluent,
  and normal forms of terms, whenever they exist, are unique.
\end{lemma}

Since expansion~\TRSs\ are orthogonal \TRSs, finite or infinite normal forms are unique.
Furthermore they are constructor~\TRSs, i.e.\ they have rules whose right-hand sides are guarded by constructors.
This can be used to show that all terms in an expansion~\TRS\ rewrite 
to a unique finite or infinite normal form.  

%
%
%
%
%
%
%
%
%
%

\begin{definition}[\lambdaterm\ representations denoted by \protect\lTRS\nb-terms]
  Let $\alTRS = \pair{\asig}{\arules}$ be a \lTRS. 
  For a term $\ater\in\termsover{\asig}$ we denote by $\denlterrepwrt{\alTRS}{\ater}$ 
  the finite or infinite $\sexpred$\nb-normal form of the term $\expandi{0}{\ater}$ in $\expandTRSwrt{\alTRS}$.
  If it is a \lambdaterm\ representation, we say that
  $\denlterrepwrt{\alTRS}{\ater}$ is the \emph{denoted \lambdaterm\ representation} of $\ater$,
  and write $\denlterwrt{\alTRS}{\ater}$ for the \lambdaterm~$\denlter{\denlterrepwrt{\alTRS}{\ater}}$.
\end{definition}

\begin{example}\label{ex:lTRS:expred}
  With the \lTRS~$\alTRS$ from Example~\ref{ex:lTRS}
  the \lambdaterm~$\alter$ in Example~\ref{ex:lopsimred:ltermrep} can be denoted
  as the term $\afoscope{\bfoscopesym,\cfoscopesym}$ expands 
  to a \lambdaterm\ representation of $\alter$
  (the final $\expmred$ step consists of two parallel $\expred$ steps):
  \begin{alignat*}{3}
    \expandi{0}{\afoscope{\bfoscopesym,\cfoscopesym}}
      & {} \;\expred\; {} & &
    \folabs{\afovari{0}}{\expandi{1}{\folapp{\bfoscopesym}{\folapp{\cfoscopesym}{\afovari{0}}}}}  
    \\
      & {} \;\expred\; {} & &
    \folabs{\afovari{0}}{\folapp{\expandi{1}{\bfoscopesym}}{\expandi{1}{\folapp{\cfoscopesym}{\afovari{0}}}}}  
    \displaybreak[0]\\
      & {} \;\expred\; {} & &
    \folabs{\afovari{0}}{\folapp{\folabs{\afovari{1}}{\expandi{2}{\afovari{1}}}}{\expandi{1}{\folapp{\cfoscopesym}{\afovari{0}}}}}  
    \displaybreak[0]\\
      & {} \;\expred\; {} & &
    \folabs{\afovari{0}}{\folapp{\folabs{\afovari{1}}{\afovari{1}}}{\expandi{1}{\folapp{\cfoscopesym}{\afovari{0}}}}}  
    \displaybreak[0]\\
      & {} \;\expred\; {} & &
    \folabs{\afovari{0}}{\folapp{\folabs{\afovari{1}}{\afovari{1}}}{\folapp{\expandi{1}{\cfoscopesym}}{\expandi{1}{\afovari{0}}}}} 
    \displaybreak[0]\\
      & {} \;\expred\; {} & &
    \folabs{\afovari{0}}{\folapp{\folabs{\afovari{1}}{\afovari{1}}}{\folapp{\expandi{1}{\cfoscopesym}}{\afovari{0}}}}  
    \displaybreak[0]\\
      & {} \;\expred\; {} & &
    \folabs{\afovari{0}}{\folapp{\folabs{\afovari{1}}{\afovari{1}}}{\folapp{\folabs{\afovari{1}}{\expandi{2}{\dfoscope{\afovari{1}}}}}{\afovari{0}}}}  
    \displaybreak[0]\\
      & {} \;\expred\; {} & &
    \folabs{\afovari{0}}
           {\folapp{\folabs{\afovari{1}}{\afovari{1}}}
                   {\folapp{\folabs{\afovari{1}}{\folabs{\afovari{2}}{\expandi{3}{\folapp{\afovari{2}}{\afovari{1}}}}}}
                           {\afovari{0}}}} 
    \displaybreak[0]\\
      & {} \;\expred\; {} & &
    \folabs{\afovari{0}}
           {\folapp{\folabs{\afovari{1}}{\afovari{1}}}
                   {\folapp{\folabs{\afovari{1}}{\folabs{\afovari{2}}{\folapp{\expandi{3}{\afovari{2}}}{\expandi{3}{\afovari{1}}}}}}
                           {\afovari{0}}}} 
    \displaybreak[0]\\
      & {} \;\expmred\; {} & &
    \folabs{\afovari{0}}
           {\folapp{\folabs{\afovari{1}}{\afovari{1}}}
                   {\folapp{\folabs{\afovari{1}}{\folabs{\afovari{2}}{\folapp{\afovari{2}}{\afovari{1}}}}}
                           {\afovari{0}}}}
  \end{alignat*}
  Hence
    $ \denlterrepwrt{\alTRS}{\afoscope{\bfoscopesym,\cfoscopesym}} =
      \folabs{\afovari{0}}
             {\folapp{\folabs{\afovari{1}}{\afovari{1}}}
                     {\folapp{\folabs{\afovari{1}}
                                     {\folabs{\afovari{2}}
                                             {\folapp{\afovari{2}}{\afovari{1}}}}}
                             {\afovari{0}}}}$.                         
  This \lambdaterm\ representation coincides with the term $\ater$ in Example~\ref{ex:lopsimred:ltermrep}
  `modulo \alphaconversion',
  and furthermore, for the denoted \lambdaterm\ it holds that
  $\denlterwrt{\alTRS}{\afoscope{\bfoscopesym,\cfoscopesym}}
     = \labs{x}{\lapp{(\labs{y}{y})}{(\lapp{(\labs{z}{\labs{w}{\lapp{w}{z}}})}{x})}}
     = \alter$.
\end{example}

\begin{proposition}\label{prop:lTRS-term:fin:nested:defines:lterrep}
  Let $\alTRS$ be a finitely nested \lTRS. 
  Then for every ground term $s$ of $\alTRS$, 
  $\denlterrepwrt{\alTRS}{\ater}$ is a finite ground term over $\asiglambda$, hence a \lambdaterm\ representation
  of the \lambdaterm\ $\denlterwrt{\alTRS}{\ater}$.
\end{proposition}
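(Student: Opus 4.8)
The statement asserts two logically independent things about the unique $\sexpred$\nb-normal form $\denlterrepwrt{\alTRS}{\ater}$ of $\expandi{0}{\ater}$: that it is (i) a closed term over the signature $\asiglambda$, and (ii) finite. By the properties noted after Definition~\ref{def:expandTRS} (orthogonality and the constructor property), this normal form is unique, so for (ii) it suffices to show that $\sexpred$ is terminating on the relevant terms. I would treat (i) first, since it does not use finite nestedness. The plan is to prove, by induction on the length of a reduction $\expandi{0}{\ater} \expmred t$, the invariant that in $t$ every occurrence of an expand symbol $\expandi{i}{\cdot}$ has as its immediate argument a term over $\asig$ with variable constants $\afovari{j}$ plugged in at leaves; in particular its root is an application, a scope symbol, or a variable constant, but never an abstraction symbol or another expand symbol. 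Each of the three rule schemes manifestly preserves this: the application rule passes $\expandi{i}{\cdot}$ to the two immediate subterms, the scope rule passes $\expandi{i+1}{\cdot}$ to $\afoscopecxtap{\avari{1},\ldots,\avari{k},\afovari{i}}$ (again such a term), and the variable rule removes the expand symbol. Hence every expand symbol in a reachable term matches a left-hand side and is a redex, so a normal form contains no expand symbol; and since the same analysis shows every scope symbol stays below some expand symbol, a normal form contains no scope symbol either. Therefore $\denlterrepwrt{\alTRS}{\ater}$ is built solely from $\sfolapp$, the named abstraction symbols, and the variable constants, i.e.\ it is a term over $\asiglambda$, and it is closed (ground) because no rule introduces a fresh term variable.

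For part (ii) I would first extract ranks from finite nestedness. Because each scope context is finite it contains only finitely many scope symbols, so each $\afoscopesym$ satisfies $\afoscopesym \dependson \bfoscopesym$ for only finitely many $\bfoscopesym$; the absence of infinite $\sdependson$\nb-chains then lets me define $\text{rk}(\afoscopesym) = 1 + \max\{\,\text{rk}(\bfoscopesym) : \afoscopesym \dependson \bfoscopesym\,\} \in \nat$ (with $\max\emptyset = -1$) by well-founded recursion along $\sdependson$. The anticipated main obstacle is that the scope rule may \emph{duplicate} the arguments $\avari{1},\ldots,\avari{k}$ (whenever a hole occurs several times in $\afoscopecxt$), and these arguments may themselves contain scope symbols of rank $\ge \text{rk}(\afoscopesym)$; consequently neither the size nor the multiset of ranks of scope symbols decreases in the scope step, so a plain induction on rank or on structure breaks down exactly there.

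To circumvent this I plan to prove termination of $\sexpred$ by a weakly monotone polynomial interpretation into the positive integers. Interpret $\sfolapp$ by $(x,y)\mapsto x+y+1$, each named abstraction symbol $\sfolabs{\afovari{i}}$ by $z\mapsto z+1$, each variable constant by the value $1$, every expand symbol $\sexpandi{i}$ uniformly by $x\mapsto 2x$ (so the unbounded index $i$ is harmless), and each scope symbol $\afoscopesym$ of arity $k$ by an affine map $(x_1,\ldots,x_k)\mapsto a_{\afoscopesym}+\sum_{j=1}^{k} b_{\afoscopesym,j}\,x_j$ with coefficients defined by recursion on $\text{rk}(\afoscopesym)$. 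All interpreting functions are then strictly monotone in each argument. For the application, variable, and abstraction rules strict decrease is immediate (e.g.\ $2(x_1+x_2+1)>2x_1+2x_2+1$). For the scope rule $\expandi{i}{\afoscope{\avari{1},\ldots,\avari{k}}}\expred\folabs{\afovari{i}}{\expandi{i+1}{\afoscopecxtap{\avari{1},\ldots,\avari{k},\afovari{i}}}}$, the right-hand side evaluates to $2\,\lVert\afoscopecxtap{\avari{1},\ldots,\avari{k},\afovari{i}}\rVert+1$; since $\afoscopecxt$ is built from applications and from scope symbols of \emph{strictly smaller} rank, this is an affine expression $2(A_{\afoscopesym}+\sum_j C_{\afoscopesym,j}x_j)+1$ whose finite coefficients $A_{\afoscopesym},C_{\afoscopesym,j}$ depend only on already-defined lower-rank data.

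It then suffices to set $b_{\afoscopesym,j}=\max(1,C_{\afoscopesym,j})$ and $a_{\afoscopesym}=A_{\afoscopesym}+1$: every coefficient of the left-hand side dominates that of the right-hand side and the constant term is strictly larger, so the left-hand side exceeds the right-hand side by at least $1$ for all argument values in the carrier. This is precisely where finite nestedness does the work: the duplication factors $C_{\afoscopesym,j}$ come from the finitely many lower-rank scope symbols in $\afoscopecxt$, and the well-founded rank recursion guarantees that the weights $b_{\afoscopesym,j}$ can be chosen large enough to dominate them, something impossible for a size- or multiset-based measure because of the duplication of high-rank arguments. With $\sexpred$ shown terminating, the unique normal form $\denlterrepwrt{\alTRS}{\ater}$ is finite, which together with part (i) yields that it is a finite closed term over $\asiglambda$, hence a \lambdaterm\ representation of $\denlterwrt{\alTRS}{\ater}$.
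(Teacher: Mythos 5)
Your argument is correct, and it fills a gap rather than duplicating anything: the paper states Proposition~\ref{prop:lTRS-term:fin:nested:defines:lterrep} without proof, offering only the preceding remark that expansion \TRSs\ are orthogonal constructor \TRSs\ and that therefore every term rewrites to a unique finite or infinite $\sexpred$\nb-normal form. The finiteness claim --- the only place where finite nestedness enters --- is left entirely to the reader, and the route the paper appears to intend is infinitary: productivity of the constructor \TRS\ yields the (possibly infinite) normal form, and finiteness would then follow by well-founded induction along $\sisnestedinto$, in the spirit of Proposition~\ref{prop:ldepth:fin:nested}. Your part~(i) --- the invariant that in every reachable term each occurrence of an $\sexpandi{i}$ has as immediate argument a ground term over $\asig$ with variable constants at some leaves, so that every such occurrence is a redex and every scope symbol sits below one --- is exactly what makes the paper's constructor\nb-\TRS\ remark precise. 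Your part~(ii) replaces the implicit infinitary argument by a plain termination proof via a strictly monotone, rank-indexed affine interpretation in the positive integers; the rank is well-defined because $\sisnestedinto$ is well-founded and each scope context, being a finite term, induces only finitely many dependencies, and the compatibility inequalities for the three rule schemes go through with your choices $a_{\afoscopesym} = A_{\afoscopesym}+1$ and $b_{\afoscopesym,j} = \max(1,C_{\afoscopesym,j})$, including the uniform treatment of the infinitely many instances indexed by $i$ and $j$. You also correctly isolate why the obvious measures fail: the scope rule may duplicate arguments containing high-rank scope symbols, so only a weight recursion along $\sdependson$ can dominate the duplication coefficients. The trade-off is that your route uses heavier machinery than the paper's (implicit) one, but it is self-contained, avoids infinitary rewriting altogether, and in fact yields slightly more, namely strong normalization of $\sexpred$ on all terms reachable from $\expandi{0}{\ater}$.
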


For proving termination and finiteness of the expansion process for terms and contexts in finitely nested \lTRSs,
we now define two measures: the `nesting depth' of scope symbols, and the `expansion size' of terms and of contexts.

\begin{definition}[nesting depth of a scope symbol, maximal nesting depth of contexts and terms]
  Let $\alTRS = \pair{\asig}{\arules}$ be a finitely nested \lTRS.   
  
  We define \emph{the nesting depth $\nestdepth{\afoscopesym}$} of a scope symbol $\afoscopesym\in\asigmin$
  by means of \wellfounded\ induction on $\sisnestedinto$, the converse of the nested-into relation $\sdependson$, as follows:
  \begin{equation*}
    \nestdepth{\afoscopesym}
      \defdby
        \begin{cases}
          0 & \text{ if } \lognot{\existsst{\bfoscopesym\in\asigmin}{\bigl(\, \afoscopesym \dependson \bfoscopesym \,\bigr)}} \punc{,}
          \\
          1 + \max \descsetexp{ \nestdepth{\bfoscopesym} }{ \afoscopesym \dependson \bfoscopesym,\, \bfoscopesym\in\asigmin }
            & \text{ if } \existsst{\bfoscopesym\in\asigmin}{\bigl(\, \afoscopesym \dependson \bfoscopesym \,\bigr)}  \punc{.}
        \end{cases}
  \end{equation*}
  Note that $\sisnestedinto$ is well-founded, since $\alTRS$ is finitely nested. 
  Furthermore the maximum in this clause is always taken over a finite set,
  because $\afoscopesym \dependson \bfoscopesym$ 
  means that $\bfoscopesym$ occurs in the scope context $\afoscopecxt$ of $\afoscopesym$, which is finite. 
 
  By the \emph{maximal nesting depth $\maxnestdepth{\acxt}$} of an \nary{n} context $\acxt\in\contextsnover{n}{\asig\cup\asiglambda}$,  
  we mean the maximal nesting depth of a scope symbol that occurs in $\acxt$.
  Similarly, by the \emph{maximal nesting depth $\maxnestdepth{\bter}$} of a term $\bter\in\termsover{\asig\cup\asiglambda}$  
  we mean the maximal nesting depth of a scope symbol that occurs in $\bter$.
\end{definition}

Next we introduce the `expansion size' of ground contexts (and thereby also of ground terms) over the signatures of an \lTRS~$\alTRS$, and the \lambdaterm\ representations.
We define it in such a way that the expansion size of a context $\acxt$ can later be recognized as the size of a normal form of $\expandi{0}{\acxt}$
in the expansion TRS~$\expandTRSwrt{\alTRS}$ when context holes, and remaining symbols $\sexpandi{j}$ are not counted.
First, however,
we will need this measure to show that every term $\expandi{0}{\acxt}$ has a normal form in $\expandTRSwrt{\alTRS}$ at all.

\begin{definition}[expansion size]
  Let $\alTRS = \pair{\asig}{\arules}$ be a finitely nested \lTRS.
  We define the \emph{expansion size~$\expsize{\acxt}$} of contexts $\acxt\in\contextsover{\asig\cup\asiglambda}$
  by induction on the structure of $\acxt$, thereby distinguishing the five possible cases of outermost symbols:  
  \begin{align*}
    \expsize{ \folapp{\acxti{1}}{\acxti{2}} }
      & \;\defdby\;
        1 + \expsize{\acxti{1}} + \expsize{\acxti{2}}  \punc{,}
    \\[-0.5ex]
    \expsize{ \afoscope{\acxti{1},\ldots,\acxti{k}} } 
      & \;\defdby\;
        1 + \expsize{ \afoscopecxtap{\holei{1},\ldots,\holei{k},\afovari{0}} }  
          + \sum_{i=1}^{k} 
              \expsize{\acxti{i}}
        & & \parbox{\widthof{(scope context for $\afoscopesym$ in $\alTRS$,)}}
                   {(where $\afoscopecxt$ is the
                    \\[-0.25ex]\phantom{(}%
                    scope context for $\afoscopesym$ in $\alTRS$),} 
    \displaybreak[0]\\[-0.5ex]
    \expsize{ \folabs{\afovari{j}}{\acxti{0}} }   
      & \;\defdby\;
        1 + \expsize{ \acxti{0} } 
    \\
    \expsize{ \afovari{j} }
      & \;\defdby\;
        1 
        & & \text{(for all $j\in\nat$)}  
    \\
    \expsize{\holei{j}}
      & \;\defdby\;
        0
        & & \text{(for all $j\in\nat$)} \punc{.}
  \end{align*}
  In particular, we apply \wellfounded\ induction on $\pair{\maxnestdepth{\acxt}}{\size{\acxt}}$
  with respect to the lexicographic order on $\nat\times\nat$,
  that is, induction on the nesting depth $\maxnestdepth{\acxt}$ of $\acxt$
  with a subinduction on the size $\size{\acxt}$ of $\acxt$.
  Note that, in particular, $\expsize{ \afoscope{\acxti{1},\ldots,\acxti{k}} }$ is \welldefined:
  for $\expsize{\afoscopecxtap{\holei{1},\ldots,\holei{k},\afovari{0}}}$ we can apply the induction hypothesis due to
  $\maxnestdepth{ \afoscopecxtap{\holei{1},\ldots,\holei{k},\afovari{0}} }
     <
   \maxnestdepth{ \afoscope{\acxti{1},\ldots,\acxti{k}} }$,
  since for all scope symbols $\bfoscopesym$ that occur in $\afoscopecxt$ it holds that $\bfoscopesym \dependson \afoscopesym$;
  and $\expsize{\acxti{i}}$ is \welldefined\ for $i\in\setexp{1,\ldots,k}$,
  because $\maxnestdepth{\acxti{i}} \le \maxnestdepth{\afoscope{\acxti{1},\ldots,\acxti{k}}}$, 
  and $\size{\acxti{i}} < \size{\afoscope{\acxti{1},\ldots,\acxti{k}}} $.
\end{definition}

\begin{lemma}\label{lem:expsize:cxts}
  $
  \expsize{ \acxtap{\acxti{1},\ldots,\acxti{n}} } 
    \; = \;
  \expsize{ \acxt }  
    + \sum_{i=1}^{n} 
        \expsize{\acxti{i}} 
  $      
  holds for contexts $\acxt\in\contextsnover{n}{\asig\cup\asiglambda}$ 
  and $\acxti{1},\ldots,\acxti{n}\in\contextsnover{m}{\asig\cup\asiglambda}$. 
\end{lemma}

\begin{proof}
  By a straightforward induction on the structure of the contexts~$\acxt\in\contextsnover{n}{\asig\cup\asiglambda}$.
\end{proof}

On the basis of these preparations we can now show that the expansion rewrite relation with respect to a finitely nested \lTRS\
always terminates (is strongly normalizing) on a term $\expandi{i}{\bter}$ with $i\in\nat$,
and with $\bter$ a term over the signature of the \lTRS\ and of \lambdaterm\ representations.

\begin{lemma}[termination of expansion in finitely nested \lTRSs]\label{lem:expred:terminates}
  Let $\alTRS = \pair{\asig}{\arules}$ be a {finitely nested} \lTRS. 
  Then the following statements hold for $\sexpred$ as defined in the expansion TRS~$\expandTRSwrt{\alTRS}$ of $\alTRS$:
  \begin{enumerate}[(i)]\setlength{\itemsep}{0ex}
    \item{}\label{it:1:lem:expred:terminates}
      $\sexpred$ terminates from $\expandi{i}{\acxt}$ for every $i\in\nat$, and every context $\acxt\in\contextsover{\asig\cup\asiglambda}$. 
    \item{}\label{it:2:lem:expred:terminates}  
      $\sexpred$ terminates from $\expandi{i}{\bter}$ for every $i\in\nat$, and every term $\bter\in\termsover{\asig\cup\asiglambda}$. 
  \end{enumerate}
\end{lemma}

\begin{proof}
  By inspection of the four rules of the expansion TRS~$\expandTRSwrt{\alTRS}$
  we find that in every step of the form 
  $\expandi{i}{\acxt} \expred \acxt'$ where $\acxt\in\contextsover{\asig\cup\asiglambda}$, $i\in\nat$,
                                            and $\acxt'\in\contextsover{\asig\cup\asiglambda\cup\asigexpand}$
  it holds for subexpressions $\expandi{i}{\bcxt}$ of $\acxt'$ 
  that $\expsize{\bcxt} < \expsize{\acxt}$. 
  This is immediate for the rules concerning application $\sfolapp$, named abstraction symbols $\sfolabs{\afovari{i}}$, and variable symbols $\afovari{i}$,
  for $i\in\nat$.  
  For the rule concerning scope symbols $\afoscopesym\in\asigmin$ 
  this can be checked by using Lemma~\ref{lem:expsize:cxts}.
  
  This reduction property justifies the induction step in a proof of statement~\eqref{it:1:lem:expred:terminates} of the lemma
  by induction on the expansion size $\expsize{\acxt}$ of contexts $\acxt\in\contextsover{\asig\cup\asiglambda}$.
  
  Statement~\eqref{it:2:lem:expred:terminates} of the lemma is a consequence of statement~\eqref{it:1:lem:expred:terminates},
  since terms over $\asig\cup\asiglambda$ can be viewed as contexts over $\asig\cup\asiglambda$ without hole occurrences. 
\end{proof}

Now that we know that the expansion process for terms over the signatures of \lTRSs, and of \lambdaterm\ representations
always terminate, and that $\sexpred$ normal forms are always unique, we introduce notation and a name for these normal forms.

\begin{definition}[expanded forms of contexts and terms]\label{def:expanded:form:cxts:terms}
  Let $\alTRS = \pair{\asig}{\arules}$ be a finitely nested \lTRS. 
  
  For every \nary{n} context $\acxt\in\contextsnover{n}{\asig\cup\asiglambda}$, where $n\in\nat$, and every term $\bter\in\termsover{\asig\cup\asiglambda}$ 
  we define \emph{the expanded form $\exprednfi{i}{\acxt}$ of $\acxt$}, and \emph{the expanded form $\exprednfi{i}{\bter}$ of $\bter$} 
  for all $i\in\nat$ by: 
  \begin{align*}
    \exprednfi{i}{\acxt} \: & \defdby\: \exprednf{\expandi{i}{\acxt}} \punc{,}
      &
    \exprednfi{i}{\ater} & \:\defdby\: \exprednf{\expandi{i}{\ater}} \punc{,}
  \end{align*}
  where $\exprednf{\bter}$ denotes the operation of taking \underline{the} $\expred$ normal form of $\bter$.
  This normal form is \welldefined, because $\sexpred$ normal forms of terms $\expandi{i}{\acxt}$ exist due to Lemma~\ref{lem:expred:terminates},
  and are unique due to Lemma~\ref{lem:expTRS:orthogonal:UN}.
\end{definition}

For reasoning with expanded forms of terms and context later in Section~\ref{sec:depth:increase}
we will need representations of the expanded form of arbitrary contexts,
and how this representation interacts with the context filling operation. 
The lemma below formulates the representation, and the subsequent lemma its property with respect to context filling.

\begin{lemma}[context representation of expanded forms of contexts]\label{lem:expanded:form:cxts}
  Let $\alTRS = \pair{\asig}{\arules}$ be a finitely nested \lTRS. 
  For every \nary{n} context $\acxt\in\contextsnover{n}{\asig\cup\asiglambda}$ (over the signature of $\alTRS$ and \lambdaterm\ representations), and $i\in\nat$, 
  the expanded form of $\acxt$ has a representation: 
  \begin{equation}\label{eq:lem:expanded:form:cxts}
    \begin{split}
      \exprednfi{i}{\acxt}
        \: & \syntequal\:
      \bcxtap{\expandi{i_1}{\holei{j_1}},\ldots,\expandi{i_m}{\holei{j_m}}}  
      \\
        \: & \syntequal\:
      \bcxtap{\exprednfi{i_1}{\holei{j_1}},\ldots,\exprednfi{i_m}{\holei{j_m}}}  
    \end{split}
  \end{equation}
  for some linear context $\bcxt\in\contextsnover{m}{\asiglambda}$ (over the signature of \lambdaterm\ representations), for $m\in\nat$,
  and $i_1,\ldots,i_m\in\nat$, and $j_1,\ldots,j_m\in\setexp{1,\ldots,n}$
  (therefore the context above on the right is \nary{n} just as $\acxt$).
\end{lemma}

\begin{proof}
  $\exprednfi{i}{\acxt}$ was \welldefined\ in Definition~\ref{def:expanded:form:cxts:terms},
    on the basis of Lemma~\ref{lem:expred:terminates} and Lemma~\ref{lem:expTRS:orthogonal:UN},
  as the unique $\sexpred$ normal form $\acxt'$ in $\expandTRSwrt{\alTRS}$ of $\expandi{i}{\acxt}$.
  Since $\acxt'$ is a normal form with respect to $\sexpred$,
  $\acxt'$ does not contain subexpressions of the form 
  $\expandi{j}{\folapp{\ccxti{1}}{\ccxti{2}}}$, 
  $\expandi{j}{\afoscope{\ccxti{1},\ldots,\ccxti{k}}}$,
  $\expandi{j}{\folabs{\afovari{l}}{\ccxti{0}}}$,
  or $\expandi{j}{\afovari{l}}$,
  with $j,k,l\in\nat$, and contexts $\ccxti{0},\ccxti{1},\ccxti{2},\ldots,\ccxti{k}$.
  The only possible occurrences of symbols $\sexpandi{j}$ in $\acxt$ must therefore be of the 
  `hole guarding' form
  $\expandi{j_1}{\cdots(\expandi{j_k}{\holei{l}})}$ with $j_1,\ldots,j_k,l\in\nat$.
  
  But proper stackings of symbols $\sexpandi{j}$ in $\acxt'$ are not possible. We argue as follows. 
  The form of the rules of $\expandTRSwrt{\alTRS}$ guarantees
  that if $\expandi{i}{\acxt} \expmred \ccxt$ holds, then $\ccxt$ does not contain symbols $\sexpandi{i}$ in nested positions.
  It follows that this also holds for the $\sexpred$ normal form $\acxt'$ of $\acxt$.
  
  Therefore the only  possible occurrences of symbols $\sexpandi{j}$ in $\acxt$ are of the 
  `linear hole guarding' form
  $\expandi{j}{\holei{l}}$ with some $j,l\in\nat$.
  As a consequence, $\acxt'$ can be written as of the form \eqref{eq:lem:expanded:form:cxts}
  for a linear context $\bcxt\in\contextsnover{m}{\asig}$, for $m\in\nat$,
  and $i_1,\ldots,i_m\in\nat$, and $j_1,\ldots,j_m\in\setexp{1,\ldots,n}$.
\end{proof}

\begin{lemma}[expanded form of filled contexts]\label{lem:expanded:form:cxtap}
  Let $\alTRS = \pair{\asig}{\arules}$ be a finitely nested \lTRS. 
  Then for all contexts $\acxt\in\contextsnover{n}{\asig\cup\asiglambda}$,
  and for all contexts $\acxti{1},\ldots,\acxti{n}\in\contextsnover{l}{\asig\cup\asiglambda}$ for some $l\in\nat$ it holds:
  \begin{equation}\label{eq:lem:expanded:form:cxtap}
    \exprednfi{i}{ \acxtap{ \acxti{1},\ldots,\acxti{n} } }
      \:\syntequal\:
    \bcxtap{ \exprednfi{i_1}{\acxti{j_1}}, \ldots, \exprednfi{i_m}{\acxti{j_m}} } 
  \end{equation}
  where $\bcxt$ is a linear context $\bcxt\in\contextsnover{m}{\asiglambda}$ (over the signature of \lambdaterm\ representations), for $m\in\nat$,
  that describes the expanded form $\exprednfi{i}{\acxt}$ of $\acxt$ via \eqref{eq:lem:expanded:form:cxts},
  for some $i_1,\ldots,i_m\in\nat$, and $j_1,\ldots,j_m\in\setexp{1,\ldots,n}$.
\end{lemma}

\begin{proof}
  By Lemma~\ref{lem:expanded:form:cxts} there is 
  a linear context $\bcxt\in\contextsnover{m}{\asiglambda}$, for $m\in\nat$,
  that describes the expanded form $\exprednfi{i}{\acxt}$ of $\acxt$ via \eqref{eq:lem:expanded:form:cxts},
  for some $i_1,\ldots,i_m\in\nat$, and $j_1,\ldots,j_m\in\setexp{1,\ldots,n}$,
  and hence with:
  \begin{equation*}
    \expandi{i}{\acxt}
      \:\expmred\;
    \bcxtap{\expandi{i_1}{\holei{j_1}},\ldots,\expandi{i_m}{\holei{j_m}}} \punc{.}
  \end{equation*} 
  By filling $\acxti{1},\ldots,\acxti{n}$ in the holes $\holei{1},\ldots,\holei{n}$ of $\acxt$
  in all contexts of this $\sexpred$ rewrite sequence
  we obtain another $\sexpred$ rewrite sequence
  that can furthermore be extended as follows:
  \begin{align*}
    \expandi{i}{\acxtap{\acxti{1},\ldots,\acxti{n}}}
      \: & \expmred\;
    \bcxtap{\expandi{i_1}{\acxti{j_1}},\ldots,\expandi{i_m}{\acxti{j_m}}}
    \\
      \: & \expmred\;
    \bcxtap{\exprednfi{i_1}{\acxti{j_1}},\ldots,\exprednfi{i_m}{\acxti{j_m}}} \punc{,}
  \end{align*}  
  by using the rewrite sequences $\expandi{i_l}{\acxti{j_l}} \expmred \exprednfi{i_l}{\acxti{j_l}}$
  for all $l\in\setexp{1,\ldots,m}$, which exist by the definition of $\exprednfi{i_l}{\acxti{j_l}}$.
  Since $\bcxt\in\contextsover{\asiglambda}$, it does not contain any symbol $\sexpandi{i}$, for $i\in\nat$.
  Therefore the resulting context $\bcxtap{\exprednfi{i_1}{\acxti{j_1}},\ldots,\exprednfi{i_m}{\acxti{j_m}}}$
  is indeed a $\sexpred$ normal form.
  In this way we have justified the form \eqref{eq:lem:expanded:form:cxtap} of $\exprednfi{i}{ \acxtap{ \acxti{1},\ldots,\acxti{n} } }$.
\end{proof}

Finally, in Section~\ref{sec:depth:increase} we will also need the following lemma.
It states that all expanded forms of terms over the signature of an \lTRS, and of the \lambdaterm\ representations
have the same unlabeled syntax tree, which entails that they have the same depth and size.

\begin{lemma}\label{lem:expanded:form:terms:cxts:difference}
  Let $\alTRS = \pair{\asig}{\arules}$ be a finitely nested \lTRS.
  Let $\acxt\in\contextsover{\asig\cup\asiglambda}$ be a context, let $\bter\in\termsover{\asig\cup\asiglambda}$ be a term
  over the signatures of $\alTRS$, and \lambdaterm\ representations.
  Let $i_1,i_2\in\nat$ with $i_1 \neq i_2$.  
  
  Then $\exprednfi{i_1}{\acxt}$ and $\exprednfi{i_2}{\acxt}$ have the same unlabeled syntax tree,
  and at a position $p$ they can only differ possibly in:
  \begin{enumerate}[(i)]\setlength{\itemsep}{0ex}
    \item{}\label{it:1:lem:expanded:form:terms:cxts:difference}
      a variable symbol $\afovari{j_1}$ at $p$ in $\exprednfi{i_1}{\acxt}$, and a variable symbol $\afovari{j_2}$ at $p$ in $\exprednfi{i_2}{\acxt}$,
    \item{}\label{it:2:lem:expanded:form:terms:cxts:difference}   
      an abstraction symbol $\sfolabs{\afovari{j_1}}$ at $p$ in $\exprednfi{i_1}{\acxt}$, and an abstraction symbol $\sfolabs{\afovari{j_2}}$ at $p$ in $\exprednfi{i_2}{\acxt}$,
    \item{}\label{it:3:lem:expanded:form:terms:cxts:difference}
      a symbol $\expandi{i'_1}{\holei{j}}$ at $p$ in $\exprednfi{i_1}{\acxt}$, and a symbol $\expandi{i'_2}{\holei{j}}$ at $p$ in $\exprednfi{i_2}{\acxt}$,
      where $j\in\setexp{1,\ldots,n}$.
  \end{enumerate} 
  Similarly,  $\exprednfi{i_1}{\bter}$ and $\exprednfi{i_2}{\bter}$ have the same syntax tree,
  and at a position $p$ they can only differ possibly as described 
  in items~\eqref{it:1:lem:expanded:form:terms:cxts:difference} and \eqref{it:2:lem:expanded:form:terms:cxts:difference}.
\end{lemma}

\begin{proof}
  From the rules of the expansion TRS $\expandTRSwrt{\alTRS}$
  we find that
  an expression $\expandi{i}{\acxt}$ is a redex of $\expandTRSwrt{\alTRS}$ irrespective of $i\in\nat$:
  indeed, it is a redex if and only if $\acxt$ is not a context hole. 
  The role of the index $i$ in a redex $\expandi{i}{\acxt}$ is only used to determine
  the index $j$ in variables $\afovari{j}$ or abstractions $\sfolabs{\afovari{j}}$ 
  that are possibly created by contracting this redex.  
  Therefore for every rewrite sequence $\expandi{i_1}{\acxt} \expmred \acxt'_1$ 
  has a `parallel' rewrite sequence $\expandi{i_2}{\acxt} \expmred \acxt'_2$
  where $\acxt'_1$ and $\acxt'_2$ have the same unlabeled syntax tree, and differ only possibly 
  in the aspects \eqref{it:1:lem:expanded:form:terms:cxts:difference}, \eqref{it:2:lem:expanded:form:terms:cxts:difference}, and \eqref{it:3:lem:expanded:form:terms:cxts:difference}
  of the lemma
  (with $\acxt'_1$ for $\exprednfi{i_1}{\acxt}$, and $\acxt'_2$ for $\exprednfi{i_2}{\acxt}$).
  Then this fact holds clearly also for the expanded forms $\exprednfi{i_1}{\acxt}$ and $\exprednfi{i_2}{\acxt}$ of $\acxt$.
\end{proof}

\section{Simulation of \protect\lo\ $\protect\beta$-reduction on \protect\lTRS-terms}
  \label{sec:lo-simulation:lTRSs}

We now adapt the \TRS\ for the simulation of \lo\ \betareduction\ rewrite sequences on \lambdaterm\ representations,
iterated in parallel positions (see page~\pageref{def:losim:TRS:ltermreps}), 
to a `\lopsimTRS' that facilitates such a simulation on terms of \lTRS{s}. 
For every \lTRS~$\alTRS$, we introduce a \lopsimTRS\ with 
rules that are similar as before but differ for steps involving abstractions.

A simulation starts on a term $\lopstart{\ater}$ where $\ater$ is a \lTRS\ ground term.
Therefore initially all abstractions are represented by scope symbols.
If under \lo\ evaluation a scope symbol $\afoscopesym$ is detected that does not have an argument,
then the top of the \lambdaabstraction\ it represents is stable (that is, it is part of a head normal form context).
Therefore it is expanded, giving rise to an abstraction representation~$\sfolabs{\afovari{i}}$,
and then \lo\ evaluation continues immediately below.
If, on the other hand, a scope symbol $\afoscopesym$ is detected that has at least one applicative argument term $\ater$,
then it represents the \lambdaabstraction\ of a \lo\ redex. In this case the \betareduction\ step for this \lo\ redex
is simulated by using the defining rule $\ruleof{\afoscopesym}$ of $\afoscopesym$ in the \lTRS, which involves filling the argument $\ater$
into the scope context $\afoscopecxt$ of $\afoscopesym$. 
The final term in an interated simulation of a \lo\ $\sbetared$ rewrite sequence 
                                                                           to a \lambdaterm\ normal form  
will be a normal form of the \lopsimTRS\
that is a \lambdaterm\ representation with named abstraction symbols, but without any scope symbols.
 
The changes in the adapted simulation TRS concern $\sdescendinfolabsred$ steps that descend into an abstraction,
and $\scontractred$ steps that simulate the reduction of \betaredex{es}.
In both cases prior to the step the pertaining abstractions are represented by terms with a scope symbol at the root.
Then in the steps the definition of the scope symbol in the underlying \lTRS\ is used.
Additional substitution rules are not necessary any more, because
the substitution involved in the contraction of a (represented) \betaredex\ 
can now be carried out by a single first-order rewrite step.
This is because such a step includes the transportation of the argument of a redex
into the scope context that defines the body of the abstraction. 
An additional parameter $i$ of the operation symbols $\slopni{n}{i}\,$ 
is used to prevent that any two nested abstractions refer to the same variable name,
safeguarding that rewrite sequences denote meaningful reductions on \lambdaterms.

\smallskip

\begin{definition}[\lopsimTRS\ for \lTRSs]%
    \label{def:losimTRS}
  Let $\alTRS = \pair{\asig}{\arules}$ be a \lTRS.
  The \emph{\lopsimTRS\ (\lopbetareduction\ simulation \TRS)~$\lopsimTRSwrt{\alTRS} = \pair{\asiglopsim}{\ruleslopsim}$ for $\alTRS$}
  has the signature $\asiglopsim \,\defdby\, \asig \cup \asiglambda \cup \asiglop$
  with 
    $
      \asiglop  
                     \,\defdby\, \setexp{ \slopstart } 
                                  \cup 
                                \descsetexp{ \slopni{n}{i} }{ n,i\in\nat }          
    $,                                               
  a signature of \emph{operation} symbols (for simulating \loreduction)
  consisting of the unary symbol $\slopstart$, 
  and the symbols $\slopni{n}{i}$ with arity $n+1$, for $n,i\in\nat$;
  the rule set $\ruleslopsim$ of $\lopsimTRSwrt{\alTRS}$ consists of the following (schemes of) rewrite rules,
  which are indexed by scope symbols $\afoscopesym\in\asigmin$,
  and where $\afoscopecxt$ is the scope context for scope symbol~$\afoscopesym\,$: 
  \begin{alignat*}{3}
    \lopstart{\avar}
      \; & \sred \;
    \lopni{0}{0}{\avar}    
      \tag*{$(\scriptinit)$}
    \\
    \lopni{n}{i}{\folapp{\avari{1}}{\avari{2}},\bvari{1},\ldots,\bvari{n}}
      \; & \sred \;
    \lopni{n+1}{i}{\avari{1},\avari{2},\bvari{1},\ldots,\bvari{n}}  
      \tag*{$\iap{(\scriptdescendinfolapp)}{n,i}$}
    \displaybreak[0]\\
    \lopni{0}{i}{\afoscope{\avari{1},\ldots,\avari{k}}}
      \; & \sred \;
    \folabs{\afovari{i}}{\lopni{0}{i+1}{\afoscopecxtap{\avari{1},\ldots,\avari{k},\afovari{i}}}}
      \tag*{$\bpap{(\scriptdescendinfolabs)}{i}{\afoscopesym}$}
    \displaybreak[0]\\
    \lopni{n+1}{i}{\afoscope{\avari{1},\ldots,\avari{k}},\bvari{1},\bvari{2},\ldots,\bvari{n+1}}
      \; & \sred \;              
    \lopni{n}{i}{\afoscopecxtap{\avari{1},\ldots,\avari{k},\bvari{1}},\bvari{2},\ldots,\bvari{n+1}} 
     \tag*{$\pbap{(\scriptcontract)}{\afoscopesym}{n+1}$}
    \displaybreak[0]\\  
    \lopni{0}{i}{\afovari{j}}
      \; & \sred \;
    \afovari{j}
      \tag*{$\bap{(\scriptvar)}{0,i}$}
    \displaybreak[0]\\
    \lopni{n+1}{i}{\afovari{j},\bvari{1},\ldots,\bvari{n+1}}
      \; & \sred \;
    \folapp{\ldots{\folapp{\afovari{j}}{\lopni{0}{i}{\bvari{1}}}}\ldots}
           {\lopni{0}{i}{\bvari{n+1}}}
      \tag*{$\bap{(\scriptvar)}{n+1,i}$}
  \end{alignat*}

  By $\slopsimred$ we denote the rewrite relation of $\lopsimTRSwrt{\alTRS}$.
  By $\scontractred$ we denote the rewrite relation that is induced by the rule scheme $\supap{(\scriptcontract)}{\afoscopesym}$
  where $\afoscopesym\in\asigmin$ ranges over scope symbols of $\alTRS$.  
  By $\ssearchred$ we denote  
  the rewrite relation that is induced by the other rules of $\lopsimTRSwrt{\alTRS}$.
  Similar as before,
  we denote by $\sinitred$, $\sdescendinfolappred$, $\sdescendinfolabsred$, and $\svarred$
  the rewrite relations that are induced by the rule schemes 
    $(\scriptinit)$, $\bap{(\scriptdescendinfolapp)}{n,i}$, $\pbap{(\scriptdescendinfolabs)}{\afoscopesym}{i}$, and $\bap{(\scriptvar)}{n}$, 
    respectively,
    where the parameters range over $\afoscopesym\in\asigmin$, and $n,i\in\nat$.
\end{definition} 


\begin{example}\label{ex:lopsimred}
  For the \lTRS~$\alTRS$ in Example~\ref{ex:lTRS}, 
  we reduce the term $\afoscope{\bfoscopesym,\cfoscopesym}$,
  which denotes the \lambdaterm~$\alter$ in Example~\ref{ex:lopsimred:ltermrep},
  in the \lopsimTRS~$\lopsimTRSwrt{\alTRS}$ for $\alTRS$:
  \begin{alignat*}{2}
    \lopstart{\afoscope{\bfoscopesym,\cfoscopesym}}
     & \;\,\initred\;\;\: & & 
       \lopni{0}{0}{\afoscope{\bfoscopesym,\cfoscopesym}}
     \\
     & \;\,\descendinfolabsred\;\;\: & &
       \folabs{\afovari{0}}{\lopni{0}{1}{\folapp{\bfoscopesym}{\folapp{\cfoscopesym}{\afovari{0}}}}}
     \displaybreak[0]\\
     & \;\,\descendinfolappred\;\;\: & &
       \folabs{\afovari{0}}{\lopni{1}{1}{\bfoscopesym,\folapp{\cfoscopesym}{\afovari{0}}}}
     \displaybreak[0]\\
     & \;\,\contractred\;\;\: & &
       \folabs{\afovari{0}}{\lopni{0}{1}{\folapp{\cfoscopesym}{\afovari{0}}}}
     \displaybreak[0]\\
     & \;\,\descendinfolappred\;\;\: & &
       \folabs{\afovari{0}}{\lopni{1}{1}{\cfoscopesym,\afovari{0}}}
     \displaybreak[0]\\
     & \;\,\contractred\;\;\: & &
       \folabs{\afovari{0}}{\lopni{0}{1}{\dfoscope{\afovari{0}}}}
     \displaybreak[0]\\
     & \;\,\descendinfolabsred\;\;\: & &
       \folabs{\afovari{0}}{\folabs{\afovari{1}}{\lopni{0}{1}{\folapp{\afovari{1}}{\afovari{0}}}}}
     \displaybreak[0]\\
     & \;\,\descendinfolappred\;\;\: & &
       \folabs{\afovari{0}}{\folabs{\afovari{1}}{\lopni{1}{2}{\afovari{1},\afovari{0}}}}
     \displaybreak[0]\\
     & \;\,\varnred{1}\;\;\: & &
       \folabs{\afovari{0}}{\folabs{\afovari{1}}{\folapp{\afovari{1}}{\lopni{0}{2}{\afovari{0}}}}}
     \displaybreak[0]\\
     & \;\,\varnred{0}\;\;\: & &
       \folabs{\afovari{0}}{\folabs{\afovari{1}}{\folapp{\afovari{1}}{\afovari{0}}}}
  \end{alignat*}
  We obtain an `\alphaequivalent' version of the \lambdaterm\ representation at the end
  of the simulated \lo\ reduction on \lambdaterm\ representations in Example~\ref{ex:lopsimred:ltermrep}.   
\end{example}

%
%
%
%

In order to define how terms in the \lopsimTRS\ denote \lambdaterm\ representations
we extend the expansion~\TRS\ from Definition~\ref{def:expandTRS} with rules
that deal with operation and named-abstraction symbols. 
We want expansion to be an `X-ray picture' of the current state of a term's evaluation. 
Therefore operation symbols $\slop$ and $\slopni{n}{i}$ will mainly be ignored.
However, indices $i$ in operation symbols $\slopni{n}{i}$ will be taken into account to,
to ensure unique naming at comparable positions in the expanded \lambdaterm\ representation.

\begin{definition}[expansion TRS for \lopsimTRS-terms, neglecting further evaluation]%
    \label{def:expandTRS:for:losimTRS}
  Let $\alTRS = \pair{\asig}{\arules}$ be a \lTRS. 
  The \emph{expansion \TRS} $\,\expandTRSlopsimwrt{\alTRS} = \pair{\asiglopsim \cup \asigexpand}{\rulesexp \cup \rulesexpprime}$ \emph{for \lopsimTRS-terms},
  which neglects further evaluation according to $\slopsimred$,
  has as its signature the union of the signature $\asiglopsim$ of $\lopsimTRSwrt{\alTRS}$ 
  and the signature $\asigexpand$ of $\expandTRSwrt{\alTRS}$,
  and as rules the rules $\rulesexp$ of $\expandTRSwrt{\alTRS}$,
  see both in Definition~\ref{def:expandTRS}, together with
  the set of rules $\rulesexpprime$ that consists of:
  \begin{align*}    
    \expandi{i}{\lopstart{\avar}}
      & \;\red\;
    \expandi{i}{\avar}
    \\
    \expandi{i}{\lopni{0}{j}{\avar}}
      & \;\red\;  
    \expandi{i'}{\avar}
    & & \text{for $i' \sdefdby \max\setexp{i,j}$} 
    \\
    \expandi{i}{\lopni{n+1}{j}{\avar,\bvari{1},\ldots\bvari{n+1}}} 
    & \;\red\;
    \expandi{i'}{\folapp{\cdots\folapp{\avar}{\bvari{1}}\ldots}{\bvari{n+1}}}
      & & \text{for $i' \sdefdby \max\setexp{i,j}$} 
  \end{align*}
  The rewrite relation of $\expandTRSlopsimwrt{\alTRS}$ will again be denoted by $\sexpred$.
\end{definition}

\begin{definition}[denoted \lambdaterm\ (representation), extended to \lopsimTRS-terms]
  Let $\alTRS = \pair{\asig}{\arules}$ be a \lTRS. 
  For terms $\ater\in\termsover{\asiglosim}$ in $\lopsimTRSwrt{\alTRS}$,  
  we also denote by $\denlterrepwrt{\alTRS}{\ater}$ 
  the finite or infinite $\sexpred$\nb-normal form of the term $\expandi{0}{\ater}$.
  If it is a \lambdaterm\ representation, then we say that $\denlterrepwrt{\alTRS}{\ater}$ is
  the \emph{denoted \lambdaterm\ representation}~of $\ater$,
  and we again write $\denlterwrt{\alTRS}{\ater}$ for the \lambdaterm~$\denlter{\denlterrepwrt{\alTRS}{\ater}}$.
\end{definition}


\section{Linear depth increase of \lo\ $\beta$-red.\ simulation}
  \label{sec:depth:increase}

In this section we establish that the depth increase of expanded terms
along an arbitrary rewrite sequences in a \lopsimTRS\ is linear in the number of $\scontractred$ steps.

In order to reason directly on terms and contexts of the \lopsimTRS, we define
the `expansion depth' of terms and contexts as the depth of their expanded forms without counting expansion symbols $\sexpandi{i}$. 

\begin{definition}[expansion depth of terms and contexts, expansion hole depth of contexts]
  Let $\alTRS = \pair{\asig}{\arules}$ be a \lTRS, 
  and let $\lopsimTRSwrt{\alTRS} = \pair{\asiglopsim}{\ruleslopsim}$ be the \lopsimTRS\ for $\alTRS$. 
  
  For terms $\bter\in\termsover{\asiglopsim,\vars}$ 
  and contexts $\acxt\in\contextsnover{n}{\asiglopsim,\vars}$, where $n\in\nat$,
  we define by:
  \begin{align*}
    \expdepth{\bter} 
      & {} \,\defdby\,
    \depthnotexp{ \exprednfi{0}{\bter} } \in\nat\cup\setexp{\infty} \punc{,}
    & 
    \expdepth{\acxt} 
      & {} \,\defdby\,
    \depthnotexp{ \exprednfi{0}{\acxt} } \in\nat\cup\setexp{\infty} \punc{,}
    \\
    & & 
    \expholedepth{\acxt} 
      & {} \,\defdby\,
    \holedepthnotexp{ \exprednfi{0}{\acxt} } \in\nat\cup\setexp{\infty} \punc{,}
  \end{align*}
  the \emph{expansion depth}~$\expdepth{\bter}$ of $\bter$, 
  the \emph{expansion depth}~$\expdepth{\acxt}$ of $\acxt$,
  and the \emph{expansion hole depth}~$\expholedepth{\acxt}$ of $\acxt$,
  namely 
  as the depth of the expanded form of $\bter$, 
     the depth of the expanded form of $\acxt$ while ignoring expansion symbols,
  and the hole depth of the expanded form of $\acxt$ while ignoring expansion symbols, respectively. 
  Here and below we denote by $\depthnotexp{\cdot}$ the operation that measures the depth
  of terms and contexts while ignoring symbols $\sexpandi{i}$ for $i\in\nat$.
  So the expansion depth $\expdepth{\acxt}$, and the expansion hole depth $\expholedepth{\acxt}$ of a context $\acxt$
  ignore the symbols $\sexpandi{i_l}$ in guarded hole expressions $\expandi{i_l}{\holei{j_l}}$ 
  in the representation of the expanded forms of $\acxt$ according to Lemma~\ref{lem:expanded:form:cxts}.
\end{definition}

\begin{lemma}\label{lem:finite:expdepth:fin:nested}
  Let $\lopsimTRSwrt{\alTRS} = \pair{\asiglopsim}{\ruleslopsim}$ be the \lopsimTRS{s} $\lopsimTRSwrt{\alTRS}$ for finitely nested \lTRS~$\alTRS$.
  
  Then for every term $\bter\in\termsover{\asiglopsim}$ the expansion depth $\expdepth{\bter}$ of $\bter$ is finite, that is, a natural number.  
  Also, for every context $\acxt\in\contextsover{\asiglopsim}$,
  the expansion depth $\expdepth{\acxt}$, and the expansion hole depth $\expholedepth{\acxt}$ of $\acxt$ are finite. 
\end{lemma}

\begin{proof}
  We argued in Definition~\ref{def:expanded:form:cxts:terms}, the expanded form of terms in $\termsover{\asig\cup\asiglambda}$
  and of contexts in $\contextsover{\asig\cup\asiglambda}$ are well-defined finite terms, and contexts, respectively.
  Now as the expansion depth of a term or context is defined as the depth of the expanded form of the term or context,
  it follows that that the expansion depth of the term or context in question is finite.
\end{proof}

Since a \lambdaterm\ representation $\ater$ and the \lambdaterm~$\denlter{\ater}$ denoted by it
have the same depth, the expansion depth of a term $\ater$ that denotes a \lambdaterm~$\alter$
coincides with the depth of $\alter$.

\begin{proposition}\label{prop:expdepth:lterrep:lter}
  Let $\alTRS = \pair{\asig}{\arules}$ be a \lTRS, 
  and let $\lopsimTRSwrt{\alTRS}$ be the \lopsimTRS\ for $\alTRS$. 
  If for a term $\ater$ in $\lopsimTRSwrt{\alTRS}$
  it holds that $\denlterwrt{\alTRS}{\ater} = \alter$ for a \lambdaterm~$\alter$,
  then $\expdepth{\ater} = \depth{\denlterrepwrt{\alTRS}{\ater}} = \depth{\denlterwrt{\alTRS}{\ater}} = \depth{\alter}$.  
\end{proposition}

The following lemma formulates clauses for the expansion depth
depending on the outermost symbol of a term in a \lopsimTRS. 
For finitely nested \lTRSs, these clauses can be read as an inductive definition. 
They can be proved in a straightforward manner by making use
of the definition  via the expansion \TRS\ of the \lambdaterm\ representations $\denlterrepwrt{\alTRS}{\ater}$
for terms $\ater$ of the \lopsimTRS\ for a \lTRS~$\alTRS$. 

\begin{lemma}[inductive clauses for the expansion depth of terms and contexts]\label{lem:expdepth}
  Let $\alTRS = \pair{\asig}{\arules}$ be a finitely nested \lTRS, 
  and let $\lopsimTRSwrt{\alTRS} = \pair{\asiglopsim}{\ruleslopsim}$ be the \lopsimTRS\ for $\alTRS$. 
  
  The expansition depth 
  $\expdepth{\acxt}$ of contexts $\acxt\in\contextsover{\asiglopsim,\vars}$ satisfies the following clauses:
  \begin{alignat*}{2}
    \expdepth{\avar} & = 0 
      & & (\avar\text{\nf\ variable in $\vars$})
    \\  
    \expdepth{\holei{i}} & = 0
      & & (i\in\nat)
    \\
    %
    \expdepth{\folapp{\acxti{1}}{\acxti{2}}} 
      & = 
    1 + \max \setexp{ \expdepth{\acxti{1}}, \expdepth{\acxti{2}} }
    \displaybreak[0]\\
    \expdepth{\afoscope{\acxti{1},\ldots,\acxti{k}}}
      & = 
    1 + \expdepth{ \afoscopecxtap{\acxti{1},\ldots,\acxti{k},\afovari{0}} }
    \displaybreak[0]\\
    \expdepth{\afovari{j}} 
      & = 
    0
      & & (j\in\nat)
    \displaybreak[0]\\
    \expdepth{\folabs{\afovari{j}}{\bter}}
      & = 
    1 + \expdepth{\bter}
    \displaybreak[0]\\
    \expdepth{\lopstart{\acxt}}
      & = 
    \expdepth{\acxt}  
    \displaybreak[0]\\
    %
      %
    \expdepth{\lopni{n}{i}{\acxti{0},\acxti{1},\ldots,\acxti{n}}}
      & =   
    \expdepth{\folapp{\cdots\folapp{\acxti{0}}{\acxti{1}}\ldots}{\acxti{n}}}   
  \end{alignat*}
  where $i,j,k,l,n\in\nat$, 
  $\acxti{0},\acxti{1},\ldots\in\contextsnover{l}{\asiglopsim,\vars}$ are contexts,
  and $\afoscopesym\in\asigmin$ with $\arityof{\afoscopesym} = k$ are scope symbols in $\alTRS$ with appertaining scope contexts $\afoscopecxt$ in $\alTRS$.
  These clauses specialize to analogous clauses for terms in $\termsover{\asiglopsim,\vars}$,
  because terms can be viewed as contexts without hole occurrences.
\end{lemma}

\begin{proof}
  The base cases of the inductive clauses can be verified as follows.
  For every $\avar\in\vars$, we have
  $\expdepth{\avar}  = \depthnotexp{ \expandi{0}{\avar} } = \depth{\avar} = 0$,
  and for every hole $\holei{i}\in\holes$ we find
  $\expdepth{\holei{i}} = \depthnotexp{ \expandi{0}{\holei{i}} } = \depth{ \holei{i} } = 0$, for $i\in\nat$.
  With the step $\expandi{0}{\afovari{i}} \expred \afovari{i}$
  we get
  $\expdepth{\afovari{i}} = \depthnotexp{ \exprednfi{0}{\afovari{i}} } = \depthnotexp{\afovari{i}} = 0$.
  
  Each of the other cases can be established by arguing with $\sexpred$ steps. We provide two examples.
  For the first one we consider $\acxt \syntequal \folapp{\acxti{1}}{\acxti{2}}$.
  Then
  $\expandi{0}{\folapp{\acxti{1}}{\acxti{2}} } 
     \expred
   \folapp{ \expandi{0}{\acxti{1}} }{ \expandi{0}{\acxti{2}} }$
  is an expansion step on $\expandi{0}{\acxt}$, 
  with which we can argue as follow:
  \begin{align*}
    \expdepth{ \folapp{\acxti{1}}{\acxti{2}} }
      & {} =
    \depthnotexp{ \exprednfi{0}{ \folapp{\acxti{1}}{\acxti{2}} } }
      =
    \depthnotexp{ \exprednf{(\expandi{0}{ \folapp{\acxti{1}}{\acxti{2}} })} }
    \\
      & {} =
    \depthnotexp{ \exprednf{(\folapp{ \expandi{0}{\acxti{1}} }{ \expandi{0}{\acxti{2}} })} }
    \displaybreak[0]\\
      & {} =
    \depthnotexp{ \folapp{ \exprednf{\expandi{0}{\acxti{1}}} }{ \exprednf{\expandi{0}{\acxti{2}}} }}
           =
    \depthnotexp{ \folapp{ \exprednfi{0}{\acxti{1}} }{ \exprednfi{0}{\acxti{2}} } }
    \\
      & {} =
    1 + \max \bigl\{ \depthnotexp{\exprednfi{0}{\acxti{1}}}, \depthnotexp{\exprednfi{0}{\acxti{1}}} \bigr\}
      =
    1 + \max \bigl\{ \expdepth{\acxti{1}}, \expdepth{\acxti{2}} \bigr\} \punc{.} 
  \end{align*}
  As a second example, we consider a context $\acxt \syntequal \afoscope{\acxti{1},\ldots,\acxti{k}}$.
  In this case there is an expansion step of the form
  $\expandi{0}{ \afoscope{\acxti{1},\ldots,\acxti{k}} }
     \expred
   \folabs{\afovari{0}}{ \exprednf{( \expandi{1}{ (\afoscopecxtap{\acxti{1},\ldots,\acxti{k},\afovari{0}}) } )} }$,
  with which we now argue as follows:
  \begin{align*}
    \expdepth{ \afoscope{\acxti{1},\ldots,\acxti{k}} }
      & {} =
    \depthnotexpbig{ \exprednfbigi{0}{ \afoscope{\acxti{1},\ldots,\acxti{k}} } }
      =
    \depthnotexpbig{ \exprednf{(\expandi{0}{ \afoscope{\acxti{1},\ldots,\acxti{k}} })} }
    \\ 
      & {} =
    \depthnotexpbig{ \exprednf{(\folabs{\afovari{0}}{ \expandi{1}{ (\afoscopecxtap{\acxti{1},\ldots,\acxti{k},\afovari{0}}) } })} } 
      & & \hspace*{-20ex} \text{(using the step here)}
    \displaybreak[0]\\
      & {} =
    \depthnotexpbig{ \folabs{\afovari{0}}{ \exprednf{( \expandi{1}{ (\afoscopecxtap{\acxti{1},\ldots,\acxti{k},\afovari{0}}) } )} } } 
    \displaybreak[0]\\
      & {} = 
    \depthnotexpbig{ \folabs{\afovari{0}}{ \exprednfbigi{1}{ (\afoscopecxtap{\acxti{1},\ldots,\acxti{k},\afovari{0}}) } } }  
    \displaybreak[0]\\
      & {} =
    \depthnotexpbig{ \folabs{\afovari{0}}{ \exprednfbigi{0}{ (\afoscopecxtap{\acxti{1},\ldots,\acxti{k},\afovari{0}}) } } }   
      & & \hspace*{-20ex} \text{(by using Lemma~\ref{lem:expanded:form:terms:cxts:difference})}
    \\
      & {} =
    1 + \depthnotexpbig{ \exprednfbigi{0}{ (\afoscopecxtap{\acxti{1},\ldots,\acxti{k},\afovari{0}}) } } 
      = 
    1 + \expdepth{ \afoscopecxtap{\acxti{1},\ldots,\acxti{k},\afovari{0}} }  \punc{,}
  \end{align*}
  in order to obtain the inductive clause for  $\acxt \syntequal \afoscope{\acxti{1},\ldots,\acxti{k}}$.
\end{proof}

We extend the concept of expansion depth also for scope symbols in a natural way. 

\begin{definition}[expansion depth of scope symbols, and of \lTRSs]\label{def:expdepth:foscopsym:lTRS}
  Let $\alTRS = \pair{\asig}{\arules}$ be a \lTRS.
  
  Let $\afoscopesym\in\asigmin$ be a scope symbol in $\alTRS$ with arity $k$. 
  The \emph{expansion depth} $\expdepth{\afoscopesym}$ of $\afoscopesym\in\asigmin$ 
  is defined as $\expdepth{\afoscope{\holei{1},\ldots,\holei{k}}}\in\nat\cup\setexp{\infty}$, that is,
  as the expansion depth of the \nary{k} context $\afoscope{\holei{1},\ldots,\holei{k}}$.
   
  We also define by 
  $ \expdepth{\alTRS} \defdby \max \descsetexp{ \expdepth{\afoscopesym} }{ \afoscopesym\in\asigmin} \in \nat\cup\setexp{\infty}$,
  the \emph{maximal expansion depth} of a scope symbol in $\alTRS$.
\end{definition}

Note that if a \lTRS~$\alTRS$ is finitely nested, then 
$\expdepth{\afoscopesym} = \expdepth{\afoscope{\holei{1},\ldots,\holei{k}}}\in\nat$ 
due to Lemma~\ref{lem:finite:expdepth:fin:nested}.
Furthermore, if in addition to being finitely nested 
$\alTRS$ is also finite, then $\expdepth{\alTRS} \in\nat$.


\begin{lemma}\label{lem:depth:losimTRS}
  Let $\alTRS = \pair{\asig}{\arules}$ be a \lTRS, 
  and let $\lopsimTRSwrt{\alTRS}$ be the \lopsimTRS\ for $\alTRS$. 
  If for a term $\ater$ in $\lopsimTRSwrt{\alTRS}$
  it holds that $\denlterwrt{\alTRS}{\ater} = \alter$ for a \lambdaterm~$\alter$,
  then 
  $\expdepth{\alTRS} \le \depth{\alter}$.   
\end{lemma}

\begin{figure}
\begin{center}
  \begin{tikzpicture}[thick,
    sym/.style={circle,draw,inner sep=.5mm,outer sep=0mm},
    term/.style={draw,inner sep=0.25mm,outer sep=0.2mm,isosceles triangle,isosceles triangle apex angle=40,shape border rotate=90},
    cxt/.style={draw,
                     inner sep=0.25mm,outer sep=0.25mm,isosceles triangle,isosceles triangle apex angle=110,shape border rotate=90}]
    
    \node (ap) [sym] {$\sfolapp$};
    \node (f) [sym] at (-1.25cm,-1cm) {$\afoscopesym$};
    \node (s1) [term,minimum height=1cm,anchor=north] at (-2cm,-2cm) {$\ateri{1}$};
    \node (dots) [anchor=north] at (-1.25,-1.8cm) {$\cdots$};
    \node (sk) [term,minimum height=1cm,anchor=north] at (-.5cm,-2cm) {$\ateri{k}$};
    \node (t) [term,minimum height=2.3cm,anchor=north] at (1.25cm,-0.7cm) {$\bter$};
    
    \draw ($(ap.north) + (0cm,0.25cm)$) -- (ap);
    \draw (ap) -- (f.north);
    \draw (f) -- (s1.north);
    \draw (f) -- (sk.north);
    \draw (ap) -- (t.north);
    
    \draw [|<->|,thin] (-3.25cm,.3cm) -- node [left] {$d$} (-3.25cm,-3.075cm);
    \draw [|<->|,thin] (-3.75cm,.3cm) -- node [left] {$1$} (-3.75cm,-0.7cm);
    \draw [<->|,thin] (-3.75cm,-0.7cm) -- node [left] {$d{-}1$} (-3.75cm,-3.075cm);
    \draw [|<->|,thin] (-2.35cm,-0.7cm) -- node [left] {$\expdepth{\afoscopesym}\!$} (-2.35cm,-2cm);
    
    
    \node (F) [cxt,minimum height=1cm,anchor=north] at (5.5cm,-0.1cm){$\afoscopecxt$};
    
    \node (s1') [term,minimum height=1cm,anchor=north] at (4.35cm,-1.5cm) {$\ateri{1}$};
    \node (dots') [anchor=north] at (5.05cm,-1.5cm) {$\cdots$};
    \node (sk') [term,minimum height=1cm,anchor=north] at (5.75cm,-1.5cm) {$\ateri{k}$};
    \node (t') [term,minimum height=2.3cm,anchor=north] at (6.65cm,-1.5cm) {$\bter$};

    \node (red) at (2.5cm,-0.75cm) {\scalebox{1.75}{$\sred_{\!\ruleof{\!\afoscopesym}}$}};

    \draw ($(F.north) + (0cm,0.3cm)$) -- (F);
    \draw ($(s1'.north) + (0cm,0.37cm)$) -- (s1');
    \draw ($(sk'.north) + (0cm,0.37cm)$) -- (sk');
    \draw ($(t'.north) + (0cm,0.37cm)$) -- (t');
    
    \draw [|<->|,thin] (3.5cm,.3cm) -- node [right,very near start] {$\le d{-}2$} (3.5cm,-2.6cm);
    \draw [|<->|,thin] (7.7cm,.3cm) -- node [right] {$\le \expdepth{\afoscopesym}{-}1$} (7.7cm,-1.5cm);
    \draw [<->|,thin] (7.7cm,-1.5cm) -- node [right] {$\le d{-}1$} (7.7cm,-3.875cm);
    \draw [|<->|,thin] (9.4cm,.3cm) -- node [right] {$\le d {+} (\expdepth{\afoscopesym}{-}2)$} (9.4cm,-3.875cm);
    

  \end{tikzpicture}
\end{center}
    \vspace*{-2.5ex}
  \caption{\label{fig:depth:increase}
    Illustration of the expansion depth increase that is caused by the simulation
    of a $\sbetared$ step at the root of a \lambdaterm~$\alter$
    on a \protect\lTRS\ term that denotes~$\alter$:
    the depth increase in a step
    $\folapp{\afoscope{\ateri{1},\ldots,\ateri{k}}}{\bter} \red \afoscopecxtap{\ateri{1},\ldots,\ateri{k},\bter}$
    is at most $\expdepth{\afoscopesym} - 2$, where
    $\ruleof{\afoscopesym} \funin
     \folapp{\afoscope{\avari{1},\ldots,\avari{k}}}{\bvar}
       \;\sred\;
     \afoscopecxtap{\avari{1},\ldots,\avari{k},\bvar}$
    is the defining rule for scope symbol~$\afoscopesym$.
    The subterm $\bter$ could be duplicated in the step and occur several times below $\afoscopecxt$,
    but only one such occurrence is displayed.
    }
\end{figure}

Next we establish expansion depth variants of the two easy context lemmas
as formulated in Section~\ref{prelims}:
of Lemma~\ref{lem:holedepth:vs:depth}, and Lemma~\ref{lem:depth:cxtap:vs:depth:holedepth}.
Those lemmas are also crucial for the proof of the lemmas below.

\begin{lemma}
  $\expholedepth{ \acxtap{\ateri{1},\ldots,\ateri{n},\Box} } \:\le\: \expdepth{\acxt}\:$
  holds, in a finitely nested \lTRS~$\pair{\asig}{\arules}$, 
  for all terms $\ateri{1},\ldots,\ateri{n}\in\termsover{\asig\cup\asiglambda}$,
  where $n\in\nat$,
  and all contexts $\acxt\in\contextsnover{n+1}{\asig\cup\asiglambda}$
  in which there is at least one occurrence of $\holei{n+1}$.   
    \label{lem:expholedepth:vs:expdepth} 
\end{lemma}

\begin{proof}
  Let $\ateri{1},\ldots,\ateri{n}\in\termsover{\asig\cup\asiglambda}$ be terms,
  and let $\acxt\in\contextsnover{n+1}{\asig\cup\asiglambda}$ be a context
  in which $\holei{n+1}$ has an occurrence.
  Then due to Lemma~\ref{lem:expanded:form:cxts} and Lemma~\ref{lem:expanded:form:cxtap} 
  the expanded forms of $\acxt$ and of $\acxtap{\ateri{1},\ldots,\ateri{n},\hole}$ 
  can be represented, with a linear context $\bcxt\in\contextsnover{m,1}{\asig\cup\asiglambda}$ 
  and $i_1,\ldots,i_m\in\nat$, and $j_1,\ldots,j_m\in\setexp{1,\ldots,n+1}$, 
  as follows:
  \begin{align}
    \exprednfi{0}{ \acxt }
      & \:\syntequal\:
    \bcxtap{ \exprednfi{i_1}{\holei{j_1}}, \ldots, \exprednfi{i_m}{\holei{j_m} } } \punc{,}
      \label{eq:1:prf:lem:expholedepth:vs:expdepth}
    \\
    \exprednfi{0}{ \acxtap{\ateri{1},\ldots,\ateri{n},\hole} }
      & \:\syntequal\:
    \bcxtap{ \exprednfi{i_1}{\ccxti{j_1}}, \ldots, \exprednfi{i_m}{\ccxti{j_m} } } \punc{,}
      \label{eq:2:prf:lem:expholedepth:vs:expdepth}
    \\
    \text{where }
      & 
      \ccxti{i} \defdby \begin{cases}
                          \ateri{i} & \text{ if $i\in\setexp{1,\ldots,n}$} 
                          \\
                          \hole    & \text{ if $i = n+1$} 
                        \end{cases}
                        \hspace*{1ex} \in\contextsnover{1}{\asig\cup\asiglambda}\punc{,}\text{ for $i\in\setexp{1,\ldots,n+1}$.}
      \notag
  \end{align}
  Since $\holei{n+1}$ occurs in $\acxt$, it follows that one of $\exprednfi{i_l}{\ccxti{j_l}}$ for $l\in\setexp{1,\ldots,m}$
  is of the form $\expandi{i_l}{\hole}$. We will use this in the application of Lemma~\ref{lem:holedepth:vs:depth}
  in the following argumentation that we now can perform on the basis of the preparation above: 
  \begin{align*}
    \expholedepthbig{ \acxtap{\ateri{1},\ldots,\ateri{n},\hole}}
      \: & = \:
    \holedepthnotexpbig{ \exprednfi{0}{ \acxtap{\ateri{1},\ldots,\ateri{n},\hole} } }
      & & \text{(by def.\ of $\expholedepth{\cdot}$)}
    \\
      & = \:
    \holedepthnotexpbig{ \bcxtap{ \exprednfi{i_1}{\ccxti{j_1}}, \ldots, \exprednfi{i_m}{\ccxti{j_m} } } }      
      & & \text{(by \eqref{eq:2:prf:lem:expholedepth:vs:expdepth})} 
    \displaybreak[0]\\[-0.35ex]
      & \le
    \depthnotexp{ \bcxt }
      & & \parbox{\widthof{(by using a $\depthnotexp{\cdot}$\nb-version}}
                 {(by using a $\depthnotexp{\cdot}$\nb-version\\\phantom{(by} of Lemma~\ref{lem:holedepth:vs:depth})}      
    \displaybreak[0]\\[-0.35ex]
      & =
    \depthnotexpbig{ \bcxtap{ \expandi{i_1}{\holei{j_1}}, \ldots, \expandi{i_m}{\holei{j_m} } } }
      & & \text{(by def. of $\depthnotexp{\cdot}$)}      
    \displaybreak[0]\\
      & =
    \depthnotexpbig{ \bcxtap{ \exprednfi{i_1}{\holei{j_1}}, \ldots, \exprednfi{i_m}{\holei{j_m} } } }
      & & \text{(by def.\ of $\sexprednfi{\cdot}$)}
    \displaybreak[0]\\
      & =
    \depthnotexpbig{ \exprednfi{0}{\acxt} }
      & & \text{(by \eqref{eq:1:prf:lem:expholedepth:vs:expdepth})} 
    \\
      & =
    \expdepth{ \acxt } 
      & & \text{(by def.\ of $\expholedepth{\cdot}$)} \punc{.}
  \end{align*}
  In this way we have established the inequality as stated by the lemma. 
\end{proof}

\begin{lemma} 
  $\expdepth{ \acxtap{\ater} }  \: = \: \max \setexp{ \expdepth{\acxt},\, \expholedepth{\acxt} + \expdepth{\ater} }$
  holds, in a finitely nested \lTRS~$\pair{\asig}{\arules}$, 
  for all contexts $\acxt\in\contextsnover{1}{\asig\cup\asiglambda}$, and all terms $\ater\in\termsover{\asig\cup\asiglambda}$.%
  \label{lem:expdepth:cxtap:vs:expdepth:expholedepth}
\end{lemma}

\begin{proof}
  Let $\acxt\in\contextsnover{1}{\asig\cup\asiglambda}$, and $\ater\in\termsover{\asig\cup\asiglambda}$.
  Due to Lemma~\ref{lem:expanded:form:cxts} and Lemma~\ref{lem:expanded:form:cxtap} 
  the expanded forms of $\acxt$ and of $\acxtap{\ater}$ 
  can be represented with a linear context $\bcxt\in\contextsnover{m,1}{\asig\cup\asiglambda}$ for $m\in\nat$,
  and $i_1,\ldots,i_m\in\nat$, and $j_1,\ldots,j_m\in\setexp{1,\ldots,n+1}$
  as follows:
  \begin{align}
    \exprednfi{0}{ \acxt }
      & \:\syntequal\:
    \bcxtap{ \exprednfi{i_1}{\holei{j_1}}, \ldots, \exprednfi{i_m}{\holei{j_m} } } \punc{,}
      \label{eq:1:lem:expdepth:cxtap:vs:expdepth:expholedepth}
    \\
    \exprednfi{0}{ \acxtap{\ater} }
      & \:\syntequal\:
    \bcxtap{ \exprednfi{i_1}{\ater}, \ldots, \exprednfi{i_m}{\ater} } \punc{.}
      \label{eq:2:lem:expdepth:cxtap:vs:expdepth:expholedepth}
  \end{align}
  By using \eqref{eq:1:lem:expdepth:cxtap:vs:expdepth:expholedepth} we obtain:
  \begin{gather}
    \expdepth{\acxt}
      =
    \depthnotexp{\exprednfi{0}{\acxt}}
      =
    \depthnotexp{\bcxtap{ \expandi{i_1}{\holei{j_1}}, \ldots, \expandi{i_m}{\holei{j_m} } }}    
      =
    \depth{\bcxt} 
      =
    \depthnotexp{\bcxt} \punc{,}
      \label{eq:3:lem:expdepth:cxtap:vs:expdepth:expholedepth}
    \\
    \expholedepth{\acxt}
      =
    \holedepthnotexp{\exprednfi{0}{\acxt}}
      =
    \holedepthnotexp{\bcxtap{ \expandi{i_1}{\holei{j_1}}, \ldots, \expandi{i_m}{\holei{j_m} } }}    
      =
    \holedepth{\bcxt} 
      =
    \holedepthnotexp{\bcxt} \punc{.}
      \label{eq:4:lem:expdepth:cxtap:vs:expdepth:expholedepth}  
  \end{gather}
  On the basis of these preparations we can now argue: 
  \begin{align*}
    \expdepth{ \acxtap{\ater} }
       \: & = \:
     \depthnotexp{ \exprednfi{0}{\acxtap{\ater}} }
       & & \text{(by def.\ of $\expdepth{\cdot}$)}
     \\
       & = \:
     \depthnotexp{ \bcxtap{ \exprednfi{i_1}{\ater}, \ldots, \exprednfi{i_m}{\ater}  } } 
       & & \text{(by \eqref{eq:2:lem:expdepth:cxtap:vs:expdepth:expholedepth})}   
     \displaybreak[0]\\
       & = \:
     \max \descsetexpbig{ \depthnotexp{\bcxt},\, \holedepthnotexp{\bcxt} + \depthnotexp{ \exprednfi{i_j}{\ater} } }   
                        { j\in\setexp{1,\ldots,n} }
       & & \parbox{\widthof{(by using a $\lvert{\cdot}\rvert^{\scriptnotexp}_{(\hole)}$\nb-version}}
                 {(by using a $\lvert{\cdot}\rvert^{\scriptnotexp}_{(\hole)}$\nb-version\\\phantom{(by} of Lemma~\ref{lem:depth:cxtap:vs:depth:holedepth})}        
     \displaybreak[0]\\
       & = \:
     \max \setexpbig{ \depthnotexp{\bcxt},\, \holedepthnotexp{\bcxt} + \expdepth{\ater} }
       & & \text{(by appeal to Lemma~\ref{lem:expanded:form:terms:cxts:difference})} 
     \displaybreak[0]\\
      & = \:
     \max \setexpbig{ \expdepth{\acxt},\, \expholedepth{\acxt} + \expdepth{\ater} }
       & & \text{(by \eqref{eq:3:lem:expdepth:cxtap:vs:expdepth:expholedepth}, and \eqref{eq:4:lem:expdepth:cxtap:vs:expdepth:expholedepth})} \punc{.}
  \end{align*}
  In this way we have shown the equation as stated by the lemma.
\end{proof}

For analyzing the depth increase of steps in \lopsimTRS{s},
the next two lemmas will be instrumental.
They relate the expansion depth of contexts filled with terms
to the expansion depths of occurring terms.

\begin{lemma}\label{lem:lifting:expdepth:le:in:cxt}
  Let $\alTRS = \pair{\asig}{\arules}$ be a finitely nested \lTRS.
  Then for all unary contexts $\acxt\in\contextsnover{1}{\asig}$, terms $\ater,\bter\in\termsover{\asig}$, and $d\in\nat$
  the following statements hold:
  \begin{align}
    \expdepth{\ater} \;\le\; \expdepth{\bter} + d
      \;\; & \Longrightarrow\;\;
    \expdepthbig{\acxtap{\ater}}
      \;\le\;
    \expdepthbig{\acxtap{\bter}} + d \punc{,}
      \label{eq:1:lem:lifting:expdepth:le:in:cxt}
    \\
    \expdepth{\ater} \;=\; \expdepth{\bter}
      \;\; & \Longrightarrow\;\;
    \expdepthbig{\acxtap{\ater}}
      \;=\;
    \expdepthbig{\acxtap{\bter}} \punc{.}
      \label{eq:2:lem:lifting:expdepth:le:in:cxt}
  \end{align}
\end{lemma}

\begin{proof}
  Let $\acxt\in\contextsnover{1}{\asig}$, $\ater,\bter\in\termsover{\asig}$, and $d\in\nat$.
  To verify \eqref{eq:1:lem:lifting:expdepth:le:in:cxt}
  we assume that $\expdepth{\ater} \;\le\; \expdepth{\bter} + d$ holds, and show the inequality on the right-hand side in \eqref{eq:1:lem:lifting:expdepth:le:in:cxt}. 
  For this we argue as follows:
  \begin{alignat*}{2}
    \expdepth{\acxtap{\ater}}
      & \:=\: 
    \max \setexp{ \expdepth{\acxt},\, \expholedepth{\acxt} + \expdepth{\ater} } 
      & \qquad & \text{(by Lemma~\ref{lem:expdepth:cxtap:vs:expdepth:expholedepth})} 
    \\
      & \:\le\: 
    \max \setexp{ \expdepth{\acxt},\, \expholedepth{\acxt} + \expdepth{\bter} + d }
      & \qquad & \text{(using the assumption)}   
    \displaybreak[0]\\
      & \:\le\: 
    \max \setexp{ \expdepth{\acxt} + d,\, \expholedepth{\acxt} + \expdepth{\bter} + d }
      & \qquad & \text{(possibly increasing the maximum)}  
    \displaybreak[0]\\
      & \: = \: 
    \max \setexp{ \expdepth{\acxt},\, \expholedepth{\acxt} + \expdepth{\bter} } + d
      & \qquad & \text{(simplifying the maximum expression)}  
    \\ 
      & \:=\: 
    \expdepth{\acxtap{\bter}} + d 
      & & \text{(by Lemma~\ref{lem:expdepth:cxtap:vs:expdepth:expholedepth})} \punc{.}  
  \end{alignat*} 
  Statement~\eqref{eq:2:lem:lifting:expdepth:le:in:cxt} follows by using \eqref{eq:1:lem:lifting:expdepth:le:in:cxt}
  with $d = 0$ in both directions. 
\end{proof}

\begin{lemma}\label{lem:cxt:contract-step}
  Let $\alTRS = \pair{\asig}{\arules}$ be a finitely nested \lTRS.
  Then for all contexts $\acxt\in\contextsnover{k+1}{\asig}$, and terms $\ateri{1},\ldots,\ateri{k},\cter\in\termsover{\asig}$, where $k\in\nat$, 
  the following statement holds:
  \begin{equation}\label{eq:lem:cxt:contract-step}
    \expdepth{ \acxtap{\ateri{1},\ldots,\ateri{k},\cter} }  
      \:\le\: 
    \max \setexp{ \expdepth{ \acxtap{\ateri{1},\ldots,\ateri{k},\hole} },\, 
                  \expdepth{\acxt} + \expdepth{\cter} }  \punc{.}   
  \end{equation}             
\end{lemma}

\begin{proof}
  For contexts $\acxt\in\contextsnover{k+1}{\asig}$, and terms $\ateri{1},\ldots,\ateri{k},\cter\in\termsover{\asig}$ with $k\in\nat$ we argue as follows.
  %
  If $\holei{n+1}$ occurs in $\acxt$, then we can argue as follows:
  \begin{alignat*}{2}
    &
    \expdepth{ \acxtap{\ateri{1},\ldots,\ateri{k},\cter} } 
    \\[-0.5ex]
      & \qquad \:=\: 
    \max \setexp{ \expdepth{ \acxtap{\ateri{1},\ldots,\ateri{k},\hole} },\, 
                  \expholedepth{ \acxtap{\ateri{1},\ldots,\ateri{k},\hole } } + \expdepth{\cter} }
      & \quad & \parbox{\widthof{(context $\acxtap{\ateri{1},\ldots,\ateri{k},\hole}$)}}
                       {(by Lemma~\ref{lem:expdepth:cxtap:vs:expdepth:expholedepth}, 
                         using
                        \\\phantom{(}%
                         context $\acxtap{\ateri{1},\ldots,\ateri{k},\hole}$)}            
    \displaybreak[0]\\ 
      & \qquad \:\le\: 
    \max \setexp{ \expdepth{ \acxtap{\ateri{1},\ldots,\ateri{k},\hole} },\, 
                  \expdepth{ \acxt } + \expdepth{\cter} }  
      & & \text{(by using Lemma~\ref{lem:expholedepth:vs:expdepth})} \punc{.}         
  \end{alignat*}  
  and have established the statement \eqref{eq:lem:cxt:contract-step}.
  If, on the other hand, $\holei{n+1}$ does not occurs in $\acxt$, then we argue:
  \begin{align*}
    \expdepth{ \acxtap{\ateri{1},\ldots,\ateri{k},\cter} } 
      =
    \expdepth{ \acxtap{\ateri{1},\ldots,\ateri{k},\hole} } 
      \:\le\: 
    \max \setexp{ \expdepth{ \acxtap{\ateri{1},\ldots,\ateri{k},\hole} },\, 
                  \expdepth{\acxt} + \expdepth{\cter} } \punc{,}
  \end{align*}    
  and have obtained \eqref{eq:lem:cxt:contract-step} again.
\end{proof}\pagebreak[4]

Now we can formulate, and prove, a crucial lemma (Lemma~\ref{lem:depth-increase:contract-step}). 
Its central statement is that the depth increase in a $\scontractred$ step   
(with respect to a \lopsimTRS) at the root of a term
is bounded by the depth of the scope context of the scope symbol that is involved in the step.
See Figure~\ref{fig:depth:increase} for an illustration of the underlying intuition 
for the analogous case of a step according to the defining rule of a scope symbol.  
Then
we obtain a lemma (Lemma~\ref{lem:expdepth:lopsimred:steps}) concerning the depth increase in general $\scontractred$ and $\ssearchred$ steps.

\begin{lemma}\label{lem:depth-increase:contract-step}
  Let $\alTRS = \pair{\asig}{\arules}$ be a finitely nested \lTRS.
  Then for every scope symbol $\afoscopesym\in\asigmin$ with arity $k$ and scope context $\afoscopecxt$,
  and for all terms $\ateri{1},\ldots,\ateri{k},\cter\in\termsover{\asig}$, and all $i\in\nat$, it holds:
  \begin{enumerate}[(i)]\setlength{\itemsep}{0ex}
    \item{}\label{it:1:lem:depth-increase:contract-step}
      $
      \expdepthbig{\afoscopecxtap{\ateri{1},\ldots,\ateri{k},\cter}}
        \;\le\;
      \expdepthbig{\folapp{\afoscope{\ateri{1},\ldots,\ateri{k}}}{\cter}}
        \;+\;
      \expdepth{\afoscopesym} - 2 \punc{.}
      $
    \item{}\label{it:2:lem:depth-increase:contract-step}
      $
      \begin{aligned}[t]
        &   
        \expdepthbig{ \lopni{n}{i}{\afoscopecxtap{\ateri{1},\ldots,\ateri{k},\cteri{1}},\cteri{2},\ldots,\cteri{n+1}} } 
          \; \le  
        \\
        & \hspace*{3ex}
        \le\;  
        \expdepthbig{ \lopni{n+1}{i}{\afoscope{\ateri{1},\ldots,\ateri{k}},\cteri{1},\ldots,\cteri{n+1}} }
          +
        \expdepth{\afoscopesym} - 2 \punc{.}
      \end{aligned}   
      $
  \end{enumerate}
\end{lemma}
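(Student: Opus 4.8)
The plan is to prove~\ref{it:1:lem:depth-increase:contract-step} directly from Lemma~\ref{lem:cxt:contract-step} together with the clauses of Proposition~\ref{prop:ldepth}, and then to obtain~\ref{it:2:lem:depth-increase:contract-step} as a consequence of~\ref{it:1:lem:depth-increase:contract-step}, by recognising both $\lambda$-term depths occurring in~\ref{it:2:lem:depth-increase:contract-step} as depths of one common application spine and feeding~\ref{it:1:lem:depth-increase:contract-step} into the context-monotonicity Lemma~\ref{lem:lifting:ldepth:le:in:cxt}.

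For~\ref{it:1:lem:depth-increase:contract-step} I would first record, via the scope and abstraction clauses of Proposition~\ref{prop:ldepth}, that $\ldepth{\afoscope{\ateri{1},\ldots,\ateri{k}}} = 1 + \ldepth{\afoscopecxtap{\ateri{1},\ldots,\ateri{k},\avar}}$ (a fresh variable constant and $\avar$ both have depth $0$, so they may be interchanged by Lemma~\ref{lem:lifting:ldepth:le:in:cxt}\eqref{it:2:lem:lifting:ldepth:le:in:cxt}) and, likewise, that $\ldepth{\afoscopecxtap{\avari{1},\ldots,\avari{k+1}}} = \ldepth{\afoscopesym} - 1$. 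Instantiating Lemma~\ref{lem:cxt:contract-step} with its context taken to be the scope context $\afoscopecxt$ and its arguments $\ateri{1},\ldots,\ateri{k},\bter$ gives $\ldepth{\afoscopecxtap{\ateri{1},\ldots,\ateri{k},\bter}} \le \max\setexp{ \ldepth{\afoscopecxtap{\ateri{1},\ldots,\ateri{k},\avar}},\, \ldepth{\afoscopecxtap{\avari{1},\ldots,\avari{k+1}}} + \ldepth{\bter} }$, and it remains to bound each entry of the maximum by the right-hand side of~\ref{it:1:lem:depth-increase:contract-step}. Writing $a \defdby \ldepth{\folapp{\afoscope{\ateri{1},\ldots,\ateri{k}}}{\bter}} = 1 + \max\setexp{ \ldepth{\afoscope{\ateri{1},\ldots,\ateri{k}}},\, \ldepth{\bter} }$, the first entry equals $\ldepth{\afoscope{\ateri{1},\ldots,\ateri{k}}} - 1$, hence is at most $a - 2 \le a + \ldepth{\afoscopesym} - 2$ (using $a \ge 1 + \ldepth{\afoscope{\ateri{1},\ldots,\ateri{k}}}$ and $\ldepth{\afoscopesym} \ge 0$); and the second entry equals $(\ldepth{\afoscopesym}-1) + \ldepth{\bter}$, hence is at most $(\ldepth{\afoscopesym}-1) + (a-1) = a + \ldepth{\afoscopesym} - 2$ (using $a \ge 1 + \ldepth{\bter}$). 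This is exactly the claimed bound.

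For~\ref{it:2:lem:depth-increase:contract-step} I would introduce the unary context $\acxt$ determined by $\acxtap{\ater} = \folapp{\cdots\folapp{\ater}{\bteri{2}}\ldots}{\bteri{n+1}}$, the $n$-fold spine feeding $\bteri{2},\ldots,\bteri{n+1}$ (for $n=0$ this context is trivial, and~\ref{it:2:lem:depth-increase:contract-step} collapses to~\ref{it:1:lem:depth-increase:contract-step}). By the thunk clause of Proposition~\ref{prop:ldepth} the two sides of~\ref{it:2:lem:depth-increase:contract-step} are the $\lambda$-term depths of the same spine read in two ways, namely $\ldepth{\acxtap{\afoscopecxtap{\ateri{1},\ldots,\ateri{k},\bteri{1}}}}$ and $\ldepth{\acxtap{\folapp{\afoscope{\ateri{1},\ldots,\ateri{k}}}{\bteri{1}}}}$, since the head of the $(n{+}1)$-ary thunk together with its first argument $\bteri{1}$ forms precisely the application $\folapp{\afoscope{\ateri{1},\ldots,\ateri{k}}}{\bteri{1}}$. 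Now part~\ref{it:1:lem:depth-increase:contract-step}, instantiated with $\bter \defdby \bteri{1}$, reads $\ldepth{\afoscopecxtap{\ateri{1},\ldots,\ateri{k},\bteri{1}}} \le \ldepth{\folapp{\afoscope{\ateri{1},\ldots,\ateri{k}}}{\bteri{1}}} + (\ldepth{\afoscopesym}-2)$, which is exactly the hypothesis of Lemma~\ref{lem:lifting:ldepth:le:in:cxt}\eqref{it:1:lem:lifting:ldepth:le:in:cxt} with $d \defdby \ldepth{\afoscopesym}-2$; its conclusion applied to the context $\acxt$ is then precisely~\ref{it:2:lem:depth-increase:contract-step}.

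The step I expect to be the main obstacle is this very last one: Lemma~\ref{lem:lifting:ldepth:le:in:cxt}\eqref{it:1:lem:lifting:ldepth:le:in:cxt} is stated for an increment $d \in \nat$, so the reduction is only immediate when $\ldepth{\afoscopesym} \ge 2$. Since $\ldepth{\afoscopesym} \ge 1$ always holds, the one boundary case left is a scope symbol of $\lambda$-term depth exactly $1$, i.e.\ one whose scope context has a depth-$0$ body — for instance the scope symbol representing $\labs{\avar}{\avar}$, for which $\afoscopecxtap{\bter} = \bter$. This case I would examine by direct computation from the clauses of Proposition~\ref{prop:ldepth}; it is also the case where the sharpness of the additive constant $-2$ has to be scrutinised, since there $\ldepth{\afoscopesym} - 2$ becomes negative and the bound is most naturally understood with a cut-off at $0$ (the weaker estimate with $\ldepth{\afoscopesym}$ in place of $\ldepth{\afoscopesym}-2$ holds unconditionally and is in any case more than adequate for the linear depth increase result, whose per-step bound is only $\depth{\alter}$).
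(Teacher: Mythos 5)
Your proposal reproduces the paper's proof essentially step for step. Part~\ref{it:1:lem:depth-increase:contract-step} is obtained, as in the paper, from Lemma~\ref{lem:cxt:contract-step} together with the identities $\ldepth{\afoscopecxtap{\avari{1},\ldots,\avari{k+1}}} = \ldepth{\afoscopesym}-1$ and $\ldepth{\folapp{\afoscope{\ateri{1},\ldots,\ateri{k}}}{\bter}} = 1 + \max\setexp{\ldepth{\afoscope{\ateri{1},\ldots,\ateri{k}}},\ldepth{\bter}}$ extracted from Proposition~\ref{prop:ldepth}; that you bound the two branches of the maximum separately where the paper factors the additive term out of the maximum is only a difference of bookkeeping. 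Part~\ref{it:2:lem:depth-increase:contract-step} is obtained exactly as in the paper: both thunk depths are identified with depths of the corresponding application spines, and~\ref{it:1:lem:depth-increase:contract-step} is lifted into the context $\acxt \defdby \thunkni{n}{i}{\Box,\bteri{2},\ldots,\bteri{n+1}}$ via Lemma~\ref{lem:lifting:ldepth:le:in:cxt}\eqref{it:1:lem:lifting:ldepth:le:in:cxt} with $d \defdby \ldepth{\afoscopesym}-2$.

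The obstacle you flag at the end is real, and it is not one you introduced: the paper's own proof invokes Lemma~\ref{lem:lifting:ldepth:le:in:cxt} with ``constant $d \defdby \ldepth{\afoscopesym}-2$'' without noting that the lemma is stated (and is only true) for $d\in\nat$. Your suspicion that the case $\ldepth{\afoscopesym}=1$ needs a cut-off is correct, and the ``direct computation'' you defer would in fact show that the literal inequality in~\ref{it:2:lem:depth-increase:contract-step} can fail there. Take the nullary scope symbol $\bfoscopesym$ with defining rule $\folapp{\bfoscopesym}{\avar}\red\avar$ (so its scope context is $\Box$ and $\ldepth{\bfoscopesym}=1$), $n=1$, and terms with $\ldepth{\bteri{1}}=0$ and $\ldepth{\bteri{2}}=100$: the left-hand side is $\ldepth{\thunkni{1}{i}{\bteri{1},\bteri{2}}} = \ldepth{\folapp{\bteri{1}}{\bteri{2}}} = 101$, while the right-hand side is $\ldepth{\thunkni{2}{i}{\bfoscopesym,\bteri{1},\bteri{2}}} - 1 = \ldepth{\folapp{\folapp{\bfoscopesym}{\bteri{1}}}{\bteri{2}}} - 1 = 101 - 1 = 100$. (Part~\ref{it:1:lem:depth-increase:contract-step} does survive this case, as you verified.) So the additive constant in~\ref{it:2:lem:depth-increase:contract-step} must be read as $\max\setexp{\ldepth{\afoscopesym}-2,\,0}$, or weakened as you suggest; either repair propagates harmlessly through Lemma~\ref{lem:ldepth:losimred:steps} and Theorems~\ref{thm:main:lTRS} and~\ref{thm:main}, whose per-step bound $\depth{\alter}$ is unaffected. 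In short: your proof is the paper's proof, and the one point where you hedge is precisely the one point where the paper's argument (and the stated constant) needs amending.
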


\begin{proof}
  We let $\afoscopesym$, $\afoscopecxt$, $\ateri{1},\ldots,\ateri{k},\bter$, and $i$ be as assumed 
  in the lemma.
  We establish statement~\eqref{it:1:lem:depth-increase:contract-step} as follows:
  \begin{alignat*}{2}
    &
    \expdepthbig{\afoscopecxtap{\ateri{1},\ldots,\ateri{k},\cter}}
    \\
    & \quad \;\,\parbox[t]{\widthof{${}\le{}$}}{${}\le{}$}\;  
    \max \setexpbig{ \, \expdepth{\afoscopecxtap{\ateri{1},\ldots,\ateri{k},\hole}}, \:
                        \expdepth{\afoscopecxt} + \expdepth{\cter} \, }
      & & \qquad \text{(by Lemma~\ref{lem:depth-increase:contract-step})}                     
    \displaybreak[0]\\
    & \quad \;\,\parbox[t]{\widthof{${}\le{}$}}{${}={}$}\;  
    \max \setexpbig{ \, \expdepth{\afoscope{\ateri{1},\ldots,\ateri{k}}} - 1, \:
                        \expdepth{\afoscopesym} - 1 + \expdepth{\cter} \, } 
      & & \qquad \text{(by Def.~\ref{def:expdepth:foscopsym:lTRS}, and Lemma~\ref{lem:expdepth})}                      
    \displaybreak[0]\\
    & \quad \;\,\parbox[t]{\widthof{${}\le{}$}}{${}\le{}$}\;  
    \max \setexpbig{ \, \expdepth{\afoscope{\ateri{1},\ldots,\ateri{k}}} + \expdepth{\afoscopesym} - 1, \:
                        \expdepth{\afoscopesym} - 1 + \expdepth{\cter} \, }
      & & \qquad \parbox{\widthof{(by possibly increasing}}
                        {(by possibly increasing
                         \\\phantom{(}%
                         the maximum)}                    
    \displaybreak[0]\\
    & \quad \;\,\parbox[t]{\widthof{${}\le{}$}}{${}={}$}\;  
    ( \max \setexpbig{ \, \expdepth{\afoscope{\ateri{1},\ldots,\ateri{k}}}, \:
                          \expdepth{\cter} \, } )
     + \expdepth{\afoscopesym} - 1 
       & & \qquad \text{(simplification)}
    \displaybreak[0]\\
    & \quad \;\,\parbox[t]{\widthof{${}\le{}$}}{${}={}$}\;  
    ( 1 + \max \setexpbig{ \, \expdepth{\afoscope{\ateri{1},\ldots,\ateri{k}}}, \:
                           \expdepth{\cter} \, } )
     + \expdepth{\afoscopesym} - 2 
      & & \qquad \text{(rearrangement)}
    \\
    & \quad \;\,\parbox[t]{\widthof{${}\le{}$}}{${}={}$}\;   
      \expdepthbig{\folapp{\afoscope{\ateri{1},\ldots,\ateri{k}}}{\cter}}
        \;+\;
      \expdepth{\afoscopesym} - 2 
      & & \qquad \text{(by Lemma~\ref{lem:expdepth})}  \punc{.}
  \end{alignat*}
  
  For showing statement~\eqref{it:2:lem:depth-increase:contract-step} 
  we proceed by lifting the inequality in statement~\eqref{it:1:lem:depth-increase:contract-step} into a context
  by means of Lemma~\ref{lem:lifting:expdepth:le:in:cxt}.
  More precisely, we argue as follows by means of the inductive clauses in Lemma~\ref{lem:expdepth},
  and by appealing to Lemma~\ref{lem:lifting:expdepth:le:in:cxt} 
  for the context $\acxt \defdby \folapp{\cdots\folapp{{\hole}}{\cteri{1}}\cdots}{\cteri{n+1}}$:
  \begin{align*}
    &
    \expdepthbig{ \lopni{n}{i}{\afoscopecxtap{\ateri{1},\ldots,\ateri{k},\cteri{1}},\cteri{2},\ldots,\cteri{n+1}} } 
    \\
    & \qquad                
      \; = \;
    \expdepthbig{ \folapp{\cdots\folapp{\afoscopecxtap{\ateri{1},\ldots,\ateri{k},\cteri{1}}}{\cteri{2}}\ldots}{\cteri{n+1}} } 
    \displaybreak[0]\\
    & \qquad                
      \; \le \;
    \expdepthbig{ \folapp{\cdots\folapp{\folapp{\afoscope{\ateri{1},\ldots,\ateri{k}}}{\cteri{1}}}{\cteri{2}}\ldots}{\cteri{n+1}} }
      + \expdepth{\afoscopesym} - 2 
    \\
    & \qquad                
      \; = \;
    \expdepthbig{ \lopni{n+1}{i}{\afoscope{\ateri{1},\ldots,\ateri{k}},\cteri{1},\ldots,\cteri{n+1}} }
      + \expdepth{\afoscopesym} - 2 \punc{.}
  \end{align*}
  In this way we have now also justified the inequality in statement~\eqref{it:2:lem:depth-increase:contract-step}.
\end{proof}

\begin{lemma}\label{lem:expdepth:lopsimred:steps}
  Let $\lopsimTRSwrt{\alTRS} = \pair{\asiglopsim}{\ruleslopsim}$ be the \lopsimTRS\ 
  for a finitely nested \lTRS~$\alTRS = \pair{\asig}{\arules}$.
  
  Then every $\ssearchred$ step in $\lopsimTRSwrt{\alTRS}$ preserves the expansion depth,
  and every $\scontractred$ step increases the expansion depth by less that the expansion depth
  of the scope symbol $\afoscopesym$ involved in the contraction. 
  More precisely, the following statements hold for all $\bteri{1},\bteri{2}\in\termsover{\asiglopsim}$:
  \begin{enumerate}[(i)]\setlength{\itemsep}{0.5ex} 
    \item{}\label{it:1:lem:expdepth:lopsimred:steps}
      If $\,\bteri{1} \searchred \bteri{2}\,$, then $\,\expdepth{\bteri{1}} \;=\; \expdepth{\bteri{2}}\,$.
    \item{}\label{it:2:lem:expdepth:lopsimred:steps}
      If $\,\bteri{1} \contractred \bteri{2}\,$, then $\,\expdepth{\bteri{2}} \;\le\; \expdepth{\bteri{1}} + \expdepth{\afoscopesym}-2 \,$,
      where $\afoscopesym$ is the scope symbol involved in the step.   
  \end{enumerate}
\end{lemma}

\begin{proof}
  We first reduce the proof obligation for both items of the lemma
  to statements that pertain to rewrite steps that take place at the root of the term $\bteri{1}$. 
  This is because for non-root $\ssearchred$ and $\scontractred$ steps  
  the corresponding property can be lifted into a rewriting context by using Lemma~\ref{lem:lifting:expdepth:le:in:cxt}.
  For instance, consider a step $\bteri{1} \contractred \bteri{2}$ that does not take place at the root of $\bteri{1}$. 
  As such it is of the form $\bteri{1} \syntequal \acxtap{\bteri{10}} \contractred \acxtap{\bteri{20}} \syntequal \bteri{2}$
  for some \nontrivial\ unary context $\acxt \notsyntequal \hole$ and subterms $\bteri{10}$ and $\bteri{20}$ of $\bteri{1}$ and $\bteri{2}$, respectively,
  such that $\bteri{10} \contractred \bteri{20}$ is a root step. 
  Now under the assumption that \eqref{it:2:lem:expdepth:lopsimred:steps} holds for root $\scontractred$ steps,
  we have $\expdepth{\bteri{20}} \le \expdepth{\bteri{10}} + \expdepth{\afoscopesym} - 2$. 
  Then by using equation \eqref{eq:1:lem:lifting:expdepth:le:in:cxt} in Lemma~\ref{lem:lifting:expdepth:le:in:cxt}
  we obtain the desired inequality as follows:
  \begin{equation*}
    \expdepth{\bteri{2}} = \expdepth{\acxtap{\bteri{20}}} 
                       \le \expdepth{\acxtap{\bteri{10}}} + \expdepth{\afoscopesym} - 2  
                       = \expdepth{\bteri{1}} + \expdepth{\afoscopesym} - 2 \punc{,}
  \end{equation*}
  For non-root $\ssearchred$ steps, preservation of expansion depth can be argued analogously
  by using equation \eqref{eq:2:lem:lifting:expdepth:le:in:cxt} in Lemma~\ref{lem:lifting:expdepth:le:in:cxt},
  under the assumption that root  $\ssearchred$ steps preserve expansion depth.
  
  It remains to show that the statements in \eqref{it:1:lem:depth-increase:contract-step} and \eqref{it:2:lem:depth-increase:contract-step} 
  hold for root steps. 
  We start with showing this for item~\eqref{it:1:lem:depth-increase:contract-step},
  by inspecting the rules of the \lopsimTRS,
  and by using the clauses of expansion depth in Lemma~\ref{lem:expdepth}.
  The case of a root $\sinitred$ step is straightforward. 
  Now we consider the case of a root $\sdescendinfolappred$ step, which is of the form: 
  \begin{equation*}
    \bteri{1} 
      \syntequal 
    \lopni{n}{i}{\folapp{\ateri{1}}{\ateri{2}},\cteri{1},\ldots,\cteri{n}}
      \; \descendinfolappred \;
    \lopni{n+1}{i}{\ateri{1},\ateri{2},\cteri{1},\ldots,\cteri{n}}  
      \syntequal
    \bteri{2} \punc{,}
  \end{equation*}
  for some $n,i\in\nat$.
  Here we easily conclude with the clauses for the expansion depth in Lemma~\ref{lem:expdepth}: 
  \begin{align*}
    \expdepth{\bteri{1}}
      \: = \:
    \expdepthbig{ \lopni{n}{i}{\folapp{\ateri{1}}{\ateri{2}},\cteri{1},\ldots,\cteri{n}} } 
      & \: = \:
    \expdepthbig{ \folapp{\cdots\folapp{\folapp{\ateri{1}}{\ateri{2}}}{\cteri{1}}\ldots}{\cteri{n}} }  
    \\
      & \: = \:
    \expdepthbig{ \lopni{n+1}{i}{\ateri{1},\ateri{2},\cteri{1},\ldots,\cteri{n+1}} }
      \: = \:
    \expdepth{\bteri{2}} \punc{.}
  \end{align*}
  Next we consider a root $\sdescendinfolabsred$ step. With some $i\in\nat$ it is of the form:
  \begin{equation*}
    \bteri{1}
      \syntequal
    \lopni{0}{i}{\afoscope{\ateri{1},\ldots,\ateri{k}}}
      \; \descendinfolabsred \;
    \folabs{\afovari{i}}{\lopni{0}{i+1}{\afoscopecxtap{\ateri{1},\ldots,\ateri{k},\afovari{i}}}} 
      \syntequal
    \bteri{2} \punc{.}
  \end{equation*}
  Here we argue as follows by using clauses for the expansion depth in Lemma~\ref{lem:expdepth}:
  \begin{align*}
    \expdepth{\bteri{2}}
      =
    \expdepthbig{ \folabs{\afovari{i}}{\lopni{0}{i+1}{\afoscopecxtap{\ateri{1},\ldots,\ateri{k},\afovari{i}}}} }
      & \: = \:
    1 + \expdepthbig{ \lopni{0}{i+1}{\afoscopecxtap{\ateri{1},\ldots,\ateri{k},\afovari{i}}} }  
    \displaybreak[0]\\
      & \: = \:
    1 + \expdepthbig{ \afoscopecxtap{\ateri{1},\ldots,\ateri{k},\afovari{i}} }
    \displaybreak[0]\\
      & \: = \:
    \expdepthbig{ \folabs{\afovari{i}}{\afoscopecxtap{\ateri{1},\ldots,\ateri{k},\afovari{i}}} } 
    \displaybreak[0]\\
      & \: = \:
    \expdepth{ \afoscope{\ateri{1},\ldots,\ateri{k}} }
    \\
      & \: = \:
    \expdepthbig{ \lopni{0}{i}{\afoscope{\ateri{1},\ldots,\ateri{k}}} }
      =
    \expdepth{\bteri{1}} \punc{.}
  \end{align*}
  The case of a root $\svarred$ step is again easy,
  both according to the rule $\bap{(\scriptvar)}{0,i}$,
  also according to the rule $\bap{(\scriptvar)}{n+1,i}$,
  by using the clauses for $\lopni{0}{i}{\avar}$, and for $\lopni{n+1}{i}{\avar,\bvari{1},\ldots,\bvari{n+1}}$, respectively. 
  
  For showing the restriction of item~\eqref{it:2:lem:depth-increase:contract-step}
  to root steps, we consider a $\scontractred$ steps at the root.
  Such a step is of the form:
  \begin{equation*}
    \bteri{1}
      \syntequal
    \lopni{n+1}{i}{\afoscope{\ateri{1},\ldots,\ateri{k}},\cteri{1},\ldots,\cteri{n+1}}
      \;\;\scontractred\;\;
    \lopni{n}{i}{\afoscopecxtap{\ateri{1},\ldots,\ateri{k},\cteri{1}},\cteri{2},\ldots,\cteri{n+1}} 
      \syntequal
    \bteri{2} \punc{.}  
  \end{equation*}
  Then the desired expansion depth inequality 
  $\, \expdepth{ \bteri{2} }
        \le
      \expdepth{ \bteri{1} }
        +
      \expdepth{ \afoscopesym }
        -
      2 $
  follows from Lemma~\ref{lem:depth-increase:contract-step},~\eqref{it:2:lem:depth-increase:contract-step}.
\end{proof}

By a direct application of this lemma we obtain our main result concerning
the depth increase of terms in $\slopsimred$ rewrite sequences.

\begin{theorem}\label{thm:main:lTRS}
  Let $\alTRS = \pair{\asig}{\arules}$ be a finite, and finitely nested \lTRS,
  and let $D \defdby \expdepth{\alTRS}$. 
  Let $\arewseq$ be a finite or infinite $\slopsimred$ rewrite sequence~$\arewseq$ with initial term $\ater$.
  Then $\arewseq$ can be construed as a sequence of $\ssearchmred$ and $\scontractred$ steps: 
  \begin{center}
    $
  \begin{aligned}
    \arewseq \;\funin\; 
       \ater 
         = \cteri{0} 
             \searchmred
           \cteracci{0}
             \contractred  
           \cteri{1} 
             \searchmred
           \cteracci{1} & {}
             \contractred
               \cdots  
           \\
               \cdots & {}    
             \contractred
           \cteri{n} 
             \searchmred
           \cteracci{n}
             \; ( {}
             \contractred  
           \cteri{n+1} 
             \searchmred
               \cdots
             \: )
           \punc{,}
  \end{aligned}
    $   
  \end{center}
  and then the following statements hold for all $n\in\nat$ with $n\le l\,$ where $l\in\nat\cup\setexp{\infty}$ is the length of $\arewseq\,$:
  \begin{enumerate}[(i)]\setlength{\itemsep}{0ex}
    \item{}\label{it:1:thm:main:lTRS}
      $\expdepth{\cteri{n}} = \expdepth{\cteracci{n}}$,
      and 
      $\expdepth{\cteri{n+1}} 
         \le 
       \expdepth{\cteracci{n}} + (D-2)$
       if $n + 1\le l$,
      that is more verbally, the expansion depth remains the same in the $\ssearchred$ steps, and
      it increases by at most $D-2$ in the $\scontractred$ steps. 
    \item{}\label{it:2:thm:main:lTRS}  
      $\expdepth{\cteri{n}}, \expdepth{\cteracci{n}} \;\le\; \expdepth{\ater} + (D-2)\cdot n\,$,
      that is,
      the increase of the expansion depth along $\arewseq$ is linear
      in the number of $\scontractred$ steps performed, with $(D-2)$ as multiplicative constant. 
  \end{enumerate}  
\end{theorem}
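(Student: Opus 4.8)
The plan is to reduce the theorem to the per-step bounds already established in Lemma~\ref{lem:ldepth:losimred:steps}, and then to propagate those bounds along the rewrite sequence by a routine induction; all the genuine difficulty is concentrated in that lemma and its supporting results (Lemmas~\ref{lem:lifting:ldepth:le:in:cxt}, \ref{lem:cxt:contract-step}, and~\ref{lem:depth-increase:contract-step}). The one new, but trivial, ingredient is the passage from the step-local increment $\ldepth{\afoscopesym}-2$, which depends on the scope symbol $\afoscopesym$ contracted in a given step, to the uniform increment $D-2$: by definition $D = \ldepth{\alTRS} = \max\descsetexp{\ldepth{\afoscopesym}}{\afoscopesym\in\asigmin}$, so $\ldepth{\afoscopesym} \le D$ for every $\afoscopesym\in\asigmin$, and since $\alTRS$ is finite and finitely nested, $D\in\nat$ is finite (cf.\ Proposition~\ref{prop:ldepth:fin:nested}). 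I would also note at the outset that the block decomposition of $\arewseq$ presupposed by the statement is always available: scanning $\arewseq$ from left to right, each maximal run of $\ssearchred$\nb-steps is collected into one $\searchmred$\nb-phase $\cteri{n}\searchmred\cteracci{n}$, terminated by the next $\scontractred$\nb-step $\cteracci{n}\contractred\cteri{n+1}$; a possibly empty run (in particular two adjacent $\scontractred$\nb-steps) simply gives $\cteracci{n}=\cteri{n}$. Since every reduct occurring in $\arewseq$ lies in the fragment on which Lemma~\ref{lem:ldepth:losimred:steps} is stated, its two clauses apply at every step.

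First I would establish item~\ref{it:1:thm:main:lTRS}. For the equality $\ldepth{\cteri{n}}=\ldepth{\cteracci{n}}$, note that $\cteri{n}\searchmred\cteracci{n}$ is a finite concatenation of single $\ssearchred$\nb-steps, each of which preserves the \lambdaterm\ depth by Lemma~\ref{lem:ldepth:losimred:steps}\ref{it:2:lem:ldepth:losimred:steps}; transitivity of equality along the phase then yields $\ldepth{\cteri{n}}=\ldepth{\cteracci{n}}$ (an empty phase gives this trivially). For the inequality in the case $n+1\le l$, the single step $\cteracci{n}\contractred\cteri{n+1}$ contracts some scope symbol $\afoscopesym\in\asigmin$, and Lemma~\ref{lem:ldepth:losimred:steps}\ref{it:1:lem:ldepth:losimred:steps} gives $\ldepth{\cteri{n+1}} \le \ldepth{\cteracci{n}} + (\ldepth{\afoscopesym}-2)$; bounding $\ldepth{\afoscopesym}-2 \le D-2$ as above delivers $\ldepth{\cteri{n+1}} \le \ldepth{\cteracci{n}} + (D-2)$.

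Next I would derive item~\ref{it:2:thm:main:lTRS} by induction on $n$, uniformly for finite and infinite $l$ (the quantification ranges over all $n\le l$). In the base case $n=0$ we have $\cteri{0}=\ater$, hence $\ldepth{\cteri{0}}=\ldepth{\ater}$, and $\ldepth{\cteracci{0}}=\ldepth{\cteri{0}}=\ldepth{\ater}$ by item~\ref{it:1:thm:main:lTRS}; both are $\le \ldepth{\ater}+(D-2)\cdot 0$. For the step, assume $\ldepth{\cteri{n}} \le \ldepth{\ater}+(D-2)\cdot n$ and $n+1\le l$. Item~\ref{it:1:thm:main:lTRS} first gives $\ldepth{\cteracci{n}}=\ldepth{\cteri{n}} \le \ldepth{\ater}+(D-2)\cdot n$, and then $\ldepth{\cteri{n+1}} \le \ldepth{\cteracci{n}}+(D-2) \le \ldepth{\ater}+(D-2)\cdot n+(D-2) = \ldepth{\ater}+(D-2)\cdot(n+1)$, which closes the induction; the bound for $\cteracci{n+1}$ follows from the one for $\cteri{n+1}$ by the depth-preservation of the ensuing search phase.

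I do not expect a genuine obstacle in this step of the development: once Lemma~\ref{lem:ldepth:losimred:steps} is in hand the theorem is essentially bookkeeping. The only points that deserve an explicit line are the uniformity of the induction across the infinite case, the legitimacy of the presupposed block structure of $\arewseq$, and the remark that every reduct occurring in $\arewseq$ lies in the fragment on which Lemma~\ref{lem:ldepth:losimred:steps} is valid (so that its clauses may be invoked at each step); all three are addressed in the first paragraph. The substantive work, namely the analysis of how \lambdaterm\ depth behaves under a single redex contraction and under rewriting contexts, resides entirely in the preceding lemmas.
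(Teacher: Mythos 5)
Your proposal is correct and follows essentially the same route as the paper: item~(i) is read off from Lemma~\ref{lem:ldepth:losimred:steps} together with the observation that $\ldepth{\afoscopesym}\le D$ for every scope symbol, and item~(ii) is obtained by accumulating the per-step bound $D-2$ over the $n$ contraction steps (the paper phrases this as ``adding up'' where you make the induction explicit). The extra remarks on the block decomposition of $\arewseq$ and the uniform treatment of the infinite case are harmless elaborations of what the paper leaves implicit.
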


\begin{proof}
  Statement~\eqref{it:1:thm:main:lTRS} 
  follows directly from Lemma~\ref{lem:expdepth:lopsimred:steps},~\eqref{it:1:lem:expdepth:lopsimred:steps}, and \eqref{it:2:lem:expdepth:lopsimred:steps}. 
  Statement~\eqref{it:2:thm:main:lTRS} follows by adding up the uniform bound $D$ 
  on the expansion depth increase in the $n$ $\scontractred$ steps of the rewrite sequence $\arewseq$.    
\end{proof}


\section{Transfer to \lo\ 
         $\beta$\nb-reduction in the \lambdacalculus}
  \label{sec:transfer:lambda-calculus}

In this section we sketch how the linear-depth-increase result can be transferred from simulating rewrite sequences on terms of the \lopsimTRS~$\lopsimTRSwrt{\alTRS}$
for a \lTRS~$\alTRS$ to \lo\ \betareduction\ rewrite sequences on terms of the \lambdacalculus. 
We formulate correspondence statements via projection and lifting. 
In particular, we formulate statements about the projections of $\slopsimred$ steps to \betareduction\ steps on \lambdaterms,
where the projection takes place via expansion to expanded-form \lambdaterm\ representations,
and about the lifting of \lo\ \betareduction\ rewrite sequences to \lo\ rewrite sequences in \lopsimTRS{s},
where the lifting has to be defined via fully-lazy \lambdalifting. 
We do not prove these statements here, but we illustrate them by means of our running example. 
On the basis of such correspondences between rewrite sequences, the linear-depth-increase result for \lo\ \betareduction\ in the \lambdacalculus\
follows from the linear-depth-increase result for \lopsimTRS\ in Section~\ref{sec:depth:increase}. 


The first correspondence statement concerns the projection of $\slopsimred$ steps to $\sbetared$ steps or empty steps on \lambdaterms\
with the property that \lo\ $\scontractred$ steps project to \lo\ $\sbetared$ steps.

\begin{proposition}[Projection of $\slopsimred$ steps via $\denlterwrt{\alTRS}{\cdot}$]
    \label{prop:projection}
  Let $\lopsimTRSwrt{\alTRS} = \pair{\asiglopsim}{\ruleslopsim}$ be the \lopsimTRS\ for 
  a \lTRS~$\alTRS = \pair{\asig}{\arules}$.
  Let $\ater\in\termsover{\asiglopsim}$
  be a term in $\lopsimTRSwrt{\alTRS}$ such that $\denlterwrt{\alTRS}{\ater} = \alter$ for a \lambdaterm~$\alter$.
  
  Then the following statements hold concerning the projection of $\slopsimred$ steps via $\denlterwrt{\alTRS}{\cdot}$ 
  to steps on \lambdaterms, for all $\ater,\ateri{1}\in\termsover{\asiglosim}$:
  \begin{enumerate}[(i)]\setlength{\itemsep}{0ex} 
    \item
      If $\ater \searchred \ateri{1}$, then $\denlterwrt{\alTRS}{\ater} = \denlterwrt{\alTRS}{\ateri{1}}$.
      That is, the projection of a $\ssearchred$ step via $\denlterwrt{\alTRS}{\cdot}$ is a trivial step.
    \item  
      If $\ater \contractred \ateri{1}$, then $\denlterwrt{\alTRS}{\ater} \betared \denlterwrt{\alTRS}{\ateri{1}}$.
      That is, the projection of a $\scontractred$ step via $\denlterwrt{\alTRS}{\cdot}$ is a $\sbetared$ step.
    \item  
      If $\ater \contractred \ateri{1}$ is a \lo\ step, then $\denlterwrt{\alTRS}{\ater} \lobetared \denlterwrt{\alTRS}{\ateri{1}}$ holds.
      That is, the projection of a \lo\ $\scontractred$ step via $\denlterwrt{\alTRS}{\cdot}$ is $\slobetared$ steps.
  \end{enumerate}
\end{proposition}

A \emph{proof} of this statement can be obtained by defining the projection via the expansion rewrite relation $\sexpred$,
and in particular, via the reduction $\sexprednf$ to expanded forms, which yields \lambdaterm\ representations.
Then it can be shown that $\ssearchred$ steps do not change the expanded form, and that $\scontractred$ steps
correspond to the contraction of \betaredex{es} on the represented \lambdaterms. 

\begin{example}
  We illustrate the projection of $\slopsimred$ rewrite sequences in a \lopsimTRS\ to $\slobetared$ sequences in the \lambdacalculus\
  at our standard example.
  For this, we consider the \lambdaterm~$\alter \defdby \labs{x}{\lapp{(\labs{y}{y})}{(\lapp{(\labs{z}{\labs{w}{\lapp{w}{z}}})}{x})}}$
  from Example~\ref{ex:lopsimred:ltermrep}, and the \lTRS~$\alTRS = \pair{\asig}{\arules}$
  with $\asigmin = \setexp{ \afoscopesym, \bfoscopesym, \cfoscopesym, \dfoscopesym }$
  as defined in Example~\ref{ex:lTRS}, for which $\denlter{\afoscope{\bfoscopesym,\cfoscopesym}} = \alter$ holds, 
  that is, the \lTRS\nb-term $\afoscope{\bfoscopesym,\cfoscopesym}$ represents the \lambdaterm~$\alter$.  
  
  Then the leftmost (and \lo) $\slopsimred$ rewrite sequence in $\lopsimTRSwrt{\alTRS}$ from Example~\ref{ex:lopsimred} 
  projects to the \lo\ $\slobetared$ rewrite sequence in the \lambdacalculus\ from Example~\ref{ex:lopsimred:ltermrep}
  as follows, where we indicate the projection by writing the denoted \lambdaterms\ beneath 
  the corresponding \lambdaterm\ representations:
  \begin{center}
  $
  \begin{array}{clclclc}
    \lopstart{\afoscope{\bfoscopesym,\cfoscopesym}}
      & \ssearchred &
    \lopni{0}{0}{\afoscope{\bfoscopesym,\cfoscopesym}}
      & \ssearchred &
    \folabs{\afovari{0}}{\lopni{0}{1}{\folapp{\bfoscopesym}{\folapp{\cfoscopesym}{\afovari{0}}}}} 
    \\[0.75ex]
    \alter 
      & \ssyntequal &
    \labs{x}{\lapp{(\labs{y}{y})}{(\lapp{(\labs{z}{\labs{w}{\lapp{w}{z}}})}{x})}}
      & \ssyntequal &
    \labs{x}{\lapp{(\labs{y}{y})}{(\lapp{(\labs{z}{\labs{w}{\lapp{w}{z}}})}{x})}}
    \\[2.5ex]
      & \ssearchred & 
    \folabs{\afovari{0}}{\lopni{1}{1}{\bfoscopesym,\folapp{\cfoscopesym}{\afovari{0}}}}  
      & \scontractred & 
    \folabs{\afovari{0}}{\lopni{0}{1}{\folapp{\cfoscopesym}{\afovari{0}}}}  
    \\[0.75ex]
      & \ssyntequal &
    \labs{x}{\underline{\lapp{(\labs{y}{y})}{(\lapp{(\labs{z}{\labs{w}{\lapp{w}{z}}})}{x})}}}
      & \slobetared &
    \labs{x}{\lapp{(\labs{z}{\labs{w}{\lapp{w}{z}}})}{x}}
    \\[2.5ex]
      & \searchred & 
    \folabs{\afovari{0}}{\lopni{1}{1}{\cfoscopesym,\afovari{0}}}
      & \scontractred & 
    \folabs{\afovari{0}}{\lopni{0}{1}{\dfoscope{\afovari{0}}}}
    \\[0.75ex]
      & \ssyntequal &
    \labs{x}{\lapp{(\labs{z}{\labs{w}{\lapp{w}{z}}})}{x}}
      & \slobetared &
    \labs{x}{\lapp{(\labs{z}{\labs{w}{\lapp{w}{z}}})}{x}}
    \\[2.5ex]   
      & \ssearchred & 
    \folabs{\afovari{0}}{\folabs{\afovari{1}}{\lopni{0}{1}{\folapp{\afovari{1}}{\afovari{0}}}}}
      & \ssearchred & 
    \folabs{\afovari{0}}{\folabs{\afovari{1}}{\lopni{1}{2}{\afovari{1},\afovari{0}}}}
    \\[0.75ex]
      & \ssyntequal &
    \labs{x}{\underline{\lapp{(\labs{z}{\labs{w}{\lapp{w}{z}}})}{x}}}
      & \ssyntequal &
     \labs{x}{\labs{w}{\lapp{w}{x}}}  
    \\[2.5ex]    
      & \ssearchred & 
    \folabs{\afovari{0}}{\folabs{\afovari{1}}{\folapp{\afovari{1}}{\lopni{0}{2}{\afovari{0}}}}}
      & \ssearchred &
    \folabs{\afovari{0}}{\folabs{\afovari{1}}{\folapp{\afovari{1}}{\afovari{0}}}}
    \\[0.75ex]
      & \ssyntequal &
    \labs{x}{\labs{w}{\lapp{w}{x}}}
      & \ssyntequal &
    \labs{x}{\labs{w}{\lapp{w}{x}}}
  \end{array}
  $
  \end{center}
  As in Example~\ref{ex:lopsimred:ltermrep} we have underlined redexes that are contracted in $\slobetared$ steps.
  This parallelization of steps can help to recognize, for the latter ones quite directly,
  that projection takes place by taking the expanded form of the \lopsimTRS\ term, and interpreting that as a \lambdaterm\
  (modulo \alphaconversion).%
  \label{ex:prop:projection}
\end{example}


The next lemma states that every \lo\ \betareduction\ step $\alter \lobetared \alteri{1}$
can be lifted to a sequence $\ater \mathrel{\ssearchmred \cdot \scontractred} \ateri{1}$ of leftmost steps in a \lopsimTRS,
provided that $\ater$ denotes $\alter$, and $\ater$ has been obtained by the simulation of a $\slobetared$ rewrite sequence.

\begin{lemma}[Lifting of $\slobetared$ steps to $\ssearchmred \cdot \scontractred$ steps w.r.t.\ $\denlterwrt{\alTRS}{\cdot}$]%
    \label{lem:lifting}
  Let $\alTRS = \pair{\asig}{\arules}$ be a \lTRS.
  Let $\ater\in\termsover{\asig}$ be a ground term 
  such that $\denlterwrt{\alTRS}{\ater} = \alteri{0}$ for a \lambdaterm~$\alteri{0}$.
  Furthermore let $\cter\in\termsover{\asiglopsim}$ 
  with $\denlterwrt{\alTRS}{\cter} = \alter$ for a \lambdaterm~$\alter$
  be the final term of a \lo\ rewrite sequence $\lopstart{\ater} \lopsimmred \cter$.
  
  Then for a $\slobetared$ step $\arewstep \funin \denlterwrt{\alTRS}{\cter} \,{=}\, \alter \lobetared \alteri{1}$ 
  with \lambdaterm~$\alteri{1}$ as target
  there are terms $\cteracc,\cteri{1}\in\termsover{\asiglopsim}$ 
  and a \lo\ $\slopsimred$ rewrite sequence 
  $\Hat{\rho} \funin \cter \searchmred \cteracc \contractred \cteri{1}$
  whose projection via $\denlterwrt{\alTRS}{\cdot}$ amounts to the step $\arewstep$,
  and hence, $\denlterwrt{\alTRS}{\cteracc} = \alter$, and $\denlterwrt{\alTRS}{\cteri{1}} = \alteri{1}$.
\end{lemma}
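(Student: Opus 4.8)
The plan is to prove the lemma by running the \lo\ search process of $\losimTRSwrt{\alTRS}$ from $\cter$ until it reaches a configuration in which a $\scontractred$\nb-step is enabled, and by certifying that this search tracks exactly the path to the \lo\ $\sbetared$\nb-redex that $\arewstep$ contracts. The central ingredient is a structural \emph{simulation-state invariant} for terms that are \lo-reachable from $\losimstart{\ater}$. I would establish that every such term $\cter$ decomposes as a context built from named-abstraction nodes $\folabs{\afovari{\cdot}}{[\,]}$ and from application nodes, whose holes are occupied by \emph{frontier subterms} each headed by an operation symbol $\sthunkni{n}{i}$ or $\sunthunkni{n}{i}$. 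The claim carried by the invariant is that this context, with the operation symbols marking its frontier holes, expands under $\sexpred$ to the stable head normal form prefix of $\alter$ that carries no \lo\ redex, while the frontier subterms expand to exactly the subterms of $\alter$ where reduction may still take place. In particular the \lo\ redex of $\alter$ lies in the expansion of the leftmost-outermost frontier subterm, so it is always `reachable' by the search and is never buried inside an already-normalised, operation-symbol-free region.

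Granting this invariant, I would locate in $\alter$ the redex contracted by $\arewstep$ and argue by induction on the depth at which it occurs within the expansion of the leftmost-outermost frontier subterm of $\cter$. The inductive descent mirrors the \lo\ strategy on the TRS side: an unapplied abstraction on the spine is turned into a named abstraction by $\bpap{(\scriptsearch)}{0,i}{\afoscopesym}$, an application is uncurried while its argument is collected by $\bpap{(\scriptsearch)}{n,i}{\sfolapp}$, and when the head of the spine is reached the search either fires a $\scontractred$\nb-step (when the head is a scope symbol $\afoscope{\avari{1},\ldots,\avari{k}}$ with an available argument, which is precisely the \lo\ redex) or, when the head is a variable, switches to an $\sunthunkni{n+1}{i}$ symbol by $\bpap{(\scriptsearch)}{n+1,i}{\text{\nf var}}$, distributes the collected arguments back as applications via $\subap{(\scriptsearch)}{1,i}$ and $\subap{(\scriptsearch)}{n+2,i}$ with each argument wrapped in a fresh $\sthunkni{0}{i}$, and recurses into the leftmost argument whose expansion still contains a redex. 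Since the path to the \lo\ redex has finite length, this pure search phase terminates; its result is $\cteracc$, which has $\thunkni{n+1}{i}{\afoscope{\avari{1},\ldots,\avari{k}},\bvari{1},\ldots}$ at the active position, so a $\scontractred$\nb-step is enabled and yields $\cteri{1}$.

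The correctness claims then follow from Lemma~\ref{lem:projection}. Every step of the search phase is a $\ssearchred$\nb-step, so by part~(i) the denotation is preserved along $\cter \searchmred \cteracc$, giving $\denlterwrt{\alTRS}{\cteracc} = \denlterwrt{\alTRS}{\cter} = \alter$. The final step $\cteracc \contractred \cteri{1}$ is a $\scontractred$\nb-step, hence by part~(ii) it projects to a $\sbetared$\nb-step; because the search respected the \lo\ discipline and the contracted redex sits at the \lo\ position, part~(iii) upgrades this to a \lo\ step, so the projection of $\Hat{\rho} \funin \cter \searchmred \cteracc \contractred \cteri{1}$ is exactly $\arewstep$ and $\denlterwrt{\alTRS}{\cteri{1}} = \alteri{1}$. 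That $\Hat{\rho}$ is itself a \lo\ rewrite sequence follows from the same book-keeping, since at every stage the active operation symbol marks the leftmost-outermost TRS redex.

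The main obstacle is establishing the simulation-state invariant with enough precision, and in particular verifying the two matching properties on which the induction rests: that \lo\ search in $\losimTRSwrt{\alTRS}$ corresponds under $\denlterwrt{\alTRS}{\cdot}$ to \lo\ reduction in the \lambdacalculus, and that the abstraction of the \lo\ redex is still represented by a scope symbol, rather than having already been expanded into a named abstraction, when the search reaches it. This is where the auxiliary index $i$ of the operation symbols, which keeps two nested named abstractions from sharing a variable name, and the finite-nestedness of $\alTRS$ through Proposition~\ref{prop:lTRS-term:fin:nested:defines:lterrep}, which guarantees that the relevant expansions are finite, are genuinely used. As noted for fully-lazy \lambdalifting\ in \cite{bala:2012}, matching the leftmost-outermost discipline across the representation map is the delicate part, and the present argument is expected to follow the same pattern.
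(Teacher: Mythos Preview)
The paper does not actually prove this lemma. Immediately before stating Lemma~\ref{lem:projection} and Lemma~\ref{lem:lifting}, the author writes: ``We do not prove these statements here, as they are closely analogous to the correctness statement for fully-lazy \lambdalifting\ldots\ The latter result was formulated and proved by Balabonski in \cite{bala:2012}. The statements below can be established in a very similar manner.'' So there is no proof in the paper against which to compare your proposal.

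Your sketch is therefore already more detailed than what the paper supplies, and it is in the right spirit: you correctly isolate a structural invariant on \lo-reachable terms of $\losimTRSwrt{\alTRS}$, argue termination of the search phase along the spine to the \lo\ redex, and then discharge the projection claims via Lemma~\ref{lem:projection}. You even point to \cite{bala:2012} as the delicate part, which is precisely the reference the paper defers to. One small remark: you invoke Proposition~\ref{prop:lTRS-term:fin:nested:defines:lterrep} and finite-nestedness of $\alTRS$, but the lemma as stated does not assume $\alTRS$ is finitely nested; the finiteness of the expansions you need is already built into the hypotheses that $\denlterwrt{\alTRS}{\ater}$ and $\denlterwrt{\alTRS}{\cter}$ are (finite) \lambdaterms.
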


We note that in the lemma `\lo' in `\lo\ rewrite sequence $\lopstart{\ater} \lopsimmred \cter$'
and `\lo\ $\slopsimred$ rewrite sequence $\Hat{\rho} \funin \cter \searchmred \cteracc \contractred \cteri{1}$'
could both be replaced by `leftmost'. The reason is as follows. 
In a \lopsimTRS~$\lopsimTRSwrt{\alTRS}$ for a \lTRS~$\alTRS$ 
it holds for all rewrite sequences $\lopstart{\bter} \lopsimmred \cter$ for a ground term $\bter$ over the signature of $\alTRS$
and of \lambdaterm\ representations that $\cter$ does not have occurrences of operation symbols $\slopstart,\slopni{n}{i}\in\asiglop$ in nested positions
(but only at identical or parallel positions). 
From this it follows that all redexes of the \lopsimTRS\ in $\cter$ are outermost,
and hence that all \lo\ steps from $\cter$ in $\lopsimTRSwrt{\alTRS}$ arise by contracting leftmost redexes. 

We expect that Lemma~\ref{lem:lifting} can be proved in close analogy to the correctness statement for fully-lazy \lambdalifting.
In particular, it is possible to use the correspondence between weak \betareduction\ steps on \lambdaterms\
and combinator reduction steps on supercombinator representations obtained by fully-lazy \lambdalifting. 
The latter result was formulated and proved by Balabonski in \cite{bala:2012}.

Now by using Lemma~\ref{lem:lifting} in a proof by induction on the length of a $\slobetared$ rewrite sequence 
the theorem below can be obtained. It justifies the use of \lopsimTRS{s} for the simulation of 
$\slobetared$ rewrite sequences.

\begin{proposition}[Lifting of $\slobetared$ to \lo\ $\slopsimred$ rewrite sequences]%
    \label{prop:lifting:lobeta:lo-losim:rewseqs}
  Let $\alTRS = \pair{\asig}{\arules}$ be a \lTRS.
  Let $\ater\in\termsover{\asig}$ be a ground term with $\denlterwrt{\alTRS}{\ater} = \alter$ for a \lambdaterm~$\alter$.
  Then every $\slobetared$ rewrite sequence:
  \begin{center}
    $
    \arewseq \funin \alter = \clteri{0} \lobetared \clteri{1} \lobetared \cdots \lobetared \clteri{k} \; (\,\lobetared \clteri{k+1} \lobetared \cdots\,)
    $ 
  \end{center}
  of finite or infinite length $l\in\nat\cup\setexp{\infty}$ lifts via $\denlterwrt{\alTRS}{\cdot}$ to a \lo\ $\slopsimred$ rewrite sequence: 
%
  \begin{align*}
    \Hat{\arewseq} \funin \lopstart{\ater} 
      = 
    \cteri{0} 
      \mathrel{\ssearchmred \cdot \scontractred} & \:
    \cteri{1} 
      \searchmred
    \;\;\cdots\;\;
    \\
    \;\;\cdots\;\;
      \contractred & \:
    \cteri{k}   
    \; (\,
      \mathrel{\ssearchmred \cdot \scontractred} 
    \cteri{k+1}
      \searchmred
    \cdots \,)  
  \end{align*}
  with precisely $l$ $\scontractred$ steps
  such that furthermore
  $\denlterwrt{\alTRS}{\cteri{i}} = \clteri{i}$ holds for all $i\in\{0,1,\ldots,l\}$.
\end{proposition}

For the same reason as argued above for Lemma~\ref{lem:lifting}, the formulation
`\lo\ $\slopsimred$ rewrite sequence' in this proposition could be replaced by `leftmost $\slopsimred$ rewrite sequence'.

Now by using the lifting of $\slobetared$ rewrite sequences to $\slopsimred$ rewrite sequences
(Proposition~\ref{prop:lifting:lobeta:lo-losim:rewseqs}),
that \lambdaterm\ and \lambdaterm\ representation depths coincide (Proposition~\ref{prop:expdepth:lterrep:lter}),
and that the depth of an \lTRS\ that can represent a \lambdaterm~$\alter$ is bounded by the depth of $\alter$ (Lemma~\ref{lem:depth:losimTRS}),
the theorem above entails our main theorem,
the linear-depth-increase result for \lo\ \betareduction\ rewrite sequences.


\begin{theorem}[Linear depth increase in $\lobetared$-rewrite sequences]\label{thm:main}
  Let $\alter$ be a \lambdaterm.
  Then for every finite or infinite \lo\ rewrite sequence 
  $\arewseq \funin \alter = \clteri{0} \lobetared \clteri{1} \lobetared \cdots \lobetared \clteri{k} \; (\,\lobetared \clteri{k+1} \lobetared \cdots\,)$ 
  from $\alter$  with length $l\in\nat\cup\setexp{\infty}$
  it holds:
  \begin{enumerate}[(i)]\setlength{\itemsep}{0ex}
    \item
      $\depth{\clteri{n+1}} \le \depth{\clteri{n}} + \depth{\alter}\,$
      for all $n\in\nat$ with $n+1\le l$,
      that is,
      the depth increase in each step of $\arewseq$ is uniformly bounded by $\depth{\alter}$. 
    \item  
      $\depth{\clteri{n}} \le \depth{\alter} + n\cdot \depth{\alter} = (n+1)\cdot \depth{\alter}$,
      and hence $\depth{\clteri{n}} - \depth{\alter} \in \bigOmicron{n}$,
      for all $n\in\nat$ with $n\le l$,
      that is,
      the depth increase along $\arewseq$ to the $n$\nb-th reduct is linear in $n$,
      with $\depth{\alter}$ as multiplicative constant.
  \end{enumerate}
\end{theorem}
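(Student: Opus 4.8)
The plan is to transport the depth-increase bound already established for $\slosimred$-rewrite sequences in Theorem~\ref{thm:main:lTRS} down to the $\slobetared$-rewrite sequence $\arewseq$ on \lambdaterms, using the lifting result (Theorem~\ref{prop:lifting:lobeta:lo-losim:rewseqs}) as the bridge and Propositions~\ref{prop:ldepth:lterrep:lter} and~\ref{prop:depth:losimTRS} to convert between the depth $\depth{\cdot}$ on \lambdaterms\ and the \lambdaterm\ depth $\ldepth{\cdot}$ on \losimTRS-terms.

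First I would fix a finite, finitely nested \lTRS~$\alTRS = \pair{\asig}{\arules}$ together with a closed term $\ater\in\termsover{\asig}$ satisfying $\denlterwrt{\alTRS}{\ater} = \alter$; such a representation is exactly what \lambdalifting\ of $\alter$ yields, since it produces finitely many supercombinator rules whose scope-nesting mirrors the finite abstraction-nesting of $\alter$. Setting $D \defdby \ldepth{\alTRS}$, Proposition~\ref{prop:depth:losimTRS} gives $D \le \depth{\alter}$, hence in particular $D - 2 \le \depth{\alter}$.

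Next I would apply Theorem~\ref{prop:lifting:lobeta:lo-losim:rewseqs} to $\arewseq$, obtaining a \lo\ $\slosimred$-rewrite sequence $\Hat{\arewseq}$ starting on $\losimstart{\ater}$, of exactly the shape $\losimstart{\ater} = \cteri{0} \searchmred \cteracci{0} \contractred \cteri{1} \searchmred \cdots$ required by Theorem~\ref{thm:main:lTRS}, with precisely $l$ $\scontractred$-steps and with $\denlterwrt{\alTRS}{\cteri{n}} = \clteri{n}$ for every $n \le l$. Because each $\cteri{n}$ denotes the \lambdaterm~$\clteri{n}$, Proposition~\ref{prop:ldepth:lterrep:lter} yields $\ldepth{\cteri{n}} = \depth{\clteri{n}}$; the same proposition, together with the transparency of $\slosimstart$ to the denotation (via the expansion rule $\expandi{i}{\losimstart{\avar}} \red \expandi{i}{\avar}$), gives $\ldepth{\losimstart{\ater}} = \depth{\alter}$. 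With these identifications in hand, statement~(i) follows by invoking Theorem~\ref{thm:main:lTRS}\,(i): combining $\ldepth{\cteri{n}} = \ldepth{\cteracci{n}}$ with $\ldepth{\cteri{n+1}} \le \ldepth{\cteracci{n}} + (D-2)$ gives $\depth{\clteri{n+1}} \le \depth{\clteri{n}} + (D-2) \le \depth{\clteri{n}} + \depth{\alter}$, and statement~(ii) follows analogously from Theorem~\ref{thm:main:lTRS}\,(ii) as $\depth{\clteri{n}} \le \depth{\alter} + (D-2)\cdot n \le (n+1)\cdot\depth{\alter}$.

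The one genuinely non-routine ingredient is the very first step: securing a finite, finitely nested representing \lTRS\ for an arbitrary \lambdaterm, which is precisely the content of the \lambdalifting\ transformation and the (only sketched) lifting and projection apparatus of Section~\ref{sec:lo-simulation}. Everything after that is a purely bookkeeping chain of the cited results, the two mildly delicate points being the index alignment --- that the number of $\scontractred$-steps of $\Hat{\arewseq}$ equals the \betareduction-step index $n$, so that $\cteri{n}$ really denotes $\clteri{n}$ --- and the uniform treatment of finite and infinite $l$, both of which are already guaranteed by the formulations of Theorem~\ref{prop:lifting:lobeta:lo-losim:rewseqs} and Theorem~\ref{thm:main:lTRS}.
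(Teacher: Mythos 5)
Your proposal is correct and follows essentially the same route as the paper, which derives Theorem~\ref{thm:main} from Theorem~\ref{thm:main:lTRS} via exactly the three ingredients you use: the lifting of $\slobetared$\nb-rewrite sequences to \lo\ $\slosimred$\nb-rewrite sequences (Theorem~\ref{prop:lifting:lobeta:lo-losim:rewseqs}), the coincidence of \lambdaterm\ depth and \lambdaterm\ representation depth (Proposition~\ref{prop:ldepth:lterrep:lter}), and the bound $\ldepth{\alTRS} \le \depth{\alter}$ (Proposition~\ref{prop:depth:losimTRS}). You also correctly flag that obtaining a finite, finitely nested representing \lTRS\ via \lambdalifting\ is the one ingredient the paper leaves implicit.
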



\section{Idea for a graph rewriting implementation}
  \label{sec:idea:graph:implementation}

The linear-depth-increase result suggests a directed-acyclic-graph implementation of \lo\ \betareduction\
that is based on the following idea.
It keeps subterms shared as much as possible, particularly in the search for the
representation of the next \lo\ redex.
Steps that are used in the search for the next \lo\ redex
do not perform any unsharing, but only use markers to organize the search, and to keep track of its progress.
All search steps together increase the size of the graph only by at most a constant multiple.
Then the number of search steps that are necessary for finding the next \lo\ redex is linear in the size of the current graph. 
Unsharing of the graph only takes place once the next (representation of the) \lo\ redex is found:
then the part of the graph between this redex and the root is unshared (copied),
and subsequently the (represented) redex is contracted. 

The idea is to develop a graph rewriting calculus $\graphcalwrt{\asig}$ 
such that its rewrite relation $\sgraphred$ implements
\lo\ $\scomprewrels{\ssearchmred}{\scontractred}$ rewrite sequences in the corresponding \lopsimTRS~$\lopsimTRSwrt{\asig}$.
We know from Proposition~\ref{prop:projection}
that those \lo\ $\scomprewrels{\ssearchmred}{\scontractred}$ rewrite sequences in turn implement
$\slobetared$ rewrite sequences in the \lambdacalculus.
Starting from a \lambdaterm~$\alteri{0}$,
a \lo\ \betareduction\ rewrite sequence from $\alteri{0}$
is thus first lifted to a $\scomprewrels{\ssearchmred}{\scontractred}$ rewrite sequence 
from a \lambdaterm\ representation $\bteri{0}$ of $\alter$
in an \lopsimTRS~$\lopsimTRSwrt{\asig}$ 
for a \lTRS~$\alTRS = \pair{\asig}{\arules}$ with $\denlter{\bteri{0}} = \alteri{0}$,
and then to a $\sgraphred$ rewrite sequence from a directed-acyclic graph $\agraphi{0}$ that represents~$\bteri{0}$:\vspace*{-1.5ex}
\begin{center}
  $
  \begin{array}{ccccccccccc@{\hspace*{2ex}}c}
    \alteri{0}
      & \slobetared &
    \alteri{1}
      & \slobetared &
    \alteri{2}  
      & \slobetared &
        \ldots
      & \slobetared &
    \alteri{n-1}
      & \slobetared &
    \alteri{n}  
        &  (\lambdacal)
    \\[0.75ex]
    \bteri{0}
      & \scomprewrels{\sseamred}{\sconred} &
    \bteri{1}  
      & \scomprewrels{\sseamred}{\sconred} &
    \bteri{2}
      & \scomprewrels{\sseamred}{\sconred} &
        \ldots
      & \scomprewrels{\sseamred}{\sconred} &
    \bteri{n-1}  
      & \scomprewrels{\sseamred}{\sconred} &
    \bteri{n}
        &  (\lopsimTRSwrt{\asig})
    \\[0.75ex]
    \agraphi{0}
      & \sgraphred &
    \agraphi{1}  
      & \sgraphred &
    \agraphi{2}
      & \sgraphred &
        \ldots
      & \sgraphred &
    \agraphi{n-1}  
      & \sgraphred &
    \agraphi{n}
        &  (\graphcalwrt{\asig})
  \end{array}
  $
\end{center}
(here we have shortened the subscripts in $\scomprewrels{\ssearchred}{\scontractred}$ steps)
where it holds for all $i\in\setexp{1,\ldots,n}\,$:
\begin{center}
  $
  \begin{aligned}
    \denlter{\bteri{i}}
      & =
    \alteri{i} \punc{,}
      & 
    \agraphi{i} \; & \text{represents} \; \bteri{i} \punc{,}
      & 
    \depth{\alteri{i}} & \le (i+1) \cdot \depth{\alteri{0}} \punc{,}
      &
    \depth{\bteri{i}} 
      \le
    \expdepth{\bteri{i}} & = \depth{\alteri{i}} \begin{aligned}[t]
                                                 &
                                                 \le (i+1) \cdot \depth{\bteri{0}}  
                                                 \\[-0.5ex]
                                                 &
                                                 =   (i+1) \cdot \depth{\alteri{0}} \punc{.} 
                                               \end{aligned}                   
  \end{aligned}
  $
\end{center}
Here we have used the linear-depth-increase results Theorem~\ref{thm:main} for \lambdaterms, 
and Theorem~\ref{thm:main:lTRS} for \lambdaterm\ respresentations.  
Now it seems feasible to develop the graph rewrite calculus $\graphcalwrt{\asig}$ in such a way
that the depth $\depth{\agraphi{i}}$ of the (acyclic) graph representations $\agraphi{i}$ of the \lopsimTRS\ terms $\bteri{i}$ 
are bounded by a constant $c$ multiplied with the depth of $\bteri{i}$,
and consequently also bounded by $c$ multiplied with the \lambdadepth\ of $\bteri{i}$, or the depth of $\alteri{i}$:
\begin{center}
  $
  \begin{aligned}
    \depth{\agraphi{i}} & \le c \cdot \depth{\bteri{i}} \le c \cdot \expdepth{\bteri{i}} = c \cdot \depth{\alteri{i}} \punc{,}
      & 
    \text{hence: }
    \depth{\agraphi{i}} & \le c \cdot (i+1) \cdot \depth{\alteri{0}} 
    & & \text{(for all $i\in\setexp{0,1,\ldots,n}$)}  \punc{.}    
  \end{aligned}
  $
\end{center}
The reason for the possible depth increase in the graph representations 
consists in the use of additional controle nodes for keeping track of the progress of \lo\ evaluation: 
links will be used in order to indicate positions to which the \lo\ evaluation needs to backtrack
after having reduced a subexpression to a normal form, or having detected that a subexpression is a normal form. 
The depth of the graphs $\agraphi{i}$ are well-defined because they are acylic. 

The idea for simulating a step $\bteri{i} \comprewrels{\ssearchmred}{\scontractred} \bteri{i+1}$
consists in unsharing the graph representation $\agraphi{i}$ of $\bteri{i}$ only 
between the graph's root and the representation of the \betaredex\ in the $\scontractred$ step,
and then carrying out the representation of the $\scontractred$ step that involves   
replacing the symbol $\afoscopesym$ by a graph version of its scope context $\afoscopecxt$,
together with adapting links accordingly. 
We can expect the size increase in the graph rewrite step $\agraphi{i} \graphred \agraphi{i+1}$ 
to be bounded linearly in the depth $\depth{\agraphi{i}}$ of $\agraphi{i}$, for the first part,
and to be bounded by linearly the size $\size{\afoscopecxt}$, and hence the size $\size{\alteri{0}}$ of $\alteri{0}$,
for the contraction part.   
That is, we want to guarantee that for all $i\in\setexp{0,1,\ldots,n-1}$ it holds:
\begin{center}
  $
\begin{aligned}
  \size{\agraphi{i+1}} & {} \le \size{\agraphi{i}} + d \cdot \depth{\agraphi{n}} + c \cdot \size{\alteri{0}} \punc{,}
  \\
    \text{hence: } 
  \size{\agraphi{i+1}} - \size{\agraphi{i}} & {} \le d \cdot \depth{\agraphi{i}}  + c \cdot \size{\alteri{0}}
                                            \\      
                                            & {} \le c \cdot d \cdot (i+1) \cdot \depth{\alteri{0}} 
                                                     + c \cdot \size{\alteri{0}} \punc{.}    
\end{aligned}
  $
\end{center}  
We may also assume that $d\in\nat$ is at the same time a multiplicative constant for bounding
the size of $\agraphi{0}$ by the sizes of $\bteri{0}$ and $\alteri{0}$:
\begin{center}
  $
  \size{\agraphi{0}} \le d \cdot \max \setexp{ \size{\bteri{0}}, \, \size{\alteri{0}} } \punc{.} 
  $
\end{center}  
On the basis of these assumptions a bound for the size of the $n$\nb-th graph $\agraphi{n}$
of the graph rewrite sequence can be calculated as follows:
\begin{center}
  $
  \begin{aligned}
    \size{\agraphi{n}}
      & \; = \;
    \Bigl(\, \sum_{i=0}^{n-1} (\size{\agraphi{i+1}} - \size{\agraphi{i}}) \,\Bigr)  
      +
    \size{\agraphi{0}}
    \\[-1.25ex]
      &
       \; = \;
    \size{\agraphi{0}}
      +
    \sum_{i=0}^{n-1} 
      \bigl( c \cdot d \cdot (i+1) \cdot \depth{\alteri{0}} 
             + c \cdot \size{\alteri{0}} \bigr)
    \\[-1.25ex]
      & \; = \;
    \size{\agraphi{0}}
      +
    c \cdot d \cdot \depth{\alteri{0}} \cdot  
      \bigl( \sum_{i=1}^{n} i \bigr)
     + c \cdot n \cdot \size{\alteri{0}} 
    \\[-0.75ex]
      &
       \; \le \;
    d \cdot \size{\alteri{0}}
      +
    \frac{1}{2} \cdot c d \cdot \depth{\alteri{0}} \cdot n (n + 1)
      + 
    c \cdot n \cdot \size{\alteri{0}}
    \\
      &
       \; \le \;
    d \cdot \size{\alteri{0}}
      +
    0.5 \cdot c d \cdot \size{\alteri{0}} \cdot n (n + 1) 
      + 
    c \cdot n \cdot \size{\alteri{0}}
    \;\; \in \;\; \bigOmicron{ \size{\alteri{0}} \cdot n^2 } \punc{.}
  \end{aligned}
  $
\end{center}
Now the time for computing the $i$\nb-th rewrite step $\agraphi{i} \graphred \agraphi{i+1}$
will consist of two parts: 
the time $\timei{\fap{\scriptsearch}{\agraphi{i}}}$ for searching the occurrence of the representation of the \lo\ redex in $\agraphi{i}$,
and the time $\timei{\fap{\scriptcontract}{\agraphi{i}}}$ for performing the graph representation of the $\scontractred$ step.
The search part $\timei{\fap{\scriptsearch}{\agraphi{i}}}$ can be organized as a graph traversal of $\agraphi{i}$,
and therefore can be expected to be performed in time that depends linearly on the size of $\agraphi{i}$.
The contraction part $\timei{\fap{\scriptcontract}{\agraphi{i}}}$ consists of the necessary unsharing of the $\agraphi{i}$ between its root
and the represented \lo\ redex, and by performing the $\scontractred$ step on the shared representation $\agraphi{i}$. 
The first subpart necessitates copying work of time that is linearly dependent on the depth $\depth{\agraphi{i}}$ of $\agraphi{i}$.
The second subpart involves the addition of a graph context from $\agraphi{0}$ that corresponds to the scope context $\afoscopecxt$
of the scope symbol $\afoscopesym$ that is part of the $\scontractred$ redex that is contracted; 
it therefore requires copying $\afoscopecxt$, and since $\afoscopecxt$ occurs already in $\agraphi{0}$,
this can be expected to be work that depends linearly on the size of $\agraphi{0}$.
Together we obtain that for some $e,f\in\nat$ it holds: 
\begin{center}
  $
  \begin{aligned}
    \timei{(\agraphi{i} \graphred \agraphi{i+1})}
      & \; = \;
    \timei{\fap{\scriptsearch}{\agraphi{i}}}
      +
    \timei{\fap{\scriptcontract}{\agraphi{i}}}
    \\
      & \; \le \;
    e \cdot \size{\agraphi{i}}
      +
    f \cdot ( \depth{\agraphi{i}} + \size{\agraphi{0}} )
    \\
      & \; \le \;
    (e + 2 f) \cdot \size{\agraphi{i}}
    \;\; \in \;\; \bigOmicron{ \size{\alteri{0}} \cdot i^2 } \punc{.}
  \end{aligned}
  $
\end{center}
From this we now obtain the following rough estimate of the time needed to implement 
the \lo\ rewrite sequence $\alteri{0} \lobetaredn{n} \alteri{n}$ 
by the graph rewrite sequence $\agraphi{0} \graphredn{n} \agraphi{n}$, for some $g\in\nat\,$:
\begin{center}
  $
  \begin{aligned}
    \timei{(\agraphi{0} \graphredrtc \agraphi{n})}
      \; & {} = \;
    \sum_{i=1}^{n-1}   
        \timei{(\agraphi{i} \graphredrtc \agraphi{i+1})}
    \\    
      \; & {} \le \;
    \sum_{i=0}^{n-1}
        g \cdot \size{\alteri{0}} \cdot i^2 
      \; = \;
    g \cdot \size{\alteri{0}} \cdot  \sum_{i=0}^{n-1} i^2
    \;\; \in \;\; \bigOmicron{ \size{\alteri{0}} \cdot n^3 } \punc{.}
  \end{aligned}
  $
\end{center}
This can yield a polynomial cost function for the work that is needed
to faithfully implement a \lo\ \betareduction\ rewrite sequence of length $n$
by `atomic' graph manipulation steps. 

An implementation of such graph rewriting representations of \lo\ \betareduction\ sequences,
broken down into the atomic steps of a port graph rewrite system \cite{stew:2002}, on a reasonable machine
could lead to an alternative proof of the invariance result of Accattoli and Dal Lago.

\enlargethispage{6ex}
\paragraph{Acknowledgment.}
  This article is an extension of my not reviewed contribution~\cite{grab:2016:lindepthincrease:liber:alberti}
  to the Liber Alberti Festschrift on the occasion of the retirement of Albert Visser from Utrecht University in 2016. 
  I want to thank: 
  Vincent van Oostrom, for familiarizing me with \TRSrepresentation{s} of \lambdaterms,
  and with the simulation of weak-$\beta$ reduction by orthogonal \TRSs;
  Dimitri Hendriks, for comments on my drafts of \cite{grab:2016:lindepthincrease:liber:alberti},
  and for his questions about it that helped me;
  J\"{o}rg Endrullis, for help with typsetting Figure~\ref{fig:depth:increase} with TikZ;
  and Luca Aceto for detailed comments about the present version.

\bibliographystyle{plainnat}
\bibliography{ldi-lobr}
\label{sec:biblio}

\end{document}